\newif\ifcomments\commentstrue
\newif\iflong\longtrue
\newif\iftwocolumns\twocolumnsfalse
\theoremstyle{definition}
\newtheorem{theorem}{Theorem}[section]
\newtheorem{corollary}{Corollary}[theorem]
\newtheorem{lemma}[theorem]{Lemma}
\newtheorem*{theorem*}{Theorem}
\newtheorem*{corollary*}{Corollary}
\newtheorem*{lemma*}{Lemma}
\newtheorem*{fact*}{Fact}
\newtheorem*{conjecture*}{Conjecture}
\theoremstyle{remark}
\newtheorem{remark}{Remark}[section]
\newtheorem*{remark*}{Remark}
\theoremstyle{definition}
\newtheorem{definition}{Definition}[section]
\newtheorem*{definition*}{Definition}
\theoremstyle{definition}
\newtheorem{example}{Example}[section]
\newtheorem*{example*}{Example}
\theoremstyle{definition}
\newtheorem{proposition}{Proposition}[section]
\newtheorem*{proposition*}{Proposition}
\title{Realisability and Complementability of Multiparty Session Types}
\author{Cinzia {Di Giusto} \and Etienne Lozes \and Pascal Urso}
\date{}
\begin{document}

\maketitle

\begin{abstract}



Multiparty session types (MPST) are a type-based approach for specifying message-passing distributed systems.
They rely on the notion of global type, specifying the global behaviour, and local types, which are the projections of the global behaviour onto each local participant. An essential property of global types is realisability, i.e., whether the composition of the local behaviours conforms to those specified by the global type.
We explore how realisability of MPST relates to their complementability, i.e., whether there exists a global type that describes the complementary behaviour of the original global type.
First, we show that if a global type is realisable with p2p communications, then it is realisable with synchronous communications. Second, we show that if a global type is realisable in the synchronous model, then it is complementable, in the sense that there exists a global type that describes the complementary behaviour of the original global type.
Third, we give an algorithm to decide whether a complementable global type, given with an explicit complement, is 
realisable in p2p. As a side contribution, we propose a complementation construction for global types with sender-driven choice, and more generally commutation-deterministic global types.
\end{abstract}

\section{Introduction}

The design of communication protocols for concurrent and distributed systems is a central concern in programming languages research. 
Modern systems -- from microservices to IoT networks -- rely on structured sequences of message exchanges between multiple parties. 
In the family of behavioural types~\cite{DBLP:conf/concur/VasconcelosH93}, multiparty session types (MPST) were introduced by Honda et al.~\cite{HondaYC08,10.1145/2873052,DBLP:conf/icdcit/YoshidaG20} and have emerged as a powerful methodology to specify and verify such structured interactions. 
An MPST defines a global type describing the communication protocol as a whole, which can be projected into local types for each participant.

This type-based approach enables static verification of communication safety (no unexpected messages) and liveness properties (no deadlocks) by ensuring each party's code conforms to its local type.  In essence, MPSTs allow protocol designers to treat communication patterns as first-class abstractions that yield a rigorous framework for building correct distributed applications.

In this work, we focus on the realisability problem, i.e.,
whether the local types projected from a global type can
be recomposed into a concrete system that conforms to the
intended global protocol. In classical MPST theory, protocols are often formulated under the sender-driven choice
assumption: only one designated participant (the sender)
decides between alternative message flow branches for all
parties. While this assumption simplifies theoretical analysis, it proves too restrictive for many real-world protocols. We lift this assumption and build a theory for unrestricted choice.
 
Local types are  often defined 
as systems of communicating finite state machines~\cite{BrandZafiropulo}, following the works of Villard~\cite{villard-phd} and Denielou and Yoshida~\cite{DBLP:conf/esop/DenielouY12}: to every global type $G$, one can associate a system of communicating finite state machines (CFSMs) $\projectionof{G}$ whose machines are the projections of $G$ onto each participant.
In some recent works~\cite{DBLP:conf/ecoop/Stutz23},  global types may also be seen as a special case of Higher-order Message Sequence Charts~\cite{DBLP:journals/tcs/AlurEY05} (HMSCs), an automaton model that accepts languages of message sequence charts (MSCs). Adopting such a point of view, every global type $G$ defines a language of MSCs $\existentialmsclanguageof{G}$ which provides
a semantics of $G$. In this work, we study two notions on MPSTs coming from these connections with HMSCs and automata theory: deadlock-free realisability, and complementability.

Deadlock-free realisability, introduced by Alur et al.~\cite{DBLP:journals/tcs/AlurEY05} for HMSCs,
is central to the goal of \emph{correct-by-construction} protocol design underpinning the MPST approach.
Deadlock-free realisability ensures that $\projectionof{G}$ does not suffer from communication mismatches, deadlocks, and behaviours that are not part of $\existentialmsclanguageof{\gt}$. In the literature on MPSTs, this property has appeared under various names, including \emph{implementability} and \emph{projectability}, with some slight variations in the formal definition of this notion. 

The second notion, complementability, asks whether a global type $G$ admits a complementary one $\comp{G}$ that gives  all those behaviours that are not described by $G$, i.e., for all synchronous-like MSCs $M$,
$$
M \in \existentialmsclanguageof{\comp{G}} \quad \Leftrightarrow \quad M \not \in \existentialmsclanguageof{G}.
$$
We also study whether some complementation procedure exists that
explicitly constructs $\comp{G}$ from $G$, and with which complexity.
These problems are  popular in automata theory, e.g., it is well know that pushdown automata are not complementable in general, but become complementable adding a visibility condition~\cite{visiblypushdown};
Büchi automata are complementable~\cite{Buechi62}, and there has been a long quest for efficient complementation procedures and optimisations for subclasses of Büchi automata.

\subsubsection*{Contributions}In this work, we approach the realisability question for two fundamental communication models for implementing MPST protocols: the  peer-to-peer ($\ppmodel$) model, and the synchronous one.
In the $\ppmodel$ model, processes communicate via point-to-point  message passing over FIFO channels.
In the synchronous model, communications are based on rendez-vous (handshakes where send and receive happen simultaneously), representing a globally synchronised message-passing semantics. 
These two models are widely considered as canonical extremes in analysing distributed protocols. 

In this work, we also investigate for the first time the question of the complementability of global types.
Our main contribution is to show that realisable global types form a subclass of the complementable ones.
More in detail, we establish the following results:

\begin{itemize}
\item Every global type realisable in the synchronous communication model is complementable (Theorem~\ref{thm:realisable-complementable}), and the complementation is
effective (based on projection and NFA complementation),  with a doubly exponential blowup in the size of the global type.
\item Realisability is decidable for complementable global types provided an explicit complement is also given as input,
both in the synchronous communication model (Theorem~\ref{thm:decidability-of-implementability-in-synch}) and in the peer-to-peer model 
(Theorem~\ref{thm:decidability-of-implementability-in-p2p}). The latter result is non-trivial and based on recent results on RSC systems~\cite{germerie-phd}.
\item We also show that if a global type is realisable in the peer-to-peer model, then it is also realisable in the synchronous one (Theorem~\ref{thm:pp-realizability-implies-synch-realizability}), which is not obvious, as 
deadlock-freedom is not a property that is monotonic in the communication model (a more permissive communication model may both hide some deadlocks and trigger new ones).
\item Finally, we show that not all global types are complementable (Theorem~\ref{thm:not-all-gt-are-complementable}) and we give procedures to find the complement of certain classes:
 global types with at most three participants are complementable (Corollary~\ref{coro:3-participants-complementable}), 
as well as global types with sender-driven choices (Theorem~\ref{thm:sender-driven-choice-complementation}). 
\end{itemize}

\subsubsection*{Outline.} The paper is organised as follows: Section~\ref{sec:preliminaries} introduces the necessary background on executions, Message Sequence Charts, communication models and communicating finite state machines. 
MPST and realisability are introduced in Section~\ref{sec:global_types}. Then Section \ref{sec:complementability} discusses the issue of complementability, discussing an example of a non-complementable global type as well as several complementation procedures. Finally, Section~\ref{sec:decidability} establishes the decidability of realisability for complementable global types.
Section \ref{sec:concl} concludes with some final remarks and discusses related works.
\iflong\else
Omitted proofs can be found in~\cite{DiGiustoLozesUrsoPPDPlong}.
\fi

	\section{Preliminaries}\label{sec:preliminaries}


We assume a finite set of \emph{processes} $\Procs=\{p,q,\ldots\}$ and a finite set of messages (labels) 
$\Msg=\{\msg,\ldots\}$.
We consider two kinds of actions:  \emph{send actions} that are  of the form 
$\send{p}{q}{\msg}$ and are executed by  process $p$  sending message  $\msg$ to  $q$;
 \emph{receive actions} that are of the form $\recv{p}{q}{\msg}$ and are executed by  $q$ receiving  $\msg$ from  $p$.
We write $\Act$ for the finite set $\Procs\times\Procs\times\{!,?\}\times\Msg$ of all actions, and $\Actp$ for the set of actions 
that can be executed by $p$ (i.e., $\send{p}{q}{\msg}$ or $\recv{q}{p}{\msg}$). 
We omit processes when they are clear from the context and simply write $!\msg$ or $?\msg$ for a send or receive action, respectively.

An \emph{event} $\event$ of a sequence of actions $w\in \Act^*$, is an index $i$ in $\{1,\ldots,\length{w}\}$;
$i$ is a send (resp. receive) event of $w$ if $w[i]$ is a send (resp. receive) action.
We write $\sendeventsof{w}$ (resp. $\receiveeventsof{w}$) for the set of send (resp. receive) events 
of $w$ and $\eventsof{w}=\sendeventsof{w}\cup\receiveeventsof{w}$ for the set of all events of $w$.
When all events are labeled with distinct actions, we identify an event with its action.

\subsubsection*{Executions.}
An execution is a well defined sequence of actions $e\in\Act^*$, where 
a receive action is always preceded by a unique corresponding send action.

\begin{definition}[Execution]
	An \emph{execution} over $\Procs$ and $\Msg$ is a sequence of actions $e\in\Act^*$  
	where an injective function from receive events to send events $\source_{e}:\receiveeventsof{w}\to\sendeventsof{w}$ 
	is such that for each receive event $i$ labeled with $\recv{p}{q}{\msg}$,
	 $\source_{e}(i)$ is labeled with $\send{p}{q}{\msg}$, and  $\source_{e}(i)<i$.
\end{definition}


For a set of executions $\mathcal{E}$, we write $\prefixclosureof{\mathcal{E}}$ 
for the set of all prefixes of the executions in $\mathcal{E}$. 
We say that an execution $e_2$ is a \emph{completion} of an execution $e_1$
if $e_1$ is a prefix of $e_2$. 
A \emph{concatenation} $e_1\cdot w$ of an execution
$e_1$ and a sequence of actions $w$ is the execution $e_2 = e_1\cdot w_2$ where 
$e_2$ is a completion of $e_1$ (note that $w$ is not an execution, since it may contain receive events
which sources are in $e_1$).  
The \emph{projection} $\projofon{e}{p}$ of an execution $e$ on a process $p$ 
is the subsequence of actions in $\Actp$.  
A send event $s$ is \emph{matched} if there is a receive event $r$ such that $s=\source(r)$.
An execution $e$ is \emph{orphan-free} if $\source$ is surjective over the send events of $e$, i.e.,
all send events are matched.

\subsubsection*{Communication Models.}
In this paper, we consider   two communication models: 
peer-to-peer ($\ppmodel$) and  synchronous ($\synchmodel$). 
However, as commented  in the conclusions,  this work is part of a bigger more general project and a large amount of the results of the paper can be extended to "any" communication model, or 
only require few assumptions. 
In this perspective, we introduce here a general definition of a communication model and discuss specific results regarding how it can be extended to the general case.
\begin{definition}[Communication model]
	\label{def:communication-model}
	A \emph{communication model} $\acommunicationmodel$ is a set 
	$\executionsofmodel{\acommunicationmodel}$ of executions.
\end{definition}

In the  synchronous communication model $\synchmodel$, message exchanges can be thought as rendezvous synchronisations.
In other words, an execution $e$ belongs to $\executionsofmodel{\synchmodel}$ if
all send events are immediately followed by their corresponding receive events.
\begin{definition}[$\synchmodel$]
	An execution $\execution=(w,\source) \in \executionsofmodel{\synchmodel}$ if
	for all send event $s\in\sendeventsof{e}$, $s+1$ is a receive event of $e$ and $\source(s+1)=s$.
\end{definition}

In the peer-to-peer communication model $\ppmodel$, messages sent by a process $p$ to $q$ transit over a FIFO channel
that is specific to the pair $(p,q)$: if $p$ sends first $m_1$ then $m_2$ to $q$, 
then $m_2$ cannot overtake $m_1$ in the FIFO channel. In particular:
\begin{itemize}
	\item if $m_1$ is not received, then $m_2$ is not received either;
	\item if both are received, then $m_1$ is received before $m_2$.
\end{itemize}
      
\begin{definition}[$\ppmodel$]\label{def:queue-based-communication-model}
	$\executionsofmodel{\ppmodel}$  is
	the set of executions $e$ such that for any two send events $s_1=\send{p}{q}{\msg_1}$ and 
	$s_2=\send{p}{q}{\msg_2}$ in $\sendeventsof{e}$, 
	with $s_1 < s_2$,
	one of the two holds:
	\begin{itemize}
		\item $s_2$ is unmatched, or
		\item there exist $r_1,r_2$ such that $r_1<r_2$, $\source(r_1)=s_1$, and $\source(r_2)=s_2$.
	\end{itemize}
\end{definition}

Note that, $\executionsofmodel{\synchmodel}\subset\executionsofmodel{\ppmodel}$.
Moreover, if $e$ is an execution in $\ppmodel$, then  $\source_e$ is defined as follows:
the source of the $i$-th receive event of $q$ from $p$ is the $i$-th send event of $p$ to $q$.
If $e$ is an execution in $\synchmodel$, then $\source_e$ is defined as follows: 
for all receive event $i$, $\source_{e}(i)=i-1$.

\subsubsection*{Message Sequence Charts.}
While executions correspond to a total order view of the events occurring in a 
system, message sequence charts (MSC) adopt a distributed and partial order view on the events.
For a tuple $\msc=(w_p)_{p\in\Procs}$, each $w_p\in\Actsp$ 
is a sequence of actions representing  the ones executed by process $p$ according to some total, locally observable, order.
We write $\eventsof{\msc}$  for the set $\{(p,i) \mid p \in \Procs \text{ and } 0 \leq i < \length{w_p}\}$.
The label $\actionof{\event}$ of the event $\event=(p,i)$ is the action $w_p[i]$. The event $\event$
is a send (resp. receive) event if it is labeled with a send (resp. receive) action.
We write $\sendeventsof{\msc}$ (resp. $\receiveeventsof{\msc}$) for the set of send (resp. receive) events of $\msc$; we also write $\messageof{\event}$ for the message sent or received at event $\event$, and 
$\processof{\event}$ for the process executing $\event$. Finally, we write 
$\verticalorderstrict{\event_1}{\event_2}$ if there is a process $p$ and $i<j$ such that
$\event_1=(p,i)$ and $\event_2=(p,j)$.

\begin{definition}[MSC]\label{def:msc}
	An {MSC} over $\Procs$ and $\Msg$ is a tuple $\msc = \big((w_p)_{p\in\Procs},\source\big)$
    where
    \begin{enumerate}
        \item for each process $p$, $w_p\in\Actsp$ is a finite sequence of actions;
        \item $\source: \receiveeventsof{\msc} \to \sendeventsof{\msc}$ is 
            an injective function from receive events to send events such that
			for all receive event $\event$ labeled with $\recv{p}{q}{\msg}$,
            $\source(\event)$ is labeled with $\send{p}{q}{\msg}$.
    \end{enumerate}
\end{definition}

\begin{figure}[t]
		\captionsetup[subfigure]{justification=centering}
	\begin{subfigure}[t]{0.22\textwidth}\centering

		\begin{tikzpicture}[scale=0.7, every node/.style={transform shape}]
			\newproc{0}{p}{-2.2};
			\newproc{2}{q}{-2.2};

			\newmsgm{0}{2}{-1.7}{-0.5}{1}{0.25}{black};
			\newmsgm{2}{0}{-1.7}{-0.5}{2}{0.25}{black};

			\end{tikzpicture}
		\caption{non-linearisable MSC}	\label{fig:raw_msc_ex}

		\end{subfigure}
%
%
%
%
	\begin{subfigure}[t]{0.22\textwidth}\centering
		\begin{tikzpicture}[scale=0.7, every node/.style={transform shape}]
			\newproc{0}{p}{-2.2};
			\newproc{1}{q}{-2.2};
			\newproc{2}{r}{-2.2};

			\newmsgm{0}{1}{-0.3}{-2}{1}{0.15}{black};
			\newmsgm{0}{2}{-0.9}{-0.9}{2}{0.75}{black};
			\newmsgm{2}{1}{-1.5}{-1.5}{3}{0.5}{black};

		\end{tikzpicture}
		\caption{$\ppmodel$} \label{fig:pp_ex}
	\end{subfigure}
%
%
%
\begin{subfigure}[t]{0.22\textwidth}\centering
	\begin{center}
		\begin{tikzpicture}[scale=0.7, every node/.style={transform shape}]
			\newproc{0}{p}{-2.2};
			\newproc{1}{q}{-2.2};
			\newproc{2}{r}{-2.2};
			\newproc{3}{s}{-2.2};

			\newmsgm{0}{1}{-0.5}{-0.5}{1}{0.5}{black};
			\newmsgm{2}{3}{-0.5}{-0.5}{2}{0.5}{black};
			\newmsgm{1}{2}{-1}{-1}{3}{0.5}{black};

		\end{tikzpicture}
		\caption{$\synchmodel$}
		\label{fig:rsc_ex}
	\end{center}
\end{subfigure}
\begin{subfigure}[t]{0.24\textwidth}\centering
	\begin{tikzpicture}[scale=0.7, every node/.style={transform shape}]
		\newproc{0}{p}{-2.2};
		\newproc{2}{q}{-2.2};

		\newmsgum{0}{2}{-0.8}{1}{0.2}{black};
		\newmsgm{0}{2}{-1.6}{-1.6}{2}{0.2}{black};

		\end{tikzpicture}
	\caption{MSC with an orphan message}	\label{fig:orphan_ex}
\end{subfigure}
		\caption{Examples of MSCs. The sender of a message is at the origin of the arrow and the receiver at the destination. Unmatched send events are depicted with dashed arrows.}\label{fig:exmscs}
\end{figure}

For an execution $\execution$,  $\mscof{\execution}$ is the MSC 
$\big((w_p)_{p\in\Procs},\source\big)$ where $w_p$ is the subsequence of $\execution$ restricted to the actions of $p$,
and $\source$ is the lifting of $\source_{\execution}$ to the events of $(w_p)_{p\in\Procs}$.

\begin{example}
    \label{ex:msc}
     MSC $\msc$ in Fig.~\ref{fig:raw_msc_ex} is an MSC over $\Procs=\{p,q\}$ and $\Msg=\{\msg_1,\msg_2\}$
    with $\msc=\large((w_p,w_q),\source\large)$, $w_p = {}?\msg_2!\msg_1$, $w_q = {}?\msg_1!\msg_2$, $\source((p,0)) = (q,1)$,
    and $\source((q,0)) = (p,1)$. Note that there is no execution $\execution$ such that 
	$\msc=\mscof{\execution}$ as all receptions precede the corresponding sends. On the other hand, the MSC
	in Fig.~\ref{fig:pp_ex} is $\mscof{\execution}$ for the execution $\execution$
	$=\large(!\msg_1!\msg_2?\msg_2!\msg_3?\msg_3?\msg_1,\source\large)$ where 
	$\source(3) = 2$, $\source(5) = 2$ and $\source(6) = 4$. This is the only execution that induces this MSC, but 
	in general there might exist several executions inducing the same MSC.
\end{example}

For a set of processes $\Procs$, 
an MSC $M=\big((w_p)_{p\in\Procs},\source\big)$ is the \emph{prefix} 
of another MSC $M'=\big((w'_p)_{p\in\Procs},\source'\big)$, in short $M \prefixorder M'$,
if for all $p\in\Procs$, $w_p$ is a prefix of $w'_p$ and $\source'(e)=\source(e)$ for all receive events $e$ of $M$.
The \emph{concatenation} of MSCs $M_1$ and $M_2$ is the MSC  $M_1\cdot M_2$ 
obtained by gluing "vertically"  $M_1$ before $M_2$:
formally, if $M_1=((w_p^1)_{p\in\Procs},\source_1)$ and $M_2=((w_p^2)_{p\in\Procs},\source_2)$, 
then $M_1\cdot M_2=((w_p)_{p\in\Procs},\source)$ where
    for all $p$, $w_p=w_p^1\cdot w_p^2$, and
    $\source$ is defined by $\source(e) = \source_i(e)$ for all receive events $e$ of $M_i$ ($i=1,2$).

\subsubsection*{Happens-before relation and linearisations}
In a given MSC $M$,
an event $\event$ happens before  
 $\event'$, if 
(i) $\event$ and $\event'$ are events
of a same process $p$ and happen in that order on 
the time line of $p$,
or (ii) $\event$ is send event matched by $\event'$,
or (iii) a sequence of such situations defines
a path from $\event$ to $\event'$.

\begin{definition}[Happens-before relation]
Let $M$ be an MSC. The happens-before relation over $M$
is the binary relation $\happensbeforestrict$ defined as 
the least transitive relation over $\eventsof{\msc}$ such that:
\begin{itemize}
   \item 
   for all 
   $p,i,j$, if $i<j$, then $(p,i)\happensbeforestrict (p,j)$, and
   \item 
   for all receive events $\event$, $\source(\event) \happensbeforestrict \event$.
\end{itemize}
\end{definition}

\begin{definition}[Linearisation]
	\label{def:linearisation}
	A \emph{linearisation} of an MSC $\msc$ is a
total order $\alinearisation$ on $\eventsof{\msc}$
	that refines $\happensbeforestrict$:  for all events $\event,\event'$, 
	if $\event\happensbeforestrict \event'$, then $\event\alinearisation \event'$. 
\end{definition}
We write $\linearisationsof{\msc}{}$ for the set of all linearisations of $\msc$.
We often identify a linearisation
with the execution it induces. 

\begin{example}
	\label{ex:linearisation}
	Let $\msc$ be the MSC in Fig.~\ref{fig:rsc_ex}. 
	Then $!m_1$ happens before $?m_1$, which
	happens before $!m_3$, and both $!m_3$ and $!m_2$
	happen before $?m_3$.
	Moreover, $\happensbeforestrict$ is a partial order, 
	and 
	$!m_1!m_2?m_1!m_3?m_3?m_2 \in \linearisationsof{\msc_c}{}$.
	On the other hand, consider the MSC $M$ in 
	in Fig.~\ref{fig:raw_msc_ex}; then
	$\happensbeforestrictof{\msc'}$ is not a partial order, because
	$?m_2\happensbeforestrictof{\msc'}!m_1\happensbeforestrictof{\msc'}?m_1\happensbeforestrictof{\msc'}!m_2\happensbeforestrictof{\msc'}?m_2$,
	therefore $\linearisationsof{\msc'}{}=\emptyset$.
\end{example}

Given an MSC $\msc$, we write 
$\linearisationsof{\msc}{\acommunicationmodel}$
to denote 
$\linearisationsof{\msc}{}\cap\executionsofmodel{\acommunicationmodel}$; 
the executions of $\linearisationsof{\msc}{\acommunicationmodel}$
are called the linearisations of $\msc$ 
in the communication model $\acommunicationmodel$. 	

\begin{definition}[$\acommunicationmodel$-linearisable MSC]
	\label{def:linearisable-msc}
	An MSC $\msc$ is \emph{linearisable} in a communication model $\acommunicationmodel$ if 
	$\linearisationsof{\msc}{\acommunicationmodel}\neq\emptyset$.
	We write $\mscsetofmodel{\acommunicationmodel}$ for the set of all MSCs linearisable in $\acommunicationmodel$.
\end{definition}

\begin{example}
	\label{ex:msc-linearisable}
	The MSC $M_b$ in Fig.~\ref{fig:pp_ex} is linearisable in $\ppmodel$,
	with $\linearisationsof{M_b}{\ppmodel}=
		\{!m_1!m_2?m_2!m_3?m_3?m_1\}
		$.
	However, $M_b$ is not linearisable in $\synchmodel$.
	Finally the MSC $M_c$ in Fig.~\ref{fig:rsc_ex} is linearisable in $\synchmodel$ with
	$\linearisationsof{M_c}{\synchmodel}=$
	$$
	\{~!m_1?m_1!m_2?m_2!m_3?m_3~,~
	   !m_2?m_2!m_1?m_1!m_3?m_3~\}
	$$
\end{example}

Finally, we introduce a property that will be helpful in the next paragraph 
for
giving an alternative characterisation of deadlock-freedom of a system of communicating finite state machines.

\begin{definition}[Causally-closed communication model]\label{def:causally-closed-communication-model}
	A communication model $\acommunicationmodel$ is \emph{causally-closed} if for all MSCs $M$,
	$\linearisationsof{\msc}{\acommunicationmodel}\neq\emptyset$ implies that
	$\linearisationsof{\msc}{\acommunicationmodel}=\linearisationsof{\msc}{}$.
\end{definition}

Observe that not all communication models are causally closed. It is the case for $\ppmodel$,  but it is immediate  to see that the property is not valid  for $\synchmodel$. Take for instance  MSC $M$ in Fig.~\ref{fig:rsc_ex},  its linearisation 
$!m_1!m_2?m_1?m_2!m_3?m_3$ 
  does not belong to $\linearisationsof{M_c}{\synchmodel}$.  To show that $\ppmodel$ is causally closed, 
notice that $<$ can be replaced with $\porderof{M}$ in  Definition~\ref{def:queue-based-communication-model}.
\begin{lemma}\label{lem:reformulation-of-ppmodel-def}
    Let $e$ be an execution and  $M=\mscof{e}$.
    Then $e$ is an execution of $\ppmodel$ if and only if
    for any two send events $s_1=\send{p}{q}{\msg_1}$ and $s_2=\send{p}{q}{\msg_2}$ in $\sendeventsof{e}$, with $s_1 \porderof{M} s_2$,
        one of the two holds:
        \begin{itemize}
            \item $s_2$ is unmatched, or
            \item there exist $r_1,r_2$ such that $r_1 \porderof{M} r_2$, $\source(r_1)=s_1$, and $\source(r_2)=s_2$.
        \end{itemize}
\end{lemma}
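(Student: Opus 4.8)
The plan is to prove the two formulations equivalent by observing that the only places where the execution order $<$ and the happens-before order $\porderof{M}$ are compared involve pairs of events lying on a single process, and that on such pairs the two orders coincide. Concretely, in both clauses the two send events $s_1,s_2$ are sends of the common process $p$, hence both lie on $p$'s timeline, and the two receive events $r_1,r_2$ are the matching receptions by the common process $q$ (they are labelled $\recv{p}{q}{\msg_1}$ and $\recv{p}{q}{\msg_2}$), hence both lie on $q$'s timeline.

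First I would record the auxiliary fact that $e$ is a linearisation of $M=\mscof{e}$. This holds because the projection $\projofon{e}{p}$ lists the events of $p$ as a subsequence of $e$, so process order is respected, and because the definition of execution forces every send to precede its matching receive; hence $<$ refines $\happensbeforestrict$, i.e.\ $\porderof{M}\,\subseteq\,{<}$ by Definition~\ref{def:linearisation}. From this I would derive the core observation: for any two events $a,b$ executed by the \emph{same} process, $a<b$ if and only if $a\porderof{M}b$. Indeed, if $a\porderof{M}b$ then $a<b$ since $<$ refines $\porderof{M}$; and if $a<b$ with $a,b$ both on process $p$, then writing $a=(p,i)$ and $b=(p,j)$ in $M$ we get $i<j$ because the projection preserves order, whence $a\porderof{M}b$ by the first clause of the happens-before relation.

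Given this observation the two directions are symmetric and amount to substituting one order for the other inside Definition~\ref{def:queue-based-communication-model}. For the forward direction, assume $e\in\executionsofmodel{\ppmodel}$ and take $s_1,s_2$ with $s_1\porderof{M}s_2$; the observation (applied to the same-process sends $s_1,s_2$) gives $s_1<s_2$, so Definition~\ref{def:queue-based-communication-model} applies and yields either that $s_2$ is unmatched, or that there are $r_1<r_2$ with $\source(r_1)=s_1$ and $\source(r_2)=s_2$; applying the observation again to the same-process receptions $r_1,r_2$ turns $r_1<r_2$ into $r_1\porderof{M}r_2$, which is the desired conclusion. The backward direction is identical, reading the observation in the other direction: from $s_1<s_2$ one obtains $s_1\porderof{M}s_2$, invokes the reformulated condition, and converts $r_1\porderof{M}r_2$ back into $r_1<r_2$.

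The argument carries no combinatorial content beyond this substitution. The only step requiring care — and the one I would treat as the main obstacle — is the clean identification of the indices $s_1,s_2$ (resp.\ $r_1,r_2$) as events of a single process via $\mscof{e}$, together with the resulting equivalence of $<$ and $\porderof{M}$ on them; once that equivalence is in place, the two clauses of Definition~\ref{def:queue-based-communication-model} transfer verbatim.
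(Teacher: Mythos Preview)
Your proposal is correct and follows essentially the same approach as the paper: both proofs hinge on the fact that $<$ is a linearisation of $\porderof{M}$ and that, for two events on the same process, $<$ and $\porderof{M}$ agree, then transfer the clauses of Definition~\ref{def:queue-based-communication-model} verbatim in each direction. Your write-up is slightly more explicit about establishing these auxiliary facts, but the argument is the same.
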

\begin{proof}
    Assume that $e$ is a $\ppmodel$ execution.
    Let $s_1=\send{p}{q}{\msg_1}$ and $s_2=\send{p}{q}{\msg_2}$ be two send events in $\sendeventsof{e}$ such that $s_1 \porderof{M} s_2$.
    Then $s_1<s_2$ in $e$, because $<$ is a linearisation of $\porderof{M}$.
    By  definition of $\ppmodel$, $s_2$ is unmatched or there exists $r_1,r_2$ such that $r_1 < r_2$, $\source(r_1)=s_1$, and $\source(r_2)=s_2$.
    In the second case, $r_1 < r_2$ implies that
    $r_1 \porderof{M} r_2$, because both $r_1$ and $r_2$ occur on the same process $q$.
    Conversely, assume that $e$ and $M$ satisfy the above reformulation of the definition of $\ppmodel$. 
    Let $s_1=\send{p}{q}{\msg_1}$ and $s_2=\send{p}{q}{\msg_2}$ be two send events in $\sendeventsof{e}$ such that $s_1 < s_2$.
    Then $s_1 \porderof{M} s_2$
    because $s_1$ and $s_2$ occur on the same process $p$.
    By the reformulation of the definition of $\ppmodel$, $s_2$ is unmatched or there exists $r_1,r_2$ such that $r_1 \porderof{M} r_2$, $\source(r_1)=s_1$, and $\source(r_2)=s_2$.
    In the second case, $r_1 \porderof{M} r_2$
    implies that $r_1 < r_2$ because $<$ is a  linearisation of $\porderof{M}$.
\end{proof}

\begin{restatable}[]{lemma}{lemppiscasuallyclosed}\label{lem:pp-is-causally-closed}
	$\executionsofmodel{\ppmodel}$ is causally-closed.
\end{restatable}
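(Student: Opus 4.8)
The plan is to reduce everything to Lemma~\ref{lem:reformulation-of-ppmodel-def}, whose whole point is that membership in $\executionsofmodel{\ppmodel}$ can be tested using only data attached to the MSC $M=\mscof{e}$ — the send and receive events, the matching $\source$, and the happens-before order $\porderof{M}$ — and never refers to the particular total order of the chosen linearisation. Once this is observed, causal closure is almost immediate: being a $\ppmodel$ execution becomes a property of $M$ alone, and is therefore shared by every linearisation of $M$.

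Concretely, I would fix an MSC $M$ with $\linearisationsof{M}{\ppmodel}\neq\emptyset$ and pick a witness $e_0\in\linearisationsof{M}{\ppmodel}$. Since $e_0$ is a $\ppmodel$ execution with $\mscof{e_0}=M$, the forward direction of Lemma~\ref{lem:reformulation-of-ppmodel-def} gives that the reformulated condition holds for $M$: for every pair of send events $s_1=\send{p}{q}{\msg_1}$, $s_2=\send{p}{q}{\msg_2}$ with $s_1\porderof{M}s_2$, either $s_2$ is unmatched, or there are receive events $r_1\porderof{M}r_2$ with $\source(r_1)=s_1$ and $\source(r_2)=s_2$. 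Crucially, this statement mentions only $M$, so it transfers unchanged to any other linearisation.

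Next I would take an arbitrary $e\in\linearisationsof{M}{}$ and argue that $\mscof{e}=M$: a linearisation is by definition a total order refining $\porderof{M}$, so the induced sequence is a genuine execution (each receive $\event$ follows its source because $\source(\event)\porderof{M}\event$), and it carries exactly the events, labels, and matching of $M$. Hence the reformulated condition established above \emph{is} the reformulated condition for $\mscof{e}=M$, and the converse direction of Lemma~\ref{lem:reformulation-of-ppmodel-def} yields that $e$ is a $\ppmodel$ execution, i.e.\ $e\in\linearisationsof{M}{\ppmodel}$. This proves $\linearisationsof{M}{}\subseteq\linearisationsof{M}{\ppmodel}$; the reverse inclusion holds by definition of $\linearisationsof{M}{\ppmodel}$, so the two sets coincide and $\executionsofmodel{\ppmodel}$ is causally closed.

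The only real content — and the step I would be most careful to spell out — is the claim that Lemma~\ref{lem:reformulation-of-ppmodel-def} has genuinely made $\ppmodel$-membership order-independent. The original Definition~\ref{def:queue-based-communication-model} is phrased with the linearisation order $<$, which differs across linearisations; the substance of the reformulation lemma is precisely that replacing $<$ by $\porderof{M}$ loses nothing, and it is exactly this that makes the condition invariant under the choice of linearisation. Everything else — each linearisation of $M$ re-inducing the same $M$, and the trivial reverse inclusion — is routine bookkeeping.
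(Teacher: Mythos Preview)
Your proposal is correct and follows essentially the same route as the paper: both arguments hinge on Lemma~\ref{lem:reformulation-of-ppmodel-def} to observe that $\ppmodel$-membership depends only on the MSC $M=\mscof{e}$, and then conclude that every linearisation of $M$ is a $\ppmodel$ execution. The paper's version is simply more terse, omitting the bookkeeping you spell out (that $\mscof{e}=M$ and that the reverse inclusion is trivial).
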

\begin{proof}
    Let $M\in\mscsetofmodel{\ppmodel}$ 
    and  $e$ be a linearisation of $M$.
    We want to show that $e\in\executionsofmodel{\ppmodel}$.
    By definition of $\mscsetofmodel{\ppmodel}$, 
    there is an execution 
    $e'\in\executionsofmodel{\ppmodel}$
    such that $\mscof{e'}=M$.
    By Lemma~\ref{lem:reformulation-of-ppmodel-def},
    $e$ is also a $\ppmodel$ execution.
\end{proof}

\subsubsection*{Communicating finite state machines.}
We assume standard notations for automata, words and languages. As usual, a non-deterministic finite state automaton (NFA) is a tuple
$\mathcal A = (Q,\Sigma, \delta, l_0, F)$ where $Q$ is a set of control states, 
$\Sigma$ is an alphabet, $\delta:Q\times\Sigma\to 2^Q$ 
is the transition relation,
$l_0$ is the initial control state, and $F\subseteq Q$ is the set of accepting states.
The language $\languageofnfa{\mathcal A}$ of an NFA $\mathcal A$ and the notion of deterministic
finite state automaton (DFA) or $\varepsilon$ transitions are defined as usual.
We write $\acceptcompletion{\mathcal A}{}$ for the automaton obtained from $\mathcal A$ by setting
$F=Q$. We recall the definition of communicating finite state machines~\cite{BrandZafiropulo}.

\begin{definition}[CFSM] 
    A communicating finite state machine (CFSM) is an NFA with $\varepsilon$-transitions $\acfsm$ over the alphabet $\Act$.
    A system of CFSMs is a tuple $\cfsms = (\acfsm_p)_{p\in\Procs}$.
\end{definition}

 Given a system of CFSMs $\cfsms=(\acfsm_p)_{p\in\Procs}$,
we write $\acceptcompletion{\cfsms}{}$ for the system of CFSMs $\acceptcompletion{\cfsms}=(\acceptcompletion{\acfsm_p})_{p\in\Procs}$
where all states are accepting, i.e., $F_p = Q_p$.

\begin{definition}[Executions of a CFSMs in $\acommunicationmodel$]\label{def:executions-of-cfsms}
Given a system 
$\cfsms = (\acfsm_p)_{p\in\Procs}$ of CFSMs, and a communication model $\acommunicationmodel$,
 $\executionsofcfsms{\cfsms}{\acommunicationmodel}$ is the set  of \emph{executions} $e\in\executionsofmodel{\acommunicationmodel}$
such that for all $p$, $\projofon{e}{p}$ is in $\languageofnfa{\acfsm_p}{}$.
\end{definition}

\begin{remark}
	Let $\acommunicationmodel$ be a communication model, 
	$\cfsms$  a system of CFSMs, and $e,e'\in\executionsofmodel{\acommunicationmodel}$ such that $\mscof{e}=\mscof{e'}$, 
	then $e\in\executionsof{\cfsms}{\acommunicationmodel}$ 
	if and only if $e'\in\executionsof{\cfsms}{\acommunicationmodel}$. This follows from the fact that $\projofon{e}{p}= \projofon{e'}{p}$ for all $p$.
\end{remark}

We write $\mscsofcfsms{\cfsms}{\acommunicationmodel}$ for the set $\{\mscof{e} \mid e\in\executionsofcfsms{\cfsms}{\acommunicationmodel}\}$.

A system is orphan-free if, whenever all machines have reached an accepting state, no message
remains in transit, i.e., no message is sent but not received.

\begin{definition}[Orphan-free]\label{def:orphan-free}
	A system of CFSMs $\cfsms$ is \emph{orphan-free} for a communication model 
	$\acommunicationmodel$ if
	for all $e\in\executionsofcfsms{\cfsms}{\acommunicationmodel}$,
	$e$ is orphan-free.
\end{definition}

All  synchronous executions are orphan-free by definition.

A system is deadlock-free if, 
any \emph{partial} execution can be extended/completed to an accepting execution.

\begin{definition}[Deadlock-free]\label{def:deadlock-free}
	A system of CFSMs $\cfsms$ is \emph{deadlock-free} in a communication model
	$\acommunicationmodel$ if for all 
	$e\in\executionsofcfsms{\acceptcompletion{\cfsms}}{\acommunicationmodel}$,
	there is an execution $e'\in\executionsofcfsms{\cfsms}{\acommunicationmodel}$ 
	such that $e\prefixorder e'$.
\end{definition}

\begin{remark}\label{rem:equivalent-formulation-of-deadlock-free-cfsms}
	A system of CFSMs $\cfsms$ is \emph{deadlock-free} for a communication model
	$\acommunicationmodel$ if and only if 
	$$
	\executionsofcfsms{\acceptcompletion{\cfsms}}{\acommunicationmodel}\subseteq
	\prefixclosureof{\executionsofcfsms{\cfsms}{\acommunicationmodel}}
	$$
\end{remark}

The following result shows that, for either $\ppmodel$ or $\synchmodel$ communication
models, the deadlock-freedom property of a system of CFSMs can be expressed as a property on the MSCs
of the system. 

\begin{restatable}[Deadlock-freedom as an MSC property]{proposition}{propdeadlockfreeasapropertyonmscsforppandsynch}
	\label{prop:deadlock-free-as-a-property-on-mscs-for-p2p-and-synch}
	Assume $\acommunicationmodel$ is $\ppmodel$ (respectively, $\acommunicationmodel$ is $\synchmodel$).
	Then a system of CFSMs $\cfsms$ is deadlock-free for $\acommunicationmodel$ if and only if
	$$
	\mscsofcfsms{\acceptcompletion\cfsms}{\acommunicationmodel}\subseteq\prefixclosureof{\mscsofcfsms{\cfsms}{\acommunicationmodel}}.
	$$
\end{restatable}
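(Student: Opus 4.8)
The plan is to reduce everything to the execution-level characterisation of deadlock-freedom given in Remark~\ref{rem:equivalent-formulation-of-deadlock-free-cfsms}, namely $\executionsofcfsms{\acceptcompletion{\cfsms}}{\acommunicationmodel}\subseteq\prefixclosureof{\executionsofcfsms{\cfsms}{\acommunicationmodel}}$, and then to bridge between executions and MSCs. Two elementary facts do most of the work: first, if $e$ is a prefix of $e'$ as executions then $\mscof{e}\prefixorder\mscof{e'}$ (restricting to each process preserves the prefix order and the source functions agree); second, by the remark following Definition~\ref{def:executions-of-cfsms}, membership in $\executionsofcfsms{\cfsms}{\acommunicationmodel}$ depends only on the induced MSC, so whenever $\mscof{e'}\in\mscsofcfsms{\cfsms}{\acommunicationmodel}$ we may conclude $e'\in\executionsofcfsms{\cfsms}{\acommunicationmodel}$.

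For the forward implication I would take $M\in\mscsofcfsms{\acceptcompletion{\cfsms}}{\acommunicationmodel}$, write $M=\mscof{e}$ for some $e\in\executionsofcfsms{\acceptcompletion{\cfsms}}{\acommunicationmodel}$, use deadlock-freedom to obtain $e'\in\executionsofcfsms{\cfsms}{\acommunicationmodel}$ with $e\prefixorder e'$, and then read off $M=\mscof{e}\prefixorder\mscof{e'}\in\mscsofcfsms{\cfsms}{\acommunicationmodel}$. This direction is identical for both communication models and is essentially bookkeeping.

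The substance is in the converse. Assume the MSC inclusion and take $e\in\executionsofcfsms{\acceptcompletion{\cfsms}}{\acommunicationmodel}$; then $M=\mscof{e}$ lies in $\mscsofcfsms{\acceptcompletion{\cfsms}}{\acommunicationmodel}$, so the hypothesis yields $M'\in\mscsofcfsms{\cfsms}{\acommunicationmodel}$ with $M\prefixorder M'$. The crux is the following extension claim: there is a linearisation $e'\in\linearisationsof{M'}{\acommunicationmodel}$ with $e\prefixorder e'$. Granting it, $\mscof{e'}=M'\in\mscsofcfsms{\cfsms}{\acommunicationmodel}$ gives $e'\in\executionsofcfsms{\cfsms}{\acommunicationmodel}$ by the remark above, which is exactly what deadlock-freedom requires. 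The common structural observation is that, because $M\prefixorder M'$, every event of $M'$ outside $M$ lies on its process strictly after all events of $M$, while every receive of $M$ already has its source in $M$; hence no new event happens-before an old one, and $M'$ decomposes as a concatenation $M\cdot M''$ where $M''$ gathers the events of $M'$ outside $M$ (all of whose receives are matched inside $M''$).

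For $\ppmodel$ the claim is immediate from causal-closedness (Lemma~\ref{lem:pp-is-causally-closed}): order the events of $M'$ by following $e$ on the old events and any topological order of $\happensbeforestrictof{M'}$ on the new ones; since no new event precedes an old one in happens-before, this is a linearisation of $M'$ extending $e$, and causal-closedness turns every linearisation of the $\ppmodel$-linearisable MSC $M'$ into a $\ppmodel$-execution. For $\synchmodel$, where causal-closedness fails, I would instead exploit that a synchronous linearisation keeps each matched send-receive pair adjacent: picking any $f\in\linearisationsof{M'}{\synchmodel}$ and restricting it to the events of $M''$ preserves adjacency of matched pairs (both endpoints are new), so the restriction is a synchronous linearisation $g$ of $M''$; since $e$ is synchronous its last event is never a dangling send, and $e'\mathrel{:=}e\cdot g$ is then a synchronous linearisation of $M\cdot M''=M'$ with $e\prefixorder e'$. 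I expect this synchronous extension step to be the main obstacle: one must verify that the residual $M''$ is itself synchronous-linearisable and that concatenating with $e$ neither breaks a rendezvous at the junction nor violates $\happensbeforestrictof{M'}$. The orphan-freeness of the prefix $M$ (every send of a synchronous $e$ is matched) together with the adjacency property of synchronous linearisations are precisely what make this go through.
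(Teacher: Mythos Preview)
Your proposal is correct and follows essentially the same route as the paper: the forward direction is the obvious lift to MSCs, and for the converse you build a linearisation of $M'$ that places the old events in the order of $e$ and the new events after, then invoke causal-closedness for $\ppmodel$ and the both-old/both-new property of matched pairs for $\synchmodel$; this is exactly the paper's construction $<\,=\,<_e\cup<_1\cup<_2$ phrased via the concatenation $M\cdot M''$. One small presentational slip: the parenthetical ``all of whose receives are matched inside $M''$'' is asserted as a \emph{common} structural observation, but it relies on $M$ being orphan-free, which holds for $\synchmodel$ (as you later note) but not for $\ppmodel$ in general; since your $\ppmodel$ argument only uses ``no new event happens-before an old one'' and not the standalone-MSC status of $M''$, this does not affect correctness, but you should restrict that clause to the synchronous case.
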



	\section{Global Types}\label{sec:global_types}

In this section, we define global types. We deviate from the standard definition to allow for more liberal behaviours. In our setting, global types are automata that describe a language of MSCs. 
An \emph{arrow} is a triple $(p,q,m)\in\Procs\times\Procs\times\Messages$ with $p\neq q$;
we often write $\marrow{p}{q}{m}$ instead of $(p,q,m)$, and write $\labelalphabet$
to denote the finite set of arrows.
Two arrows $\marrow{p_1}{q_1}{m_1}$ and $\marrow{p_2}{q_2}{m_2}$ 
\emph{commute} if $\{p_1,q_1\}\cap\{p_2,q_2\}=\emptyset$.

\begin{definition}[Global Type]
    A global type $\gt$ is a non-deterministic finite state automaton over the alphabet $\labelalphabet$.
\end{definition}

\begin{remark}
    Following a more standard approach,
    we could equally define global types with the following grammar
    $$
    \gt ::= \mathsf{end}~|~\marrow{p}{q}{m};\gt~|~\gt+\gt~|~\mathsf{rec}~X.\gt~|~X
    $$
    and associate a non-deterministic finite state automaton to each global type.
\end{remark}

We also consider \emph{deterministic} global types that are defined as deterministic finite state automata over the alphabet $\labelalphabet$;
we always explicitly stress determinism and otherwise implicitly assume non-determinism.



The projection of a global type $\gt$ on a process $p$ is the CFSM $\gt_p$ obtained by 
replacing each arrow $\marrow{q}{r}{\msg}$ of a transitions
of $\gt$ by the corresponding action of $p$ (either $\send{p}{r}{\msg}$ if $p=q$, or $\recv{q}{p}{\msg}$ if $p=r$, 
or $\varepsilon$ otherwise).

\begin{definition}[Projected system of CFSMs]\label{def:projected-system}
    The projected system of CFSMs $\projectionof{\gt}$ 
    associated to a global type $\gt$ is the tuple $(\gt_p)_{p\in\Procs}$.
\end{definition}

Conversely, for every system, we can associate a global type computing its synchronous product. 
Let $\cfsms=(\mathcal A_p)_{p\in\Procs}$ be a system of CFSMs, where 
$\mathcal A_p=(L_p,\Actp,\delta_p,l_{0,p},F_p)$ is the CFSM associated to process $p$.
The \emph{synchronous product} of $\cfsms$ is the global type $\productof{\cfsms}=(L,\labelalphabet,\delta,l_{0},F)$, 
where $L=\Pi_{p\in\Procs}L_p$, 
$l_0=(l_{0,p})_{p\in\Procs}$, $F=\Pi_{p\in\Procs}F_p$, and
$\delta$ is the transition relation defined by
$(\vec l,\marrow{p}{q}{m},\vec l')\in\delta$ if
$(l_p,\send{p}{q}{m},l'_p)\in\delta_p$,
$(l_q,\recv{p}{q}{m},l'_q)\in\delta_q$,
and $l'_r=l_r$ for all $r\not\in\{p,q\}$,
and $(\vec{l},\varepsilon,\vec l')\in\delta$ if
$(l_p,\varepsilon,l'_p)\in\delta_p$ for some $p\in\Procs$
and $l'_r=l_r$ for all $r\neq p$.

Projection and synchronous product form a Galois connection.
For two NFAs $\A,\B$ over a same alphabet,
let $\A\nfaincl\B$ denote that 
$\languageofnfa{\A}\subseteq\languageofnfa{\B}$;
similarly, for two tuples of NFAs 
$\A=(\A_p)_{p\in\Procs}$ and $\B=(\B_p)_{p\in\Procs}$,
let $\A\nfaincl\B$ denote that
$\languageofnfa{\A_p}\subseteq\languageofnfa{\B_p}$ for all $p\in\Procs$.

\begin{proposition}[Galois connection]\label{prop:galois-connection}
    Let $\cfsms$ be a system of CFSMs and $\gt$ a global type.
    Then
    $$
    \projectionof{\gt}\nfaincl\cfsms 
    \qquad\mbox{if and only if}\qquad
    \gt\nfaincl\productof{\cfsms}.
    $$
\end{proposition}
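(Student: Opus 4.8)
The plan is to reduce both inclusions to statements about a single relabelling map. For each process $r$, let $\pi_r \colon \labelalphabet^* \to \Act^*$ be the homomorphism that sends an arrow $\marrow{p}{q}{m}$ to $\send{p}{q}{m}$ when $r = p$, to $\recv{p}{q}{m}$ when $r = q$, and to $\varepsilon$ otherwise. This is exactly the relabelling used in Definition~\ref{def:projected-system} to build $\gt_r$ from $\gt$: the automaton $\gt_r$ has the same states and transition skeleton as $\gt$, with every transition label $a$ replaced by $\pi_r(a)$. Consequently every accepting path of $\gt$ spelling $w$ is an accepting path of $\gt_r$ spelling $\pi_r(w)$ and vice versa, so that $\languageofnfa{\gt_r} = \pi_r(\languageofnfa{\gt})$. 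I would record this as a first, routine observation.

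The crux is a matching characterisation of the product language, namely $\languageofnfa{\productof{\cfsms}} = \{\, w \in \labelalphabet^* \mid \pi_r(w) \in \languageofnfa{\mathcal A_r} \text{ for all } r \in \Procs \,\}$. For the forward inclusion, I would take an accepting run of $\productof{\cfsms}$ on $w$ and restrict it to each component $r$: by the shape of $\delta$, a transition on $\marrow{p}{q}{m}$ advances only the $p$- and $q$-components (by $\send{p}{q}{m}$ and $\recv{p}{q}{m}$) while an $\varepsilon$-transition advances a single component, so the restriction is an accepting run of $\mathcal A_r$ on $\pi_r(w)$. The converse inclusion is the main obstacle and the only place where real work is needed: given accepting runs of every $\mathcal A_r$ on $\pi_r(w)$, I must interleave them into one run of $\productof{\cfsms}$ on $w$. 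I would process the arrows $a_1 \cdots a_n = w$ left to right; before firing $a_i = \marrow{p_i}{q_i}{m_i}$ I let the components $p_i$ and $q_i$ perform, via the product's single-component $\varepsilon$-transitions, exactly the $\varepsilon$-moves their individual runs prescribe before their $i$-th relevant action, then fire the synchronised transition on $a_i$ (which is consistent with both individual runs since the $i$-th letter seen by $\mathcal A_{p_i}$ is $\send{p_i}{q_i}{m_i}$ and by $\mathcal A_{q_i}$ is $\recv{p_i}{q_i}{m_i}$), and finally drain the remaining $\varepsilon$-moves of each component to reach its accepting state. The key facts making this sound are that $\varepsilon$-transitions of distinct components are independent in the product and that each arrow synchronises precisely the two components it names.

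With these two facts in hand the statement follows by a chain of equivalences: $\gt \nfaincl \productof{\cfsms}$ means $\languageofnfa{\gt} \subseteq \languageofnfa{\productof{\cfsms}}$, which by the product characterisation is equivalent to $\pi_r(w) \in \languageofnfa{\mathcal A_r}$ for every $w \in \languageofnfa{\gt}$ and every $r$; swapping the quantifiers this reads $\pi_r(\languageofnfa{\gt}) \subseteq \languageofnfa{\mathcal A_r}$ for all $r$, i.e.\ $\languageofnfa{\gt_r} \subseteq \languageofnfa{\mathcal A_r}$ for all $r$ by the first observation, which is precisely $\projectionof{\gt} \nfaincl \cfsms$. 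I expect the entire difficulty to be concentrated in the backward direction of the product characterisation, the rest being bookkeeping.
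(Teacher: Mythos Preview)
Your proposal is correct and follows essentially the same approach as the paper: both identify the run-interleaving construction (synchronising the $\varepsilon$-moves of each component before firing the joint transition on $a_i$) as the only non-trivial step, and the paper carries it out in exactly the way you sketch. The main difference is organisational: you factor out the product characterisation $\languageofnfa{\productof{\cfsms}} = \{\,w \mid \pi_r(w)\in\languageofnfa{\mathcal A_r}\text{ for all }r\,\}$ as a standalone lemma and then derive the Galois connection by a chain of equivalences, whereas the paper proves the two implications directly and embeds the run construction inside the $\projectionof{\gt}\nfaincl\cfsms \Rightarrow \gt\nfaincl\productof{\cfsms}$ direction.
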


\iflong
\begin{proof}
    We fix $\cfsms=(\A_p)_{p\in\Procs}$.
    We show the two implications separately, starting from the right to left one.
    \begin{itemize}
        \item Assume that $\gt\nfaincl\productof{\cfsms}$.
            Let $p\in\Procs$, and a sequence of actions $w_p\in\languageofnfa{\gt_p}$;
            by definition of $\gt_p$, there is a sequence of arrows $w\in\languageofnfa{\gt}$ such that 
            $w_p=\projofon{w}{p}$. By hypothesis, $w\in\languageofnfa{\productof{\cfsms}}$. By definition of $\productof{\cfsms}$,
            $w_p\in\languageofnfa{\A_p}$, which shows that $\gt_p\nfaincl\A_p$, and more generally $\projectionof{\gt}\nfaincl{\cfsms}$.
        \item Assume that $\projectionof{\gt}\nfaincl{\cfsms}$.
            Let $w\in\languageofnfa{\gt}$. By definition of $\gt_p$, $w_p\eqdef\projofon{w}{p}\in\languageofnfa{\gt_p}$.
            From the hypothesis, $w_p\in\languageofnfa{\A_p}$ for all $p$.
            Assume $w=a_1\ldots a_n$ and let $a_{i,p}=\projofon{a_i}{p}\in\Act\cup\{\varepsilon\}$.
            Since $w_p$ is accepted by $\A_p$, there is a "run"
            $$
            \rho_p \quad \eqdef \quad q_{0,p}~a_{1,p}~q_{1,p}~a_{2,p}~\ldots~q_{n-1,p}~a_{n,p}~q_{n,p}
            $$
            such that 
            \begin{itemize}
                \item $q_{0,p}$ is the initial state of $\A_p$, $q_{n,p}$ is an accepting state of $\A_p$,
                \item if $a_{i}=\varepsilon$, then $q_{i,p}=q_{i-1,p}$, 
                \item if $a_{i}\in\Act$, there is a path from $q_{i-1}$ to $q_{i}$ labeled with $\varepsilon^*a_{i,p}\varepsilon^*$ that may include some $\varepsilon$ steps of $p$
            \end{itemize}
            There is therefore a run of $w$ on $\productof{\cfsms}$
            $$
            (q_{0,p})_{p\in\Procs}\quad a_{1}\quad(q_{1,p})_{p\in\Procs}\quad a_{2}\quad \ldots\quad (q_{n-1,p})_{p\in\Procs}\quad a_{n}\quad (q_{n,p})_{p\in\Procs}
            $$            
            defined as a synchronous product of all $\rho_p$ (while reading $a_i$, we schedule first the silent steps, then the synchronised send and receive actions of the two processes involved in $a_i$, then the remaining silent steps).
            It is accepting by construction, since each $\rho_p$ is accepting, so $w\in \languageofnfa{\productof{\cfsms}}$, which shows that $\gt\nfaincl\productof{\cfsms}$.
    \end{itemize}
\end{proof}
\fi


We write $\labeltoexec{w}$ to denote 
the execution obtained by replacing each arrow $\marrow{p}{q}{m}$ of $w$ with 
 $\send{p}{q}{m}\cdot\recv{p}{q}{m}$.
We write $\labeltomsc{w}$ to denote the MSC $\mscof{\labeltoexec{w}}$.

A global type defines a language of MSCs in two different ways, one existential and one universal.
Let $\labellanguageof{\gt}$ be the set of 
sequences of arrows $w$ accepted by $\gt$.
Note that for 
$w\in\Arrows^*$, the function $w \mapsto\labeltomsc{w}$ with $ \labeltomsc{w}\in\mscsetofmodel{\synchmodel}$ is not injective, as two arrows with disjoint pairs of processes
commute. We write $w_1\sim w_2$ if 
$\labeltomsc{w_1}=\labeltomsc{w_2}$, and
$[w]$ for the equivalence class of $w$ with respect to $\sim$.
The existential MSC language $\existentialmsclanguageof{\gt}$ of a global type $\gt$ is the set 
of MSCs that admit at least one representation as a sequence of arrows in $\labellanguageof{\gt}$,
and the universal MSC language $\universalmsclanguageof{\gt}$ of a global type $\gt$ is the set
of MSCs whose representations as a sequence of arrows are 
all in $\labellanguageof{\gt}$:
$$
\begin{array}{rl}
\existentialmsclanguageof{\gt}\eqdef & \{\labeltomsc{w}\mid w\in\labellanguageof{\gt}\}\\
\universalmsclanguageof{\gt}\eqdef & \{\labeltomsc{w}\mid [w]\subseteq\labellanguageof{\gt}\}.
\end{array}
$$

\begin{remark}\label{rem:existential-universal}
    Note that the two languages may differ even for deterministic global types,
    and do not confuse the existential and universal acceptance condition
    with the angelic and demoniac \emph{branching} non-determinism of
    NFAs. Here $\exists$ and $\forall$ refer
    to angelic and demoniac \emph{commutative} non-determinism,
    but the branching non-determinism is angelic in both cases.
    In particular, we have
    $$
    \existentialmsclanguageof{\detof{\gt}} 
     = \existentialmsclanguageof{\gt}
    \quad\text{and}\quad
    \universalmsclanguageof{\detof{\gt}}
    = \universalmsclanguageof{\gt}\iflong.\fi
    $$
    \iflong\else
    where $\detof{\gt}$ is the deterministic global type obtained
    \fi
\end{remark}

\begin{definition}[Commutation-closed]
    A global type $\gt$ is \emph{commutation-closed} if 
    $
    \existentialmsclanguageof{\gt}=\universalmsclanguageof{\gt}.
    $
\end{definition} 
In that case, we write $\msclanguageofcc{\gt}{}$ for the common language. 
As an example, any global type with $\cardinalof{\Procs}\leq 3$ is commutation-closed, as 
any two arrows share at least one process and therefore do not commute.




\begin{proposition}\label{prop:universal-existential-synch-inclusion}
    For all global type $\gt$, 
    $$
    \universalmsclanguageof{\gt} \subseteq \existentialmsclanguageof{\gt} \subseteq \msclanguageofsystem{\projectionof{\gt}}{\synchmodel}.
    $$
\end{proposition}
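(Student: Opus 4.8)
The plan is to establish the two inclusions independently. The left one, $\universalmsclanguageof{\gt} \subseteq \existentialmsclanguageof{\gt}$, should fall out directly from the definitions, while the right one, $\existentialmsclanguageof{\gt} \subseteq \msclanguageofsystem{\projectionof{\gt}}{\synchmodel}$, requires me to produce, for each MSC in the existential language, an explicit synchronous execution of the projected system that realises it.

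For the first inclusion I would argue as follows. Let $M \in \universalmsclanguageof{\gt}$, so $M = \labeltomsc{w}$ for some $w$ with $[w] \subseteq \labellanguageof{\gt}$. Since $\sim$ is reflexive we always have $w \in [w]$, so in particular the class is nonempty and $w \in \labellanguageof{\gt}$; thus $M = \labeltomsc{w}$ is witnessed as a member of $\existentialmsclanguageof{\gt}$. The only thing to make explicit is precisely this nonemptiness of $[w]$, which turns the universal quantification into membership of $w$ itself.

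For the second inclusion, let $M \in \existentialmsclanguageof{\gt}$, so $M = \labeltomsc{w}$ with $w = a_1 \cdots a_n \in \labellanguageof{\gt}$ and $a_i = \marrow{p_i}{q_i}{m_i}$. I would take as witness the canonical execution $e = \labeltoexec{w}$, i.e.\ the sequence of blocks $\send{p_i}{q_i}{m_i}\recv{p_i}{q_i}{m_i}$. Each send is immediately followed by its matching receive, so $e \in \executionsofmodel{\synchmodel}$, and $\mscof{e} = \labeltomsc{w} = M$ by definition. It then remains to verify the acceptance condition of Definition~\ref{def:executions-of-cfsms}, namely $\projofon{e}{p} \in \languageofnfa{\gt_p}$ for every $p \in \Procs$. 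This is the step to carry out with care, and it amounts to a by-construction correspondence between runs of $\gt$ and runs of $\gt_p$: taking an accepting run $l_0\,a_1\,l_1 \cdots a_n\,l_n$ of $\gt$ on $w$, the same state sequence is an accepting run of $\gt_p$ because $\gt_p$ shares the states, initial state and accepting states of $\gt$ and merely relabels each transition $\marrow{q}{r}{m}$ by the corresponding $p$-action ($\send{p}{r}{m}$, $\recv{q}{p}{m}$, or $\varepsilon$). The word this projected run reads is exactly $\projofon{e}{p}$, since from each block $\send{p_i}{q_i}{m_i}\recv{p_i}{q_i}{m_i}$ process $p$ observes $\send{p_i}{q_i}{m_i}$ when $p = p_i$, $\recv{p_i}{q_i}{m_i}$ when $p = q_i$, and nothing otherwise --- matching the relabelling. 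Hence $e \in \executionsofcfsms{\projectionof{\gt}}{\synchmodel}$ and $M = \mscof{e} \in \msclanguageofsystem{\projectionof{\gt}}{\synchmodel}$. I do not expect a genuine obstacle here: the subtlety is purely bookkeeping, namely ensuring the $\varepsilon$-transitions of $\gt_p$ let the projected run consume those arrows of $w$ that do not involve $p$.
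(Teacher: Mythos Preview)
Your proposal is correct and follows essentially the same approach as the paper: the first inclusion is immediate from $w\in[w]$, and the second is obtained by taking an accepting run of $\gt$ on $w$ and projecting it to an accepting run of each $\gt_p$, using the $\varepsilon$-transitions to absorb the arrows not involving $p$. Your write-up is in fact slightly more detailed than the paper's sketch, making explicit the synchronous execution $e=\labeltoexec{w}$ and the matching between $\projofon{e}{p}$ and the relabelled run.
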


\begin{proof}(Sketch)
    The first inclusion is immediate from the definitions.
    For the second inclusion, let $M\in \existentialmsclanguageof{\gt}$, we show that $M\in\msclanguageofsystem{\projectionof{\gt}}{\synchmodel}$.
    By definition of $\existentialmsclanguageof{\gt}$, there is a word $w\in\languageofnfa{\gt}$ such that
    $\mscof{w}=M$. Let $\rho$ be an accepting run of $\gt$ for $w$. For every $p\in\Procs$, let $\rho_p$ denote the run of $\gt_p$ (the machine of $p$ in 
    $\projectionof{\gt}$) obtained by projecting $\rho$: note that we kept $\varepsilon$-transitions in $\gt_p$, see Definition~\ref{def:projected-system},
    so $\rho_p$ is obviously defined. Then $\rho_p$ is an accepting run of $\gt_p$, therefore 
    $M\in\msclanguageof{\gt}{\synchmodel}$ (by Definition~\ref{def:executions-of-cfsms}).
\end{proof}

\begin{restatable}{lemma}{lemproductofgt}\label{lem:product-of-gt}
    Let $\gt_1$ and $\gt_2$ be two global types, with $\gt_2$ commutation-closed. Then
    $$
    \existentialmsclanguageof{\gt_1\otimes\gt_2}=\existentialmsclanguageof{\gt_1}\cap\msclanguageofcc{\gt_2}.
    $$
\end{restatable}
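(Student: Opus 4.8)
The plan is to prove the two inclusions separately, unfolding the definitions of $\existentialmsclanguageof{\cdot}$, $\universalmsclanguageof{\cdot}$ and of the product $\gt_1\otimes\gt_2$. Throughout I will use that $\gt_1\otimes\gt_2$ is the product automaton of the two NFAs over the common arrow alphabet $\labelalphabet$, so that $\labellanguageof{\gt_1\otimes\gt_2}=\labellanguageof{\gt_1}\cap\labellanguageof{\gt_2}$, together with the fact that commutation-closedness of $\gt_2$ means precisely $\msclanguageofcc{\gt_2}=\existentialmsclanguageof{\gt_2}=\universalmsclanguageof{\gt_2}$.

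For the inclusion $\existentialmsclanguageof{\gt_1\otimes\gt_2}\subseteq\existentialmsclanguageof{\gt_1}\cap\msclanguageofcc{\gt_2}$, I take $M\in\existentialmsclanguageof{\gt_1\otimes\gt_2}$ and pick a witness $w\in\labellanguageof{\gt_1\otimes\gt_2}$ with $\labeltomsc{w}=M$. Since $w\in\labellanguageof{\gt_1}$ and $w\in\labellanguageof{\gt_2}$, this single sequence of arrows simultaneously witnesses $M\in\existentialmsclanguageof{\gt_1}$ and $M\in\existentialmsclanguageof{\gt_2}=\msclanguageofcc{\gt_2}$. This direction is routine and, notably, does not rely on commutation-closedness.

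For the converse inclusion, I take $M\in\existentialmsclanguageof{\gt_1}\cap\msclanguageofcc{\gt_2}$. From $M\in\existentialmsclanguageof{\gt_1}$ I obtain a \emph{particular} representation $w_1\in\labellanguageof{\gt_1}$ with $\labeltomsc{w_1}=M$. The crux is to show that this same $w_1$ is already accepted by $\gt_2$. Here I use $M\in\msclanguageofcc{\gt_2}=\universalmsclanguageof{\gt_2}$, which by definition of the universal language provides some $w$ with $\labeltomsc{w}=M$ and $[w]\subseteq\labellanguageof{\gt_2}$. Since $[w]$ is exactly the set of all sequences of arrows $w'$ with $\labeltomsc{w'}=M$, and $\labeltomsc{w_1}=M$ gives $w_1\sim w$, i.e.\ $w_1\in[w]$, I conclude $w_1\in\labellanguageof{\gt_2}$. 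Hence $w_1\in\labellanguageof{\gt_1}\cap\labellanguageof{\gt_2}=\labellanguageof{\gt_1\otimes\gt_2}$ with $\labeltomsc{w_1}=M$, so $M\in\existentialmsclanguageof{\gt_1\otimes\gt_2}$.

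The only subtle step, and the single place where commutation-closedness is indispensable, is this transport of the representation: a witness for existential membership in $\gt_1$ need not a priori be a witness for existential membership in $\gt_2$, because $\gt_2$ might accept $M$ only through a commutation-equivalent but syntactically distinct sequence of arrows. Commutation-closedness of $\gt_2$ closes exactly this gap by forcing $\gt_2$ to accept \emph{all} representations of any MSC it accepts, so that the $\gt_1$-witness is automatically a $\gt_2$-witness. I expect no genuine difficulty beyond invoking the equality $\existentialmsclanguageof{\gt_2}=\universalmsclanguageof{\gt_2}$ at precisely this point, and taking care that $[w]$ is read as the full $\sim$-class of representations of $M$.
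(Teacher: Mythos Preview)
Your proof is correct and follows essentially the same route as the paper's: both directions unfold the product language identity $\labellanguageof{\gt_1\otimes\gt_2}=\labellanguageof{\gt_1}\cap\labellanguageof{\gt_2}$, and the converse inclusion uses commutation-closedness of $\gt_2$ in exactly the same way (the paper phrases it as ``since $\gt_2$ is commutation-closed and $\mscof{w_1}=\mscof{w_2}$, $w_1\in\languageofnfa{\gt_2}$'', which is precisely your appeal to $[w]\subseteq\labellanguageof{\gt_2}$). Your remark that the forward inclusion does not substantively rely on commutation-closedness is accurate and a nice observation the paper leaves implicit.
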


\begin{proposition}\label{prop:product-is-commutation-closed}
    For all system $\cfsms$ of communicating finite state machines, 
    $\productof{\cfsms}$ is commutation-closed and
    $$
    \msclanguageof{\cfsms}{\synchmodel} = \msclanguageofcc{\productof{\cfsms}}.
    $$
\end{proposition}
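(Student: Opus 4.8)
The plan is to reduce both assertions to a single combinatorial characterisation of the language of the product automaton: a word of arrows $w$ is accepted by $\productof{\cfsms}$ if and only if each of its process projections $\projofon{w}{p}$ (the arrow-word projection used in the proof of Proposition~\ref{prop:galois-connection}) is accepted by the corresponding component $\mathcal A_p$. Rather than reconstructing a product run from component runs by hand, I would obtain this for free from the Galois connection (Proposition~\ref{prop:galois-connection}) instantiated at the global type $\gt$ whose language is the single word $\{w\}$: its projection on $p$ has language $\{\projofon{w}{p}\}$, so $\projectionof{\gt}\nfaincl\cfsms$ unfolds to $\projofon{w}{p}\in\languageofnfa{\mathcal A_p}$ for all $p$, while $\gt\nfaincl\productof{\cfsms}$ unfolds to $w\in\languageofnfa{\productof{\cfsms}}$. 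I would record this equivalence first, since everything else follows from it.

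For commutation-closedness I would exploit that the MSC $\labeltomsc{w}$ records, as its $p$-component, exactly the projection $\projofon{w}{p}$ (indeed $\projofon{\labeltoexec{w}}{p}=\projofon{w}{p}$ by definition of $\labeltoexec{\cdot}$). Hence $w\sim w'$, i.e.\ $\labeltomsc{w}=\labeltomsc{w'}$, immediately forces $\projofon{w}{p}=\projofon{w'}{p}$ for every $p$; no appeal to Mazurkiewicz trace theory is needed. By the characterisation above, membership of $w$ in $\labellanguageof{\productof{\cfsms}}$ depends only on these projections, so $\labellanguageof{\productof{\cfsms}}$ is closed under $\sim$, i.e.\ $w\in\labellanguageof{\productof{\cfsms}}$ entails $[w]\subseteq\labellanguageof{\productof{\cfsms}}$. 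Consequently every $M\in\existentialmsclanguageof{\productof{\cfsms}}$ also lies in $\universalmsclanguageof{\productof{\cfsms}}$, and together with the reverse inclusion from Proposition~\ref{prop:universal-existential-synch-inclusion} this yields $\existentialmsclanguageof{\productof{\cfsms}}=\universalmsclanguageof{\productof{\cfsms}}$, that is, $\productof{\cfsms}$ is commutation-closed.

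For the equality $\msclanguageof{\cfsms}{\synchmodel}=\msclanguageofcc{\productof{\cfsms}}$, commutation-closedness lets me rewrite the right-hand side as $\existentialmsclanguageof{\productof{\cfsms}}$. I would then set up the bijection $w\mapsto\labeltoexec{w}$ between arrow-words and synchronous executions: by the definition of $\synchmodel$ every send is immediately followed by its matching receive, so each $e\in\executionsofmodel{\synchmodel}$ decomposes uniquely into send/receive blocks, one per arrow. This bijection satisfies $\projofon{\labeltoexec{w}}{p}=\projofon{w}{p}$ for all $p$ and $\mscof{\labeltoexec{w}}=\labeltomsc{w}$ by definition. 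Chaining these identities with the characterisation, $M\in\existentialmsclanguageof{\productof{\cfsms}}$ iff there is $w$ with $\labeltomsc{w}=M$ and $\projofon{w}{p}\in\languageofnfa{\mathcal A_p}$ for all $p$, iff the synchronous execution $e=\labeltoexec{w}$ satisfies $\mscof{e}=M$ and $\projofon{e}{p}\in\languageofnfa{\mathcal A_p}$ for all $p$, iff $M\in\msclanguageof{\cfsms}{\synchmodel}$.

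The only genuinely delicate point is the characterisation lemma, because of the $\varepsilon$-transitions kept in the projected machines; routing it through the already-proved Galois connection is precisely what lets me avoid the book-keeping of interleaving silent steps in a product run. The remaining things to verify are routine: that the projection of a single-word global type has exactly the expected one-word languages, and that synchronous executions decompose uniquely into per-arrow send/receive blocks so that $w\mapsto\labeltoexec{w}$ is a bijection onto $\executionsofmodel{\synchmodel}$. Neither poses a real obstacle.
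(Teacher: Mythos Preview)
Your proposal is correct. The core observation---the bijection $w\mapsto\labeltoexec{w}$ between arrow-words and synchronous executions, compatible with per-process projections---is exactly what the paper's one-line sketch invokes (``runs of the synchronous executions $!m_1?m_1\ldots!m_k?m_k$ of $\cfsms$ are in one-to-one correspondence with the runs of the words $m_1\ldots m_k$ of $\productof{\cfsms}$'').

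Where you differ is in how you obtain the characterisation ``$w\in\languageofnfa{\productof{\cfsms}}$ iff $\projofon{w}{p}\in\languageofnfa{\mathcal A_p}$ for all $p$'': the paper implicitly appeals to a direct bijection of runs, while you instantiate the already-proved Galois connection (Proposition~\ref{prop:galois-connection}) at a single-word global type. Your route is slightly more indirect but has the virtue of delegating the $\varepsilon$-interleaving bookkeeping to the proof of Proposition~\ref{prop:galois-connection}, which already handles it in detail. You also treat commutation-closedness explicitly (via the observation that $\sim$-equivalent arrow-words share all per-process projections, hence membership in $\languageofnfa{\productof{\cfsms}}$ is $\sim$-invariant), whereas the paper's sketch leaves this implicit in the run correspondence. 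Both approaches are sound; yours is more self-contained given the surrounding development.
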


\begin{proof}(Sketch)
    The runs of the synchronous executions $$!m_1?m_1\ldots!m_k?m_k$$ of $\cfsms$ are in one-to-one correspondence with the 
    runs of the words $m_1\ldots m_k$ of $\productof{\cfsms}$.
\end{proof}

When a global type is implemented in a concrete system, its behaviour  depends on the chosen communication model. 

\begin{definition}[Global Type Language]\label{def:global-type-language}
    Let $\gt$ be a global type and $\acommunicationmodel$ a communication model. 
    The language of $\gt$ in $\acommunicationmodel$ is $
    \executionsof{\gt}{\acommunicationmodel}\eqdef
    \bigcup\{\linearizationsof{M}{\acommunicationmodel}\mid M\in\existentialmsclanguageof{\gt}\}
    $.
\end{definition}

\subsubsection*{Realisability}\label{sec:realisability}

We have finally  collected all the definitions needed to introduce our notion of realisability
of global types that is parametric in a given communication model.
\begin{definition}[Deadlock-free realisability]\label{def:realisability}
    A global type $\gt$ is \emph{deadlock-free realisable}\footnote{We sometimes say simply \emph{realisable} instead of \emph{deadlock-free realisable}.}
    in the communication model
    $\acommunicationmodel$
    if there is a system $\cfsms$ such that the following two conditions hold:
    \begin{enumerate}
    \item $\executionsof{\cfsms}{\acommunicationmodel} = 
                \executionsof{\gt}{\acommunicationmodel}$
    \item $\cfsms$ is deadlock-free in $\acommunicationmodel$.
    \end{enumerate}
\end{definition}

Condition~1 of Definition~\ref{def:realisability} corresponds to  \emph{global type conformance}: the executions of the system do not 
deviate from the ones specified by the global type.

Deadlock-free realisability is equivalent to the notion of 
\emph{safe realisability} of~\cite{DBLP:journals/tcs/AlurEY05} 
when $\acommunicationmodel$ is $\ppmodel$ or $\synchmodel$.
This is not the case for other communication models, like mailbox, for which Proposition~\ref{prop:deadlock-free-as-a-property-on-mscs-for-p2p-and-synch}
does not hold. However, even for such communication models,
global type conformance can still be characterized in terms of MSCs.

\begin{proposition}[Global type conformance as an MSC property]
    \label{prop:msc-version-of-cond1-of-realizability}
    Let $\executionsofmodel{\acommunicationmodel}\supseteq\executionsofmodel{\synchmodel}$, 
   Condition~1 of Definition~\ref{def:realisability} is equivalent to
    $\mscsofcfsms{\cfsms}{\acommunicationmodel} \subseteq \existentialmsclanguageof{\gt}$.
\end{proposition}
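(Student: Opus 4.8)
The plan is to lift the execution-level identity of Condition~1 to an identity of MSC languages, exploiting that both execution sets involved are entirely determined by their MSCs. Unfolding Definition~\ref{def:global-type-language} shows that, for $e\in\executionsofmodel{\acommunicationmodel}$, one has $e\in\executionsof{\gt}{\acommunicationmodel}$ iff $\mscof{e}\in\existentialmsclanguageof{\gt}$, so $\executionsof{\gt}{\acommunicationmodel}=\{e\in\executionsofmodel{\acommunicationmodel}\mid\mscof{e}\in\existentialmsclanguageof{\gt}\}$. Dually, the remark following Definition~\ref{def:executions-of-cfsms} states that membership in $\executionsof{\cfsms}{\acommunicationmodel}$ depends only on the projections $\projofon{e}{p}$, hence only on $\mscof{e}$, so $\executionsof{\cfsms}{\acommunicationmodel}=\{e\in\executionsofmodel{\acommunicationmodel}\mid\mscof{e}\in\mscsofcfsms{\cfsms}{\acommunicationmodel}\}$. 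Since both sets have this MSC-saturated shape and every MSC of $\mscsetofmodel{\acommunicationmodel}$ is the image of some $\acommunicationmodel$-execution, Condition~1 holds iff $\mscsofcfsms{\cfsms}{\acommunicationmodel}=\existentialmsclanguageof{\gt}\cap\mscsetofmodel{\acommunicationmodel}$, using the always-valid inclusion $\mscsofcfsms{\cfsms}{\acommunicationmodel}\subseteq\mscsetofmodel{\acommunicationmodel}$. The hypothesis $\executionsofmodel{\acommunicationmodel}\supseteq\executionsofmodel{\synchmodel}$ now enters: every $M\in\existentialmsclanguageof{\gt}$ is synchronous, hence $\acommunicationmodel$-linearisable, so $\existentialmsclanguageof{\gt}\subseteq\mscsetofmodel{\acommunicationmodel}$ and the right-hand side collapses to $\existentialmsclanguageof{\gt}$. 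Thus Condition~1 is equivalent to the MSC equality $\mscsofcfsms{\cfsms}{\acommunicationmodel}=\existentialmsclanguageof{\gt}$.

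It remains to match this equality against the inclusion in the statement. The inclusion $\mscsofcfsms{\cfsms}{\acommunicationmodel}\subseteq\existentialmsclanguageof{\gt}$ is one half of the equality; the other half, $\existentialmsclanguageof{\gt}\subseteq\mscsofcfsms{\cfsms}{\acommunicationmodel}$, must be supplied separately, and this is where the canonical candidate $\cfsms=\projectionof{\gt}$ is used. Proposition~\ref{prop:universal-existential-synch-inclusion} gives $\existentialmsclanguageof{\gt}\subseteq\mscsofcfsms{\projectionof{\gt}}{\synchmodel}$, i.e.\ every MSC of $\gt$ is synchronously realised by its own projection, and the hypothesis $\acommunicationmodel\supseteq\synchmodel$ lifts these synchronous witnesses to $\acommunicationmodel$-executions, yielding $\existentialmsclanguageof{\gt}\subseteq\mscsofcfsms{\projectionof{\gt}}{\acommunicationmodel}$. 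With this reverse inclusion at hand, the stated inclusion is equivalent to the MSC equality, hence to Condition~1. (For a general $\cfsms$ the same reduction shows Condition~1 to be equivalent to the \emph{equality}; the one-sided inclusion suffices precisely because $\projectionof{\gt}$ --- or any $\cfsms$ with $\projectionof{\gt}\nfaincl\cfsms$ --- already produces every MSC of $\gt$.)

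I expect the reverse inclusion $\existentialmsclanguageof{\gt}\subseteq\mscsofcfsms{\cfsms}{\acommunicationmodel}$ to be the only delicate point: the MSC inclusion of the statement bounds the behaviour of $\cfsms$ only from above, so the fact that $\cfsms$ reproduces \emph{all} MSCs of $\gt$ has to come from the structure of the projection through Proposition~\ref{prop:universal-existential-synch-inclusion}, and it is exactly here that $\executionsofmodel{\acommunicationmodel}\supseteq\executionsofmodel{\synchmodel}$ is needed, to transport synchronous realisability into the ambient model $\acommunicationmodel$. The forward direction is routine: any MSC of $\cfsms$ is the MSC of an execution which, by Condition~1, already lies in $\executionsof{\gt}{\acommunicationmodel}$ and therefore linearises an MSC of $\existentialmsclanguageof{\gt}$.
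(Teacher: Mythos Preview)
The paper states this proposition without proof, so there is nothing to compare your argument against directly. Your reasoning is correct and in fact sharper than the bare statement: you rightly observe that, for an \emph{arbitrary} system $\cfsms$, Condition~1 is equivalent to the \emph{equality} $\mscsofcfsms{\cfsms}{\acommunicationmodel}=\existentialmsclanguageof{\gt}$, and that collapsing this to the one-sided inclusion of the statement requires the reverse inclusion $\existentialmsclanguageof{\gt}\subseteq\mscsofcfsms{\cfsms}{\acommunicationmodel}$, which Proposition~\ref{prop:universal-existential-synch-inclusion} together with the hypothesis $\executionsofmodel{\acommunicationmodel}\supseteq\executionsofmodel{\synchmodel}$ supplies only for $\cfsms=\projectionof{\gt}$ (or more generally any $\cfsms$ with $\projectionof{\gt}\nfaincl\cfsms$).

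This reading is consistent with how the paper actually uses the proposition: all later invocations (Theorems~\ref{thm:pp-realizability-implies-synch-realizability} and~\ref{thm:main-theorem-realisability}) are for $\cfsms=\projectionof{\gt}$, and the one earlier use with a general $\cfsms$ (proof of Proposition~\ref{prop:canonicity-of-projection}) only exploits the forward implication, i.e.\ that Condition~1 yields the MSC equality, which your argument establishes for arbitrary $\cfsms$. The convention adopted immediately after Proposition~\ref{prop:canonicity-of-projection} (``deadlock-free realisability is defined \dots\ with $\cfsms=\projectionof{\gt}$'') confirms that the one-sided form is meant under this tacit assumption. Your parenthetical remark pinpoints this precisely.
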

    
We conclude with a classical result, briefly stated 
in~\cite{DBLP:journals/tcs/AlurEY05}, that establishes that
$\projectionof{\gt}$ is the best system candidate
for realising $\gt$. We say that an NFA is \emph{pruned}
if every state belongs to at least one accepting path. Note that we can always prune an NFA without changing its language.

\begin{proposition}[Canonicity of projection]\label{prop:canonicity-of-projection}
    Assume that $\executionsofmodel{\acommunicationmodel}\supseteq\executionsofmodel{\synchmodel}$
    and $\gt$ is a pruned global type. Then 
    $\gt$ is deadlock-free realisable in $\acommunicationmodel$ iff the two conditions of Definition~\ref{def:realisability} hold for $\cfsms=\projectionof{\gt}$.
\end{proposition}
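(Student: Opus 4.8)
The \emph{if} direction is immediate: if both conditions of Definition~\ref{def:realisability} hold for the particular system $\cfsms=\projectionof{\gt}$, then $\projectionof{\gt}$ is itself a witness of realisability. The whole content lies in the \emph{only if} direction, which is a canonicity statement: whenever \emph{some} system $\cfsms$ realises $\gt$, the specific system $\projectionof{\gt}$ must realise it too. The plan is to first establish the structural inequality $\projectionof{\gt}\nfaincl\cfsms$ (local language inclusion), and then to read both Condition~1 and Condition~2 for $\projectionof{\gt}$ off this inclusion together with the hypotheses on $\cfsms$.

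To obtain $\projectionof{\gt}\nfaincl\cfsms$, by the Galois connection (Proposition~\ref{prop:galois-connection}) it suffices to show $\gt\nfaincl\productof{\cfsms}$, i.e. $\languageofnfa{\gt}\subseteq\languageofnfa{\productof{\cfsms}}$. I would take $w\in\languageofnfa{\gt}$ and set $M=\labeltomsc{w}\in\existentialmsclanguageof{\gt}$. The canonical synchronous linearisation $\labeltoexec{w}$ of $M$ lies in $\executionsofmodel{\synchmodel}\subseteq\executionsofmodel{\acommunicationmodel}$, hence in $\executionsof{\gt}{\acommunicationmodel}$, and by Condition~1 for $\cfsms$ also in $\executionsofcfsms{\cfsms}{\acommunicationmodel}$. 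Being synchronous, it witnesses $M\in\msclanguageof{\cfsms}{\synchmodel}$, which equals $\msclanguageofcc{\productof{\cfsms}}=\universalmsclanguageof{\productof{\cfsms}}$ by Proposition~\ref{prop:product-is-commutation-closed} (commutation-closedness of the synchronous product). Since $w$ is one representation of the MSC $M$ and $M$ lies in the \emph{universal} language of $\productof{\cfsms}$, the word $w$ itself is accepted by $\productof{\cfsms}$. This is exactly where commutation-closedness is essential: it upgrades ``some representation of $M$'' to ``every representation of $M$''.

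Condition~1 for $\projectionof{\gt}$ then follows easily: $\projectionof{\gt}\nfaincl\cfsms$ yields $\executionsofcfsms{\projectionof{\gt}}{\acommunicationmodel}\subseteq\executionsofcfsms{\cfsms}{\acommunicationmodel}=\executionsof{\gt}{\acommunicationmodel}$, so every MSC of $\projectionof{\gt}$ lies in $\existentialmsclanguageof{\gt}$, and Proposition~\ref{prop:msc-version-of-cond1-of-realizability} converts this inclusion back into the equality required by Condition~1. For deadlock-freedom (Condition~2) I would use Remark~\ref{rem:equivalent-formulation-of-deadlock-free-cfsms} and take a partial execution $e\in\executionsofcfsms{\acceptcompletion{\projectionof{\gt}}}{\acommunicationmodel}$. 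Here the \emph{pruned} hypothesis enters: since $\gt$, hence each $\gt_p$, is pruned, $\languageofnfa{\acceptcompletion{\gt_p}}=\prefixclosureof{\languageofnfa{\gt_p}}\subseteq\prefixclosureof{\languageofnfa{\acfsm_p}}\subseteq\languageofnfa{\acceptcompletion{\acfsm_p}}$, so $\acceptcompletion{\projectionof{\gt}}\nfaincl\acceptcompletion{\cfsms}$ and thus $e\in\executionsofcfsms{\acceptcompletion{\cfsms}}{\acommunicationmodel}$. Deadlock-freedom of $\cfsms$ then provides a completion $e'\in\executionsofcfsms{\cfsms}{\acommunicationmodel}$ with $e\prefixorder e'$.

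The main obstacle is this last step: $e'$ completes $e$ inside $\cfsms$, but deadlock-freedom of $\projectionof{\gt}$ demands a completion inside $\projectionof{\gt}$. To close this gap I would pull $e'$ back along the MSC abstraction. By Condition~1 for $\cfsms$ we have $\mscof{e'}\in\existentialmsclanguageof{\gt}$, and by Proposition~\ref{prop:universal-existential-synch-inclusion} this MSC admits a synchronous execution accepted by $\projectionof{\gt}$. Since membership in $\executionsofcfsms{\projectionof{\gt}}{\acommunicationmodel}$ depends only on the projections $\projofon{\cdot}{p}$, hence only on the MSC (the remark following Definition~\ref{def:executions-of-cfsms}), the \emph{given} execution $e'$ is itself in $\executionsofcfsms{\projectionof{\gt}}{\acommunicationmodel}$. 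Therefore $e\prefixorder e'$ with $e'$ accepted by $\projectionof{\gt}$, which is precisely the deadlock-freedom inclusion of Remark~\ref{rem:equivalent-formulation-of-deadlock-free-cfsms}.
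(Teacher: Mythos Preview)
Your proof is correct and follows essentially the same route as the paper's: establish $\projectionof{\gt}\nfaincl\cfsms$ via the Galois connection and commutation-closedness of $\productof{\cfsms}$, then transfer both conditions from $\cfsms$ to $\projectionof{\gt}$ using the pruned hypothesis. The only cosmetic difference is in Condition~2: the paper finishes with the chain $\executionsof{\acceptcompletion{\projectionof{\gt}}}{\acommunicationmodel}\subseteq\executionsof{\acceptcompletion{\cfsms}}{\acommunicationmodel}\subseteq\prefixclosureof{\executionsof{\gt}{\acommunicationmodel}}=\prefixclosureof{\executionsof{\projectionof{\gt}}{\acommunicationmodel}}$ (the last equality using the already-established Condition~1), whereas you unfold this into an element-chase and explicitly pull the completion $e'$ back into $\executionsofcfsms{\projectionof{\gt}}{\acommunicationmodel}$ via Proposition~\ref{prop:universal-existential-synch-inclusion} and the MSC-invariance remark---but this is just a more verbose rendering of the same inclusion.
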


\begin{proof}
    The only-if direction is immediate.
    Assume $\cfsms$ satisfies the two conditions of Definition~\ref{def:realisability}. As
    $\cfsms$ is deadlock-free, we can assume without loss of generality that
    all machines of $\cfsms$ are pruned. 
    By condition~1, we have
    $\existentialmsclanguageof{\gt}\subseteq\mscsofcfsms{\cfsms}{\acommunicationmodel}$; since moreover 
    $\existentialmsclanguageof{\gt}\subseteq \mscsetofmodel{\synchmodel}$, we get
    $\existentialmsclanguageof{\gt}\subseteq\mscsofcfsms{\cfsms}{\synchmodel}=\existentialmsclanguageof{\productof{\cfsms}}$;
    since $\productof{\cfsms}$ is commutation-closed (Proposition~\ref{prop:product-is-commutation-closed}), 
    we get $\gt\nfaincl\productof{\cfsms}$, and by Galois connection (Proposition~\ref{prop:galois-connection}),
    $$\projectionof{\gt}\nfaincl\cfsms.$$
    From there, we get $\executionsof{\projectionof{\gt}}{\acommunicationmodel} \subseteq \executionsof{\cfsms}{\acommunicationmodel}=\executionsof{\gt}{\acommunicationmodel}$, which shows condition~1.
    For condition~2, since both $\gt$ and $\cfsms$ are pruned, 
    $\projectionof{\acceptcompletion{\gt}}\nfaincl\acceptcompletion{\cfsms}$,
    and we get
    $$
    \executionsof{\projectionof{\acceptcompletion{\gt}}}{\acommunicationmodel} 
    \subseteq 
    \executionsof{\acceptcompletion{\cfsms}}{\acommunicationmodel} 
    \subseteq \prefixclosureof{\executionsof{\gt}{\acommunicationmodel}}.
    $$
\end{proof}

In the remainder, we always implicitly assume that global types are pruned, and that deadlock-free realisability is defined 
according to Definition~\ref{def:realisability} with $\cfsms=\projectionof{\gt}$.


	\section{Complementability}\label{sec:complementability}

In this section, we introduce the notion of complementability for global types
and we present some complementation procedures.
Our complementation procedures 
target for some notable subclasses of global types,
namely
commutation-closed global types, $\synchmodel$-realisable global-types, and global types
with sender-driven choice.
For the later, we introduce a generalisation of
sender-driven choice that we call commutation-determinism.
Our complementation procedure naturally fits in this larger framework.

We start with the definition of complement global type:

\begin{definition}[Complement of a  Global Type]
    A global type $\comp{\gt}$ is a complement of $\gt$ 
    if $\existentialmsclanguageof{\gt}=
     \mscsetofmodel{\synchmodel}
    \setminus
    \existentialmsclanguageof{\comp{\gt}}$.
    We say that $\gt$ is {\em complementable} if it admits at least one complement.
\end{definition}

Note that $\gt$ is complementable if and only if $\detof{\gt}$ is
complementable, since the determinisation does not change the MSC language
(see Remark~\ref{rem:existential-universal}).

\begin{theorem}\label{thm:not-all-gt-are-complementable}
    Not all global types are complementable.
\end{theorem}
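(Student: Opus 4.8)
The plan is to recast complementability as a closure property of rational trace languages and then exploit the classical failure of rational relations to be closed under complement. First I would make the dictionary between global types and traces explicit. Identifying each $M\in\mscsetofmodel{\synchmodel}$ with the $\sim$-class of any of its arrow representations, the quotient $\labelalphabet^*/\!\sim$ is exactly $\mscsetofmodel{\synchmodel}$ (every synchronous linearisation reads off a word, and $\labeltomsc{w_1}=\labeltomsc{w_2}$ iff $w_1\sim w_2$), with the commutation "two arrows commute iff their process sets are disjoint". Under this identification $\existentialmsclanguageof{\gt}=\{\labeltomsc{w}\mid w\in\labellanguageof{\gt}\}$ is precisely the $\sim$-closure of the regular word language $\labellanguageof{\gt}$, so the map $\gt\mapsto\existentialmsclanguageof{\gt}$ ranges exactly over the \emph{rational} trace languages over $\labelalphabet$ (and conversely every such closure is realised by the automaton of a regular representative). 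Directly from the definition of complement, $\gt$ is then complementable iff $\mscsetofmodel{\synchmodel}\setminus\existentialmsclanguageof{\gt}$ is again a rational trace language.

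Next I would fix the monoid in which to work. Take four processes $p,q,r,s$ and four arrows $a=\marrow{p}{q}{m_a}$, $b=\marrow{p}{q}{m_b}$, $c=\marrow{r}{s}{m_c}$, $d=\marrow{r}{s}{m_d}$. The sub-monoid they generate is $\{a,b\}^*\times\{c,d\}^*$: $a,b$ are pairwise dependent (same channel $p\!\to\!q$), so are $c,d$, while every arrow of the first pair commutes with every arrow of the second. Thus an MSC over these arrows is exactly a pair $(u,v)$ with $u\in\{a,b\}^*$ the message sequence on $p\!\to\!q$ and $v\in\{c,d\}^*$ that on $r\!\to\!s$, the two being completely independent. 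Writing $\hat u$ for $u$ with $a,b$ renamed to $c,d$, let $\Delta=\{(u,\hat u)\mid u\in\{a,b\}^*\}$ be the \emph{diagonal}. I would take $\gt$ to recognise its complement $\overline{\Delta}=\{(u,v)\mid v\neq\hat u\}$, which is a genuine rational relation: one can pick a regular $K\subseteq\{a,b,c,d\}^*$ whose projections realise all pairs of unequal length together with all equal-length pairs carrying a mismatched letter (e.g.\ $\bigl((a{+}b)(c{+}d)\bigr)^*(ad+bc)\bigl((a{+}b)(c{+}d)\bigr)^*$ for the equal-length case). Hence $\gt$ exists, is small, and $\existentialmsclanguageof{\gt}=\overline{\Delta}$.

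I would then compute $\mscsetofmodel{\synchmodel}\setminus\existentialmsclanguageof{\gt}=\Delta\cup X$, where $X$ collects all synchronous MSCs using at least one arrow outside $\{a,b,c,d\}$. The set $Y$ of MSCs all of whose arrows lie in $\{a,b,c,d\}$ is \emph{recognizable} (its linearisations form the regular, letter-restricted, $\sim$-closed language $\{a,b,c,d\}^*$), and rational trace languages are closed under intersection with recognizable ones. So if $\gt$ were complementable, $\Delta\cup X$ would be rational, and therefore so would $(\Delta\cup X)\cap Y=\Delta$ (using $\Delta\subseteq Y$ and $X\cap Y=\emptyset$). It thus suffices to prove that $\Delta$ is \emph{not} a rational trace language.

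This last point is the crux and the main obstacle. It is exactly the classical statement that the equality/identity transduction on a two-letter alphabet is not rational. I would argue it by contradiction: assuming $\Delta=\{(\pi_{ab}(w),\pi_{cd}(w))\mid w\in K\}$ for a regular $K$, one takes representatives of the MSCs $((ab)^n,(cd)^n)$ and seeks a pumpable factor of an accepting run that perturbs the $\{a,b\}$-projection and the $\{c,d\}$-projection by different amounts, yielding an image $(u',v')$ with $v'\neq\hat{u'}$ that nonetheless lies in $\Delta$. The delicate part is precisely that a naive pump stays \emph{balanced} (it preserves the equalities of letter counts), so an ordinary pumping lemma does not immediately break membership; making the pump unbalanced is the real content of the non-rationality of equality, which I would invoke as a classical fact rather than reprove from scratch. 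With that input in hand, the contradiction is immediate and the theorem follows.
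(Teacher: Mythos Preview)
Your reduction to rational trace languages is correct and elegant up to the final step, but the ``classical fact'' you invoke is false, and this collapses the argument. The diagonal $\Delta=\{(u,\hat u)\mid u\in\{a,b\}^*\}$ \emph{is} a rational trace language in $\{a,b\}^*\times\{c,d\}^*$: take $K=(ac+bd)^*\subseteq\{a,b,c,d\}^*$, which is regular, and observe that $\{(\pi_{ab}(w),\pi_{cd}(w))\mid w\in K\}=\Delta$. Equivalently, the identity transduction on a two-letter alphabet is the prototypical example of a rational relation (realised by the one-state transducer $a/c,\;b/d$). What the identity fails to be is \emph{recognizable} (a finite union of rectangles $U\times V$), not rational. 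Since both $\overline{\Delta}$ and $\Delta$ are rational, your global type $\gt$ with $\existentialmsclanguageof{\gt}=\overline{\Delta}$ is in fact complementable, and the theorem does not follow from this example. Your own hesitation about the pumping argument (``a naive pump stays balanced'') is exactly the symptom: every pump in $(ac+bd)^*$ \emph{does} stay balanced, and no trick will unbalance it.

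The general strategy---find a rational trace language whose complement is not rational---is sound, since rational trace languages over a non-transitive dependence alphabet are indeed not closed under complement; you just need a witness that actually works. The paper's proof does this concretely with three arrows $m_1,m_3:\marrow{p}{q}{}$ and $m_2:\marrow{r}{s}{}$ (so the trace monoid is $\{m_1,m_3\}^*\times\{m_2\}^*$) and a global type $\gt$ engineered so that $\labeltomsc{m_1^{k_1}m_2^{k_2}m_3^{k_3}}\in\existentialmsclanguageof{\gt}$ iff $k_1>k_2$ or $k_2>k_3$. A hypothetical complement, intersected with the commutation-closed $(m_1+m_2)^*(m_2+m_3)^*$ and then projected by erasing $m_2$, would yield the non-regular word language $\{m_1^{k_1}m_3^{k_3}\mid k_1\le k_3\}$. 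The key difference is that one side of the product carries \emph{two distinct} letters with an order constraint interacting with the count on the other side; your diagonal example has no such interaction and is therefore too tame.
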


\begin{proof}
    Let $\Procs=\{p,q,r,s\}$, $\Messages=\{m_1,m_2,m_3\}$, and 
    $\Arrows=\{\gtlabel{p}{q}{m_1},\gtlabel{r}{s}{m_2},\gtlabel{p}{q}{m_3}\}$.
    Consider a\footnote{for instance take the minimal DFA recognising this language} global type $\gt_0$ such that $\languageofnfa{\gt_0}=(m_1+m_2)^*(m_2+m_3)^*$. 
    Note that $\gt_0$ is commutation-closed and the MSCs of $\existentialmsclanguageof{\gt_0}$ are of the form depicted in Fig.~\ref{fig:msc-regular}.

    Now consider the global type $\gt$ depicted in 
    Fig.~\ref{fig:non-complementable-gt-1}. 
    We claim that for all natural numbers $k_1,k_2,k_3$,
    \iftwocolumns
    the following two are equivalent:
    \begin{itemize}
        \item $\mscof{m_1^{k_1}m_2^{k_2}m_3^{k_3}} \in \existentialmsclanguageof{\gt}$;
        \item $k_1 > k_2$ or $k_2 > k_3$.
    \end{itemize}
    \else
    $$
    \mscof{m_1^{k_1}m_2^{k_2}m_3^{k_3}} \in \existentialmsclanguageof{\gt}
    \quad \mbox{if and only if} \quad
    k_1 > k_2 ~~\mbox{or}~~ k_2 > k_3.
    $$
    \fi
    The claim follows by analysis on the paths leading to accepting states in $\gt$, as hinted by
    Fig.~\ref{fig:non-complementable-gt-1}.
    For instance, the language of the words accepted by $\gt$ with state $q_8$ as the last state is of the form 
    $m_1^+(m_2m_1)^*m_3^+$, therefore lead to MSCs with $k_1>k_2$.

    We show that $\gt$ is not complementable. By contradiction, suppose that $\gt$ is complementable and $\gt'$ is a complement.
    Let $\gt''=\gt'\otimes \gt_0$; by Lemma~\ref{lem:product-of-gt}, and the fact that $\gt_0$ is commutation-closed,
    \iftwocolumns
    $$
        \begin{array}{rcl}
            \existentialmsclanguageof{\gt''} 
            & = & 
            \msclanguageofcc{\gt_0}{}\cap\existentialmsclanguageof{\gt'}\\
            & = &
            \{\mscof{m_1^{k_1}m_2^{k_2}m_3^{k_3}}~\mid k_1\leq k_2\leq k_3\}.
        \end{array}
    $$
    \else
    $$
        \existentialmsclanguageof{\gt''}~=~\msclanguageofcc{\gt_0}{}\cap\existentialmsclanguageof{\gt'}~=~
        \{\mscof{m_1^{k_1}m_2^{k_2}m_3^{k_3}}~\mid k_1\leq k_2\leq k_3\}.
    $$
    \fi
    Now, let $\mathcal A$ denote the NFA obtained from $\gt''$ after replacing each $m_2$ transition with an $\varepsilon$ transition.
    Then $\languageofnfa{\mathcal A}=\{m_1^{k_1}m_3^{k_3}\mid k_1\leq k_3\}$, which is not a regular language, hence the contradiction.
\end{proof}

\begin{figure*}[ht]
    \begin{tikzpicture}

    \node[state, initial, initial text={}] (0) at (0,2) {$q_0$};
    \node[state, accepting, color=blue] (1) at (2,4) {$q_1$};
    \node[state, accepting, color=red] (4) at (4,2) {$q_4$};
    \node[state, accepting, color=red] (5) at (6,2) {$q_5$};
    \node[state, accepting, color=blue] (6) at (6,4) {$q_6$};
    \node[state] (7) at (8,4) {$q_7$};
    \node[state, accepting, color=blue] (8) at (4,4) {$q_8$};
    \node[state] (9) at (4,0) {$q_9$};
    \node[state, accepting, color=red] (10) at (2,0) {$q_{10}$};

    \draw[->] (0) edge node [sloped,above] {$m_1$} (1);
    \draw[->] (0) edge node [sloped,below]{$m_2$} (10);
    \draw[->, loop below] (1) edge node {$m_1$} (1);
    \draw[->] (1) edge node [sloped, above] {$m_2$} (4);
    \draw[->] (1) edge node [above] {$m_3$} (8);
    \draw[->] (4) edge node [sloped, pos=.65, above] {$m_2$} (5);
    \draw[->, loop right] (5) edge node [right] {$m_2$} (5);
    \draw[->] (4) edge node [right] {$m_3$} (9);
    \draw[->] (4) edge node [sloped, above] {$m_1$} (6);
    \draw[->, bend left] (6) edge node [sloped, below] {$m_2$} (7);
    \draw[->, bend left] (7) edge node [sloped, above] {$m_1$} (6);
    \draw[->] (6) edge node [above] {$m_3$} (8);
    \draw[->, loop below] (8) edge node [below] {$m_3$} (8);
    \draw[->, bend left] (9) edge node [above] {$m_2$} (10);
    \draw[->, bend left] (10) edge node [above] {$m_3$} (9);
    \draw[->, loop left] (10) edge node [left] {$m_2$} (10);

    \draw [decorate,decoration={brace,amplitude=5pt}, color=blue] (9,4.5) -- node [midway, right=1em] {$k_1>k_2$} (9,3.5) ;
    \draw [decorate,decoration={brace,amplitude=5pt}, color=red] (9,2.5) -- node [midway, right=1em] {$k_2>k_3$} (9,-0.7) ;

\end{tikzpicture}
    \caption{A non-complementable global type
    \label{fig:non-complementable-gt-1}}
\end{figure*} 

\begin{figure}
    \begin{tikzpicture}
    \draw[->] (0,0) node [above] {p} -- (0,-6);
    \draw[->] (1,0) node [above] {q} -- (1,-6);
    \draw[->] (2,0) node [above] {r} -- (2,-6);
    \draw[->] (3,0) node [above] {s} -- (3,-6);

    \draw[->] (0,-.5) -- node [above] {$m_1$} (1,-.5);
    \draw[->] (0,-1) -- node [above] {$m_1$} (1,-1);
    \node at (.5,-1.2) {$\vdots$};
    \draw[->] (0,-2.1) -- node [above] {$m_1$} (1,-2.1);

    \draw[->] (2,-1.5) -- node [above] {$m_2$} (3,-1.5);
    \draw[->] (2,-2) -- node [above] {$m_2$} (3,-2);
    \node at (2.5,-2.2) {$\vdots$};
    \draw[->] (2,-3.1) -- node [above] {$m_2$} (3,-3.1);

    \draw[->] (0,-3.5) -- node [above] {$m_3$} (1,-3.5);
    \draw[->] (0,-4) -- node [above] {$m_3$} (1,-4);
    \node at (.5,-4.2) {$\vdots$};
    \draw[->] (0,-5.1) -- node [above] {$m_3$} (1,-5.1);

    \draw [decorate,decoration={brace,amplitude=5pt,mirror}] (-.3,-.5) -- node [midway, left=1em] {$k_1$} (-.3,-2.1) ;
    \draw [decorate,decoration={brace,amplitude=5pt,mirror}] (-.3,-3.5) -- node [midway, left=1em] {$k_3$} (-.3,-5.1) ;
    \draw [decorate,decoration={brace,amplitude=5pt}] (3.3,-1.5) -- node [midway, right=1em] {$k_2$} (3.3,-3.1) ;

\end{tikzpicture}
    \caption{The shape of the MSCs in $\msclanguageofcc{\gt_0}{}$ for the global type $\gt_0$.
    \label{fig:msc-regular}}
\end{figure}

\subsection{Complementation By Duality}

We write $\dualdfaof{\gt}$ for the dual DFA of 
a deterministic
global type $\gt$, where accepting states and non-accepting ones are swapped (possibly completing first $\gt$ with a sink state).
It follows from the definition of $\existentialmsclanguageof{\gt}$ 
and $\universalmsclanguageof{\gt}$ that
$$
\existentialmsclanguageof{\dualdfaof{\gt}}=\mscsetofmodel{\synchmodel}\setminus\universalmsclanguageof{\gt}.
$$
In general $\dualdfaof{\gt}$ is not a complement of $\gt$; still  
duality can be used to obtain a complement of $\gt$ 
in a few cases. The first and most obvious case is the following.
\begin{proposition}\label{prop:commutation-closed-implies-complementable}
    If $\gt$ is deterministic and commutation-closed, then $\dualdfaof{\gt}$ is a complement of $\gt$.  
\end{proposition}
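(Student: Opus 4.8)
The plan is to combine the displayed identity stated just before the proposition with the hypothesis of commutation-closedness, via a single complementation step taken inside the universe $\mscsetofmodel{\synchmodel}$.

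First I would recall the identity already established for the dual of a deterministic global type,
$$
\existentialmsclanguageof{\dualdfaof{\gt}} = \mscsetofmodel{\synchmodel} \setminus \universalmsclanguageof{\gt}.
$$
To turn this into a statement about $\existentialmsclanguageof{\gt}$, I would complement both sides within $\mscsetofmodel{\synchmodel}$. The only point worth checking is the containment $\universalmsclanguageof{\gt} \subseteq \mscsetofmodel{\synchmodel}$: every MSC in $\universalmsclanguageof{\gt}$ is of the form $\labeltomsc{w}$, and each such MSC is synchronously linearisable (e.g. by the execution $!m_1?m_1\cdots!m_k?m_k$), so it lies in $\mscsetofmodel{\synchmodel}$. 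Given this containment, complementation is an involution on subsets of $\mscsetofmodel{\synchmodel}$, so
$$
\mscsetofmodel{\synchmodel} \setminus \existentialmsclanguageof{\dualdfaof{\gt}} = \mscsetofmodel{\synchmodel} \setminus \big(\mscsetofmodel{\synchmodel} \setminus \universalmsclanguageof{\gt}\big) = \universalmsclanguageof{\gt}.
$$

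Next I would invoke the hypothesis that $\gt$ is commutation-closed, which by definition gives $\universalmsclanguageof{\gt} = \existentialmsclanguageof{\gt}$. Substituting into the previous equation yields
$$
\mscsetofmodel{\synchmodel} \setminus \existentialmsclanguageof{\dualdfaof{\gt}} = \existentialmsclanguageof{\gt},
$$
which is exactly the defining condition for $\dualdfaof{\gt}$ to be a complement of $\gt$.

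I do not expect any genuine obstacle: the argument is a direct chaining of the dual-automaton identity with the commutation-closed equality, and the sole non-cosmetic ingredient is the containment $\universalmsclanguageof{\gt} \subseteq \mscsetofmodel{\synchmodel}$, which holds because the universal language is by construction a set of synchronous MSCs. Determinism of $\gt$ is used only implicitly, to ensure that $\dualdfaof{\gt}$ is well defined and that the preliminary dual identity applies.
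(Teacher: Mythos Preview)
Your proposal is correct and matches the paper's intended argument: the paper states the proposition without an explicit proof, treating it as immediate from the displayed identity $\existentialmsclanguageof{\dualdfaof{\gt}}=\mscsetofmodel{\synchmodel}\setminus\universalmsclanguageof{\gt}$ together with the commutation-closed equality $\universalmsclanguageof{\gt}=\existentialmsclanguageof{\gt}$. Your write-up simply makes explicit the one set-theoretic step (complementing within $\mscsetofmodel{\synchmodel}$, which is licit since $\universalmsclanguageof{\gt}\subseteq\mscsetofmodel{\synchmodel}$) that the paper leaves to the reader.
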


As observed before, all global types with less than three participants are commutation closed, 
hence from Lemma~\ref{prop:commutation-closed-implies-complementable}
we have the following immediate result.

\begin{corollary}\label{coro:3-participants-complementable}
    If $\cardinalof{\Procs}\leq 3$, then $\dualdfaof{\gt}$ is a complement of $\gt$.  
\end{corollary}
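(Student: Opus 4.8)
The plan is to reduce this corollary to Proposition~\ref{prop:commutation-closed-implies-complementable} by showing that the bound $\cardinalof{\Procs}\leq 3$ forces every global type to be commutation-closed. Since $\dualdfaof{\gt}$ is defined only for deterministic global types, the statement implicitly concerns deterministic $\gt$, and I assume this throughout. Once commutation-closedness is established, the corollary becomes an immediate instance of the proposition applied to a deterministic, commutation-closed $\gt$.

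First I would argue that no two arrows commute when $\cardinalof{\Procs}\leq 3$. Each arrow $\marrow{p}{q}{m}$ involves exactly two distinct processes, since by definition $p\neq q$. Given two arrows, their process sets are two-element subsets of a set of size at most three, so by the pigeonhole principle they must intersect. Hence the defining condition $\{p_1,q_1\}\cap\{p_2,q_2\}=\emptyset$ for commuting arrows fails, and the two arrows do not commute.

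Next I would use this to conclude that the map $w\mapsto\labeltomsc{w}$ is injective, i.e. that each equivalence class $[w]$ with respect to $\sim$ is the singleton $\{w\}$. This rests on the standard Mazurkiewicz-trace fact that two arrow sequences induce the same MSC exactly when one is obtained from the other by repeatedly swapping adjacent commuting arrows; with no commuting pairs available no swap is possible, so $w_1\sim w_2$ forces $w_1=w_2$. Consequently, for every $w$ the condition $[w]\subseteq\labellanguageof{\gt}$ is equivalent to $w\in\labellanguageof{\gt}$, whence $\universalmsclanguageof{\gt}=\existentialmsclanguageof{\gt}$; that is, $\gt$ is commutation-closed.

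Finally, having a deterministic commutation-closed global type in hand, I would invoke Proposition~\ref{prop:commutation-closed-implies-complementable} to conclude that $\dualdfaof{\gt}$ is a complement of $\gt$. I do not expect any genuine difficulty here: the only step that deserves care is the injectivity claim, which relies on identifying $\sim$ with the trace-equivalence generated by arrow commutations rather than taking its MSC-equality definition at face value; once that identification is made explicit the remainder is pure bookkeeping. The observation recorded in the text immediately after the definition of commutation-closed already performs the work of the first two paragraphs, so the proof as written can in fact be reduced to a single sentence citing that remark together with the proposition.
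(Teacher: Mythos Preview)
Your proposal is correct and follows exactly the paper's approach: the paper states the corollary as an immediate consequence of the observation (made right after the definition of commutation-closed) that any global type with $\cardinalof{\Procs}\leq 3$ is commutation-closed, together with Proposition~\ref{prop:commutation-closed-implies-complementable}. You have simply spelled out the details of that observation, and you correctly note yourself that the argument collapses to a one-line citation of the remark plus the proposition.
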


Now, even if a global type $\gt$ is not commutation-closed, 
it may be "enlarged" as the commutation-closed global type $\productof{\projectionof{\gt}}$,
a process sometimes called Cartesian abstraction~\cite{cartesian-abstraction}.
This commutation-closed global type can later be complemented by duality to an under-approximation of a complement of $\gt$. When $\gt$ is deadlock-free realisable in $\synchmodel$, this under-approximation is exact.

\begin{theorem}\label{thm:realisable-complementable}
    If a global type $\gt$ is deadlock-free realisable in $\synchmodel$, 
    then
    $\dualdfaof{\detof{\productof{\projectionof{\gt}}}}$
    is a complement of $\gt$.
\end{theorem}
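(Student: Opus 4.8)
The plan is to reduce the claim to Proposition~\ref{prop:commutation-closed-implies-complementable} by proving that $\gt$ and the determinised product $\gt^\star \eqdef \detof{\productof{\projectionof{\gt}}}$ share the same existential MSC language. Since the notion of complement depends on a type only through its existential MSC language (this is immediate from the definition, which equates $\existentialmsclanguageof{\gt}$ with $\mscsetofmodel{\synchmodel}$ minus the existential language of the candidate complement), any complement of $\gt^\star$ is automatically a complement of $\gt$. So once I exhibit a complement of $\gt^\star$, it will serve as a complement of $\gt$, and $\dualdfaof{\gt^\star}$ is a ready-made candidate provided $\gt^\star$ is deterministic and commutation-closed.

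First I would check that $\gt^\star$ is deterministic and commutation-closed. Determinism holds by construction. For commutation-closedness, Proposition~\ref{prop:product-is-commutation-closed} gives that $\productof{\projectionof{\gt}}$ is commutation-closed, and Remark~\ref{rem:existential-universal} states that determinisation preserves both $\existentialmsclanguageof{\cdot}$ and $\universalmsclanguageof{\cdot}$, hence preserves their equality. Proposition~\ref{prop:commutation-closed-implies-complementable} then yields $\existentialmsclanguageof{\gt^\star} = \mscsetofmodel{\synchmodel} \setminus \existentialmsclanguageof{\dualdfaof{\gt^\star}}$. I would also evaluate $\existentialmsclanguageof{\gt^\star}$ by a short chain of identities: by Remark~\ref{rem:existential-universal} it equals $\existentialmsclanguageof{\productof{\projectionof{\gt}}}$; by commutation-closedness this equals $\msclanguageofcc{\productof{\projectionof{\gt}}}$; and by Proposition~\ref{prop:product-is-commutation-closed} this equals $\mscsofcfsms{\projectionof{\gt}}{\synchmodel}$, the synchronous MSC language of the projected system.

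The crux — and the only step that genuinely uses the hypothesis — is the equality $\existentialmsclanguageof{\gt} = \mscsofcfsms{\projectionof{\gt}}{\synchmodel}$. The inclusion from left to right holds unconditionally by Proposition~\ref{prop:universal-existential-synch-inclusion}. For the reverse inclusion I would invoke that $\gt$ is deadlock-free realisable in $\synchmodel$ with canonical witness $\cfsms = \projectionof{\gt}$: Condition~1 of realisability, recast as an MSC property through Proposition~\ref{prop:msc-version-of-cond1-of-realizability} (which applies since $\acommunicationmodel = \synchmodel$ trivially satisfies $\executionsofmodel{\synchmodel} \supseteq \executionsofmodel{\synchmodel}$), states precisely $\mscsofcfsms{\projectionof{\gt}}{\synchmodel} \subseteq \existentialmsclanguageof{\gt}$. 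Chaining these identities gives $\existentialmsclanguageof{\gt} = \existentialmsclanguageof{\gt^\star}$; substituting into the complementation identity above then yields $\existentialmsclanguageof{\gt} = \mscsetofmodel{\synchmodel} \setminus \existentialmsclanguageof{\dualdfaof{\gt^\star}}$, i.e. $\dualdfaof{\gt^\star} = \dualdfaof{\detof{\productof{\projectionof{\gt}}}}$ is a complement of $\gt$.

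The main obstacle is conceptual rather than computational: recognising that the whole argument hinges on turning the always-valid inclusion $\existentialmsclanguageof{\gt} \subseteq \mscsofcfsms{\projectionof{\gt}}{\synchmodel}$ into an equality, and that this upgrade is exactly what global-type conformance provides. It is worth noting that deadlock-freedom (Condition~2 of realisability) is never used in this argument; only conformance (Condition~1) is required. The construction works because realisability forces the commutation-closed Cartesian abstraction $\productof{\projectionof{\gt}}$ to have the same existential MSC language as $\gt$, so the duality complement of the abstraction transports back to $\gt$ for free.
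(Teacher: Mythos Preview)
Your proposal is correct and follows essentially the same route as the paper: both arguments use global-type conformance (Condition~1 of realisability) to establish $\existentialmsclanguageof{\gt}=\mscsofcfsms{\projectionof{\gt}}{\synchmodel}=\msclanguageofcc{\productof{\projectionof{\gt}}}$, then invoke Proposition~\ref{prop:commutation-closed-implies-complementable} on the (determinised) product. Your treatment is slightly more explicit about why determinisation preserves commutation-closedness (via Remark~\ref{rem:existential-universal}), and your observation that Condition~2 is never used is correct and worth noting.
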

\begin{proof}
    Let $\gt$ be a global type that is  deadlock-free realisable in $\synchmodel$.
    By Condition~(CC) of Definition~\ref{def:realisability}, 
    $$
    \executionsof{\projectionof{\gt}}{\synchmodel} = \executionsof{\gt}{\synchmodel}.
    $$
    If two sets of executions are equal, the corresponding sets of their MSCs are also equal, 
    and thus 
    $$
    \msclanguageof{\projectionof{\gt}}{\synchmodel}=\msclanguageof{\gt}{\synchmodel}.
    $$
    By Proposition~\ref{prop:product-is-commutation-closed}, the synchronous product $\productof{\projectionof{\gt}}$
    is a commutation-closed global type whose runs are exactly the synchronous executions of the projected system, so 
    $$\msclanguageofcc{\productof{\projectionof{\gt}}}=\msclanguageof{\projectionof{\gt}}{\synchmodel}=\existentialmsclanguageof{\gt}.$$

    Finally, by Proposition~\ref{prop:commutation-closed-implies-complementable},
    $\dualdfaof{\detof{\productof{\projectionof{\gt}}}}$ is a complement of $\productof{\projectionof{\gt}}$, thus of $\gt$.
\end{proof}

\begin{remark}
    In terms of complexity, complementation by duality is
    linear in the number of states of the global type $\gt$ (possibly after adding a sink state). However, the determined Cartesian abstraction 
    $\detof{\productof{\projectionof{\gt}}}$ involves in the worst case a doubly exponential blowup. 
\end{remark}

\subsection{Complementation by Renunciation}

In this section, we introduce another complementation procedure  for
global types with sender-driven choices, and more generally presenting a form of determinism in commutations; this construction is also linear, although it does not preserve determinism. We first recall the definition of sender-driven choice and commutation-determinism, then define our complementation procedure, and finally establish its correctness.

For a state $s$ of $\gt$, let $\choicesof{\gt}{s}$ be the set of arrows labeling outgoing transitions of $s$ in $\gt$.

\begin{definition}[Sender-Driven~\cite{stutz-phd}]\label{def:sender-driven}
    A global type $\gt$ is \emph{sender-driven} if it is deterministic and for every state $s$ of $\gt$,
    $$
    \choicesof{\gt}{s} = \{\marrow{p}{q_i}{m_i}\mid i=1,\ldots,n\}
    $$
    for some process $p$ and some processes $q_i$ and messages $m_i$.
\end{definition}

Stutz et al. introduced this assumption on global types in order to prove that $\ppmodel$-implementability (a notion close to realisability) is decidable~\cite{DBLP:conf/cav/LiSWZ23}. We slightly generalise this condition as the complementation procedure we are about to present also works in this larger setting.

\begin{definition}[Commutation-determinismic]
    A global type $\gt$ is commutation-deterministic if it is deterministic and for every state $s$ of $\gt$,
    for every two arrows $a,b\in\choicesof{\gt}{s}$, $a$ and $b$ do not commute.
\end{definition}

\begin{figure*}[ht]
        \begin{tikzpicture}
        \begin{scope}
            \node at (-1,0) {$\gt~=$};
            \node[state,initial,initial text={},initial distance=3mm] (s0) at (1,0) {$s_0$};
            \node[state] (s1) at (3.5,1) {$s_1$};
            \node[state,accepting] (s2) at (3.5,0)  {$s_2$};
            \node[state,accepting] (s3) at (7,1) {$s_3$};
            \draw[->] (s0) -- node[above,sloped] {$\gtlabel{p}{q}{m_1}$} (s1);
            \draw[->] (s0) -- node[below] {$\gtlabel{p}{q'}{m_2}$} (s2);
            \draw[->] (s1) -- node[above] {$\gtlabel{r}{r'}{m_3}$} (s3);
        \end{scope}
        \begin{scope}[xshift=8.5cm, yshift=1cm, scale=.5]
            \node at (0,0) {$M_1~=$};
            \draw[->] (1,1) node [above] {q}  -- (1,-1);
            \draw[->] (2,1) node [above] {p}  -- (2,-1);
            \draw[->] (3,1) node [above] {q'} -- (3,-1);
            \draw[->] (4,1) node [above] {r}  -- (4,-1);
            \draw[->] (5,1) node [above] {r'} -- (5,-1);
            \draw[->] (2,0) -- node [above] {$a_1$} (1,0);
            \draw[->] (4,0) -- node [above] {$a_3$} (5,0);
        \end{scope}
        \begin{scope}[xshift=9.5cm, yshift=-1cm, scale=.5]
            \node at (0,0) {$M_2~=$};
            \draw[->] (1,1) node [above] {q}  -- (1,-1);
            \draw[->] (2,1) node [above] {p}  -- (2,-1);
            \draw[->] (3,1) node [above] {q'} -- (3,-1);
            \draw[->] (4,1) node [above] {r}  -- (4,-1);
            \draw[->] (5,1) node [above] {r'} -- (5,-1);
            \draw[->] (2,0) -- node [above] {$a_2$} (3,0);
        \end{scope}
        \begin{scope}[yshift=-6cm]
            \node at (-1,0) {$\renun{\gt}~=$};
            \node[state,initial,initial text={},initial distance=3mm, accepting] (s0) at (1,0) {$s_0$};
            \node[state,accepting] (s1) at (1,-3) {$s_1$};
            \node[state,accepting] (s0bar) at (1,3)  {$\comp{s_0}$};
            \node[state,accepting] (s1bar) at (-2,-3) {$\comp{s_1}$};
            \node[state] (s0a1) at (5,0) {$s_0,a_1$};
            \node[state] (s0a2) at (5,3) {$s_0,a_2$};
            \node[state] (s0bara1) at (9,0) {$\comp{s_0},a_1$};
            \node[state] (s0bara2) at (9,3) {$\comp{s_0},a_2$};
            \node[state,accepting] (sacc) at (11,0) {$s_{\mathsf{acc}}$};
            \node[state] (s1a3) at (5,-3) {$s_1,a_3$};
            \node[state] (s1bara3) at (9,-3) {$\comp{s_1},a_3$};
            \draw[->] (s0) -- node[left] {$a_1$} (s1);
            \draw[->] (s0) -- node[left] {$\neg a_1\wedge \neg a_2$} (s0bar);
            \draw[->] (s0bar) edge [loop above] node[above] {$\neg a_1\wedge \neg a_2$} (s0bar);
            \draw[->] (s0) edge [bend right] node[above,sloped] {$\neg a_1\wedge \neg a_2\wedge\conflictingarrows{a_1}$} (s0bara1);
            \draw[->] (s0) -- node[above,sloped] {$\neg a_1\wedge \neg a_2\wedge\neg\conflictingarrows{a_1}$} (s0a1);
            \draw[->] (s0a1) edge [loop above] node[above] {$\neg a_1\wedge \neg a_2\wedge\neg\conflictingarrows{a_1}$} (s0a1);
            \draw[->] (s0a1) -- node[above,sloped] {$\neg a_1\wedge \neg a_2\wedge\conflictingarrows{a_1}$} (s0bara1);
            \draw[->] (s0bara1) edge [loop above] node[above] {$\neg a_1\wedge\neg a_2$} (s0bara1);
            \draw[->] (s0bara1) -- node[above,sloped] {$a_1$} (sacc);
            \draw[->] (s0) edge [bend left=9] node[above,sloped,pos=0.6] {$\neg a_1\wedge \neg a_2\wedge\conflictingarrows{a_2}$} (s0bara2);
            \draw[->] (s0) -- node[above,sloped] {$\neg a_1\wedge \neg a_2\wedge\neg\conflictingarrows{a_2}$} (s0a2);
            \draw[->] (s0) -- node[above,sloped] {$\neg a_1\wedge \neg a_2\wedge\neg\conflictingarrows{a_2}$} (s0a2);
            \draw[->] (s0a2) edge [loop above] node[above] {$\neg a_1\wedge \neg a_2\wedge\neg\conflictingarrows{a_2}$} (s0a2);
            \draw[->] (s0a2) -- node[above,sloped] {$\neg a_1\wedge \neg a_2\wedge\conflictingarrows{a_2}$} (s0bara2);
            \draw[->] (s0bara2) edge [loop above] node[above] {$\neg a_1\wedge\neg a_2$} (s0bara2);
            \draw[->] (s0bara2) -- node[above,sloped] {$a_2$} (sacc);
            \draw[->] (s1) -- node [above] {$\neg a_3$}(s1bar);
            \draw[->] (s1bar) edge[loop above] node[above] {$\neg a_3$} (s1bar);
            \draw[->] (s1) edge [bend right] node[above,sloped] {$\neg a_3\wedge \conflictingarrows{a_3}$} (s1bara3);
            \draw[->] (s1) -- node[above,sloped] {$\neg a_3\wedge\neg\conflictingarrows{a_3}$} (s1a3);
            \draw[->] (s1a3) edge [loop above] node[above] {$\neg a_3\wedge\neg\conflictingarrows{a_3}$} (s1a3);
            \draw[->] (s1a3) -- node[above,sloped] {$\neg a_3\wedge \conflictingarrows{a_3}$} (s1bara3);
            \draw[->] (s1bara3) edge [loop above] node[above] {$\neg a_3$} (s1bara3);
            \draw[->] (s1bara3) -- node[above,sloped] {$a_3$} (sacc);

            \draw[->] (sacc) edge [loop right] node[right] {$\Arrows$} (sacc);

        \end{scope}

    \end{tikzpicture}
    \caption{A sender-driven global type $\gt$, with $\existentialmsclanguageof{\gt}=\{M_1,M_2\}$, and its
            complement $\renun{\gt}$.
            For better readability, the states $s_2$ and $s_3$ (that are sink states in $\renun{\gt}$) have been pruned.
            $\conflictingarrows{a}$ denotes the set of arrows that do not commute with $a$.
            }\label{fig:example-of-sender-driven-gt-complementation}
\end{figure*}

Let us now formally define the complementation procedure for commutation-deterministic global types.

\begin{definition}[Complementation by renunciation]\label{def:complement-of-a-sender-driven-gt}
    Let $\gt$ be a commutation-deterministic global type. Let $S$ be the set of states of $\gt$, $\comp{S}\eqdef\{\comp{s}\mid s\in S\}$,
    and $S'=S\uplus\comp{S}$.
   $\renun{\gt}$ is the global type with set of states $S'\cup S'\times\Arrows \cup\{s_{\mathsf{acc}}\}$ defined as follows:
    \begin{itemize}
        \item the initial state of $\renun{\gt}$ is the initial state of $\gt$;
        \item let $F\subseteq S$ be the set of accepting states of $\gt$; the set of accepting states of 
        $\renun{\gt}$ is $(S\setminus F)\cup \comp{S}\cup\{s_{\mathsf{acc}}\}$;
        \item for any states $s,s'\in S$ of $\gt$, for any arrow $a\in \Arrows$,
        if $(s,a,s')$ is a transition in $\gt$, then $(s,a,s')$ 
        is a transition in $\renun{\gt}$;
        \item for any state $s$ of $\gt$, 
        for any arrow $a\not\in\choicesof{\gt}{s}$,
        $(s,a,\comp{s})$ (resp. $(\comp{s},a,\comp{s})$) 
        is a transition in $\renun{\gt}$;
        \item for any state $s$ of $\gt$,
        for any arrow $a\in\choicesof{\gt}{s}$,
        for any arrow $b\not\in\choicesof{\gt}{s}$ that commutes with $a$,
        $\big(s,b,(s,a)\big)$ 
        (resp. $\big((s,a),b,(s,a)\big)$) 
        is a transition in $\renun{\gt}$
        \item for any state $s$ of $\gt$,
        for any arrow $a\in\choicesof{\gt}{s}$,
        for any arrow $b\not\in\choicesof{\gt}{s}$ that does not commute with $a$,
        $\big(s,b,(\comp{s},a)\big)$ 
        (resp. $\big((s,a),b,(\comp{s},a)\big)$) 
        is a transition in $\renun{\gt}$
        \item for any state $s$ of $\gt$,
        for any arrow $a\in\choicesof{\gt}{s}$,
        for any arrow $b\not\in\choicesof{\gt}{s}$,
        $\big((\comp{s},a),b,(\comp{s},a)\big)$
        (resp. $\big((\comp{s},a),a,s_{\mathsf{acc}}\big)$)
        is a transition in $\renun{\gt}$;
        \item for any arrow $a$, $(s_{\mathsf{acc}},a,s_{\mathsf{acc}})$ is a transition in $\renun{\gt}$.
    \end{itemize}
\end{definition}

\begin{example}\label{ex:example-of-sender-driven-gt-complementation}
    Fig.~\ref{fig:example-of-sender-driven-gt-complementation} depicts a sender-driven global type
    and the complement computed according to Definition~\ref{def:complement-of-a-sender-driven-gt}.
\end{example}

Note that the number of states of $\renun{\gt}$ is linear in the number of states of $\gt$ (assuming the alphabet $\Arrows$ is fixed); it can also be observed that, using Boolean
expressions to label transitions, the size of $\renun{\gt}$ can also be kept linear in the size of $\gt$.

\begin{example}
    Let $\gt,\renun{\gt}, M_1, M_2$ be the global types and MSCs depicted in Fig.~\ref{fig:example-of-sender-driven-gt-complementation}.
    It is easy to verify that $M_2\not\in\existentialmsclanguageof{\renun{\gt}}$, because the only sequence of arrows
    $w_2=a_2$ such that $M_2=\mscof{w_2}$ is not in $\languageofnfa{\renun{\gt}}$.
    It can also be checked that $M_1\not\in\existentialmsclanguageof{\renun{\gt}}$, because the only two  sequences of arrows $w_{1,1}=a_1a_3$
    and $w_{1,2}=a_3a_1$ such that $\mscof{w_{1,1}}=\mscof{w_{1,2}}=M_1$ are not in $\languageofnfa{\renun{\gt}}$.
\end{example}

Intuitively, $\renun{\gt}$ describes a MSC $M$ that starts with a prefix that is in $\existentialmsclanguageof{\gt}$ up to a certain state $s$, at which point a "renunciation" of the choices of $\gt$ at $s$ occurs. 
There are two kinds of renunciation: definitive renunciation means that none of the arrows available in the choices will ever occur later.
Provisory renunciation means that at least one arrow in the choices will occur later, but not immediately: if $a$ is the first
arrow of the choice that occurs later, another arrow $b$ occurs before $a$ such that $b$ does not commute with $a$ and is not in the choices that
have been renunciated. The first kind of renunciation corresponds to moving to a state of the form $\comp{s}$, while the second kind corresponds to moving to a state of the form $(s,a)$ or $(\comp{s},a)$.

\begin{example}
    Consider the MSC $M_3=\mscof{a_1a_2'a_3}$, with $a_2'=\gtlabel{p}{q'}{m_2'}$ for some $m_2'\neq m_2$. Then $M_3$ is accepted by $\renun{\gt}$ because $w_{3,1}=a_1a_2'a_3$ is accepted in $\comp{s_1}$. However, the other sequence of arrows $w_{3,2}=a_3a_1a_2'$
    such that $\mscof{w_{3,2}}=M_3$ is not in $\languageofnfa{\renun{\gt}}$. More generally, $\renun{\gt}$ is \emph{not} commutation-closed, and for any MSC $M\not\in\existentialmsclanguageof{\gt}$, the
    sequence of arrows $w$ such that $M=\mscof{w}$ and $w\in\languageofnfa{\renun{\gt}}$ should be carefully constructed. The strategy
    consists in picking a sequence of arrows $w=w_1\cdot w_2$
    with the longest possible $w_1$ 
    in $\prefixclosureof{\languageofnfa{\gt}}$, i.e. "renunciating" to a choice as late as possible.
\end{example}

In order to formalise this intuition, we introduce a few notions.
Given an MSC $M$ and a state $s$ of a commutation-deterministic global type $\gt$,
we write $\nextarrow{M}{s}$ for the first arrow of $\choicesof{\gt}{s}$ that occurs in a sequence of arrows $w$
such that $\mscof{w}=M$. Note that $\nextarrow{M}{s}$ is not defined if $M$ contains no arrow of 
$\choicesof{\gt}{s}$; note also that $\nextarrow{M}{s}$, when defined, does not depend on the choice of the sequence of arrows 
$w$ such that $\mscof{w}=M$ (otherwise two arrows of $\choicesof{\gt}{s}$ would commute, contradicting the commutation-determinism of $\gt$).

\begin{example}
    Consider the MSC $M_4=\mscof{a_1a_2a_3}=\mscof{a_3a_1a_2}$, and the initial state $s_0$ of $\gt$ in Fig.~\ref{fig:example-of-sender-driven-gt-complementation}.
    Then $\nextarrow{M_4}{s_0}=a_1$.
\end{example}

We also define $\nextmsc{M}{s}$ as $\mscof{w}$ for some $w$ such
that $M=\mscof{\nextarrow{M}{s}w}$; intuitively, we "remove" the arrow $\nextarrow{M}{s}$ from $M$, provided it is not blocked by 
a previous non-commuting arrow. Note that $\nextmsc{M}{s}$ is not defined if 
in all sequences of arrows $w$ such that $\mscof{w}=M$, $\nextarrow{M}{s}$ never occurs first.

\begin{example}
    Consider the MSC $M_4=\mscof{a_1a_2a_3}$, and the state $s_0$ of $\gt$ in Fig.~\ref{fig:example-of-sender-driven-gt-complementation}.
    Then $\nextmsc{M_4}{s_0}=\mscof{a_2a_3}$.
    On the other hand, for the MSC $M_5=\mscof{a_4a_1a_2}$, with $a_4=\gtlabel{q}{q'}{m_4}$, $\nextmsc{M_5}{s_0}$
    is undefined, because $a_1$ is "blocked" by $a_4$.
\end{example}

\begin{lemma}\label{lem:nextarrow-and-nextmsc}
    Let $\gt$ be a commutation-deterministic global type with initial state $s_0$, and $M$ a non-empty MSC.
    Then $M\not\in\existentialmsclanguageof{\gt}$ if and only if one of the following holds:
    \begin{itemize}
        \item $\nextarrow{M}{s_0}$ is undefined, or 
        \item $\nextmsc{M}{s_0}$ is undefined, or
        \item $a=\nextarrow{M}{s_0}$, $s$ is the $a$-successor of $s_0$ in $\gt$,
        $M'=\nextmsc{M}{s_0}$ and 
        $M'\not\in\existentialmsclanguageof{\gt_s}$,
        where $\gt_s$ is the global type obtained from $\gt$ by setting the initial state to $s$.
    \end{itemize}
\end{lemma}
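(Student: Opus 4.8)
The plan is to recast the statement as a single ``unfold one arrow'' equivalence and then negate it. Concretely, I would prove that $M\in\existentialmsclanguageof{\gt}$ if and only if $\nextarrow{M}{s_0}$ is defined (call it $a$), $\nextmsc{M}{s_0}$ is defined (call it $M'$), and $M'\in\existentialmsclanguageof{\gt_s}$ for $s$ the $a$-successor of $s_0$. Negating this equivalence reproduces exactly the three disjuncts of the lemma, the third being well-formed precisely when $a$ and $M'$ are defined, which matches the negation. No induction is needed to prove the lemma itself; it is a one-step reduction of membership in $\gt$ to membership in $\gt_s$ on the strictly smaller MSC $M'$, and induction only enters when the lemma is later applied.

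Before the two directions I would isolate one structural fact: $w\mapsto\labeltomsc{w}$ is a monoid homomorphism, i.e. $\labeltomsc{u\cdot v}=\labeltomsc{u}\cdot\labeltomsc{v}$. This follows because $\labeltoexec{u\cdot v}=\labeltoexec{u}\cdot\labeltoexec{v}$ and, crucially, every receive in $\labeltoexec{w}$ is matched by the send immediately preceding it inside the same arrow; hence no message of $\labeltoexec{v}$ is sourced in $\labeltoexec{u}$, and the per-component source functions used in MSC concatenation coincide with $\source$ of the whole execution. As an immediate corollary, $\sim$ is a left-congruence: $w_1\sim w_2$ implies $a\cdot w_1\sim a\cdot w_2$. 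This congruence is what lets me swap representatives after peeling off the leading arrow, and I expect it to be the only step needing genuine care, since MSC concatenation could a priori introduce spurious cross-matchings.

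For the forward direction I would take an accepting word $w=a_1\cdots a_n\in\languageofnfa{\gt}$ with $\labeltomsc{w}=M$; as $M$ is non-empty, $n\geq 1$. The (unique, by determinism) accepting run starts at $s_0$, so its first transition gives $a_1\in\choicesof{\gt}{s_0}$; scanning $w$ left to right, the first arrow lying in $\choicesof{\gt}{s_0}$ is thus $a_1$ itself, and by the $w$-independence of $\nextarrow{\cdot}{\cdot}$ this forces $\nextarrow{M}{s_0}=a_1=a$. Since $M=\labeltomsc{a_1\cdot(a_2\cdots a_n)}$, the suffix $a_2\cdots a_n$ witnesses that $\nextmsc{M}{s_0}=\labeltomsc{a_2\cdots a_n}=M'$ is defined, and because the remaining run reads $a_2\cdots a_n$ from the $a$-successor $s$ of $s_0$ to an accepting state, we obtain $a_2\cdots a_n\in\languageofnfa{\gt_s}$ and hence $M'\in\existentialmsclanguageof{\gt_s}$.

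For the converse, given $a=\nextarrow{M}{s_0}$ and $M'=\nextmsc{M}{s_0}$ defined with $M'\in\existentialmsclanguageof{\gt_s}$, I would fix a witness $w''$ of $\nextmsc$ (so $M=\labeltomsc{a\cdot w''}$ and $\labeltomsc{w''}=M'$) and a word $w'\in\languageofnfa{\gt_s}$ with $\labeltomsc{w'}=M'$. Then $\labeltomsc{w'}=\labeltomsc{w''}$, so the left-congruence yields $\labeltomsc{a\cdot w'}=\labeltomsc{a\cdot w''}=M$; and since $a\in\choicesof{\gt}{s_0}$ has $a$-successor $s$ and $w'$ is accepted from $s$, the word $a\cdot w'$ is accepted by $\gt$, giving $M\in\existentialmsclanguageof{\gt}$. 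The only remaining case is when $M$ admits no representation $\labeltomsc{w}$ using an arrow of $\choicesof{\gt}{s_0}$: then $\nextarrow{M}{s_0}$ is undefined, so the first disjunct holds, and $M\notin\existentialmsclanguageof{\gt}$ holds as well, since any word accepted by $\gt$ starts with an arrow of $\choicesof{\gt}{s_0}$; thus the equivalence is preserved.
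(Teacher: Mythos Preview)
Your proposal is correct and follows essentially the same line as the paper's proof sketch: both hinge on the observation that any accepting word of $\gt$ must begin with an arrow of $\choicesof{\gt}{s_0}$, and that this first arrow is necessarily $\nextarrow{M}{s_0}$ by commutation-determinism. Your presentation via the positive biconditional (then negating) is cleaner and more complete than the paper's sketch, which only argues the right-to-left implication explicitly; your isolation of the monoid homomorphism and left-congruence of $\sim$ makes precise a step the paper leaves implicit in its third case, and the final paragraph on the case where no arrow of $\choicesof{\gt}{s_0}$ occurs is redundant once the biconditional is established but does no harm.
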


\begin{proof}[Proof (sketch)]
    If $\nextarrow{M}{s_0}$ is undefined, then $M$ contains no arrow of $\choicesof{\gt}{s_0}$, 
    so for every sequence of arrows $w$ such that $\mscof{w}=M$, $w$ does not start with
    an arrow of $\choicesof{\gt}{s_0}$, therefore $w\not\in\languageofnfa{\gt}$.
    Similarly, when $\nextmsc{M}{s_0}$ is undefined, every sequence of arrows $w$ such that $\mscof{w}=M$ does not start with
    an arrow of $\choicesof{\gt}{s_0}$, and again $w\not\in\languageofnfa{\gt}$.
    In the third case, assume by contradiction that $M\in\existentialmsclanguageof{\gt}$,
    i.e. there is a sequence of arrows $w$ such that $\mscof{w}=M$ and $w\in\languageofnfa{\gt}$.
    Then $w=aw'$ for some $a=\nextarrow{M}{s_0}$ and $w'\in\languageofnfa{\gt,s}$, with $\nextmsc{M}{s_0}=\mscof{w'}$. It follows that     
    $\nextmsc{M}{s_0}\in\existentialmsclanguageof{\gt_s}$, and the contradiction.
\end{proof}

\begin{restatable}{theorem}{thmsenderdrivenchoicecomplementation}\label{thm:sender-driven-choice-complementation}
    Assume $\gt$ is a commutation-deterministic global type,
    and let $\renun{\gt}$ denote the global type defined as in Def~\ref{def:complement-of-a-sender-driven-gt}.
    Then $\renun{\gt}$ is a complement of $\gt$.
\end{restatable}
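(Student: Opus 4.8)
The plan is to prove, by induction on the number of arrows of $M$, the following strengthened invariant: for every state $s$ of $\gt$ and every (necessarily synchronous) MSC $M$,
$$
M \in \existentialmsclanguageof{\renun{\gt}_s} \iff M \notin \existentialmsclanguageof{\gt_s},
$$
where $\gt_s$ (resp.\ $\renun{\gt}_s$) denotes $\gt$ (resp.\ $\renun{\gt}$) with its initial state reset to $s$. Instantiating this with $s$ the actual initial state, and using that both $\existentialmsclanguageof{\gt}$ and $\existentialmsclanguageof{\renun{\gt}}$ consist only of MSCs of the form $\mscof{w}$, hence lie in $\mscsetofmodel{\synchmodel}$, immediately gives $\existentialmsclanguageof{\renun{\gt}}=\mscsetofmodel{\synchmodel}\setminus\existentialmsclanguageof{\gt}$, which is the theorem. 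The backbone of the induction is Lemma~\ref{lem:nextarrow-and-nextmsc}, whose three cases I would put in exact correspondence with the three shapes of accepting runs of $\renun{\gt}_s$. For the base case ($M$ empty), $M\in\existentialmsclanguageof{\gt_s}$ iff $s\in F$, whereas (since $\renun{\gt}$ has no $\varepsilon$-transitions) the empty MSC is accepted from $s$ iff $s$ is accepting in $\renun{\gt}$, i.e.\ iff $s\notin F$; the equivalence follows.

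For the implication $M\notin\existentialmsclanguageof{\gt_s}\Rightarrow M\in\existentialmsclanguageof{\renun{\gt}_s}$ I invoke Lemma~\ref{lem:nextarrow-and-nextmsc}. If $\nextarrow{M}{s}$ is undefined, then $M$ contains no arrow of $\choicesof{\gt}{s}$, so any linearisation is read along $s\to\comp{s}\to\cdots\to\comp{s}$ and accepted in $\comp{s}$. If $\nextmsc{M}{s}$ is undefined with $a=\nextarrow{M}{s}$, I pick a linearisation $w=u\,a\,v$ in which $a$ precedes every other choice arrow, so $u$ consists only of non-choice arrows; as $a$ is not $\happensbeforestrict$-minimal it has an immediate predecessor $b$, which shares a process with $a$, hence does not commute with it and occurs in $u$, so reading $u$ (all non-choice, containing the non-commuting $b$) drives the run into $(\comp{s},a)$, reading $a$ moves to $s_{\mathsf{acc}}$, and $v$ is absorbed there. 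In the remaining case I use the $\gt$-transition $s\xrightarrow{a}s'$ together with the induction hypothesis applied to $M'=\nextmsc{M}{s}$ at $s'$, prepending $a$ to the accepting word obtained for $M'$; this step needs that $\sim$ is a congruence for prepending a fixed arrow, i.e.\ $\mscof{a\,w'}$ depends only on $a$ and $\mscof{w'}$, which follows from well-definedness of MSC concatenation.

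For the converse $M\in\existentialmsclanguageof{\renun{\gt}_s}\Rightarrow M\notin\existentialmsclanguageof{\gt_s}$ I take an accepting word $w$ with $\mscof{w}=M$ and inspect its run, using the observation that once the run leaves the $\gt$-part $S$ it can never return. If the first transition is a $\gt$-transition $s\xrightarrow{a}s'$, then $a=\nextarrow{M}{s}$ and the suffix is an accepting run of $\renun{\gt}_{s'}$ on $M'=\nextmsc{M}{s}$; the induction hypothesis gives $M'\notin\existentialmsclanguageof{\gt_{s'}}$, and Lemma~\ref{lem:nextarrow-and-nextmsc} (third case) yields $M\notin\existentialmsclanguageof{\gt_s}$. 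Otherwise the run enters the renunciation gadget at $s$ immediately: if it ends in $\comp{s}$ it has read only non-choice arrows, so $\nextarrow{M}{s}$ is undefined; if it ends in $s_{\mathsf{acc}}$ it must have passed through $(\comp{s},a)$ before reading $a$, which forces a non-commuting, hence $\happensbeforestrict$-earlier, arrow before $a$ in $w$, so $a$ is not $\happensbeforestrict$-minimal and $\nextmsc{M}{s}$ is undefined. In either gadget case Lemma~\ref{lem:nextarrow-and-nextmsc} again gives $M\notin\existentialmsclanguageof{\gt_s}$.

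The main obstacle I anticipate is the happens-before bookkeeping behind the $s_{\mathsf{acc}}$-runs: one must argue precisely that ``some non-commuting arrow is scheduled before $a$'' is equivalent to ``$a$ is not $\happensbeforestrict$-minimal'', i.e.\ to $\nextmsc{M}{s}$ being undefined. I would isolate this as a sublemma resting on two small facts: two arrows that do not commute share a process and are therefore $\happensbeforestrict$-comparable, and every immediate $\happensbeforestrict$-predecessor of an arrow shares a process with it (so does not commute). A secondary, more routine, point is the careful reading of the transition table of $\renun{\gt}$ to confirm that the gadget admits exactly the three run shapes above, that $S$ is never re-entered, and that the only way to reach $s_{\mathsf{acc}}$ is through some $(\comp{s},a)$; the commutation-determinism of $\gt$ is what makes $\nextarrow{M}{s}$ well defined and guarantees that at most one linearisation of $M$ stays inside $S$, so that these cases are genuinely exhaustive and mutually coherent.
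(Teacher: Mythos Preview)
Your proposal is correct and follows essentially the same approach as the paper: an induction on the size of $M$ proving $M\in\existentialmsclanguageof{\renun{\gt}_s}\iff M\notin\existentialmsclanguageof{\gt_s}$, driven by the three-case disjunction of Lemma~\ref{lem:nextarrow-and-nextmsc} and a structural analysis of the accepting runs of $\renun{\gt}$. The paper's sketch phrases the gadget analysis as two preliminary ``routine'' characterisations (acceptance from $\comp{s}$, resp.\ from $(s,a)$), whereas you inline the same reasoning inside the induction; your explicit isolation of the happens-before sublemma (non-commuting arrows share a process and are therefore $\happensbeforestrict$-comparable, and any immediate $\happensbeforestrict$-predecessor of an arrow shares a process with it) is exactly the point the paper leaves implicit.
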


\begin{proof}[Proof (sketch)]
    It is routine to check that an MSC $M$ is accepted by $\renun{\gt}$ 
    starting from a state $\comp{s}$ if and only if $\nextarrow{M}{s}$ is undefined. It is also routine to check that an MSC $M$ is accepted by $\renun{\gt}$
    starting from a state $(s,a)$ if $a=\nextarrow{M}{s}$ and $\nextmsc{M}{s}$ is undefined.
    It is then also routine, thanks to Lemma~\ref{lem:nextarrow-and-nextmsc}, to show by induction on the size of an MSC $M$ that
    $M\not\in\existentialmsclanguageof{\gt_s}$ if and only if $M\in\existentialmsclanguageof{\renun{\gt_s}}$.
\end{proof}

	\section{Decidability of Realisability}\label{sec:decidability}
	

In this section, we show that the realisability problem for complementable global types,
given with an explicit complement, is decidable in both the synchronous and the p2p model.
We first show the result for the synchronous model, which is simpler,
and then extend it to the p2p model using a characterisation of realisability in the p2p model.

\subsection{Realisability in the synchronous model}

\begin{theorem}
    \label{thm:decidability-of-implementability-in-synch}
    Given a global types $\gt$ and a complement $\comp{\gt}$ of $\gt$,
    it is decidable in PSPACE complexity whether  $\gt$ is deadlock-free realisable in $\synchmodel$.    
\end{theorem}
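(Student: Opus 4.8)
The plan is to use Proposition~\ref{prop:canonicity-of-projection} to reduce realisability to verifying the two conditions of Definition~\ref{def:realisability} for the canonical candidate $\cfsms = \projectionof{\gt}$, and then to decide each condition as an NFA (co-)reachability problem on a product automaton of exponential size, solved on the fly within polynomial working space. Throughout, write $P \eqdef \productof{\projectionof{\gt}}$, which by Proposition~\ref{prop:product-is-commutation-closed} is commutation-closed and satisfies $\mscsofcfsms{\projectionof{\gt}}{\synchmodel} = \existentialmsclanguageof{P} = \universalmsclanguageof{P}$.

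First I would handle global type conformance (Condition~1). By Proposition~\ref{prop:msc-version-of-cond1-of-realizability} together with the always-valid inclusion of Proposition~\ref{prop:universal-existential-synch-inclusion}, Condition~1 is equivalent to $\existentialmsclanguageof{P} \subseteq \existentialmsclanguageof{\gt}$. Since $\comp{\gt}$ is a complement we have $\existentialmsclanguageof{\gt} = \mscsetofmodel{\synchmodel} \setminus \existentialmsclanguageof{\comp{\gt}}$, and $\existentialmsclanguageof{P} \subseteq \mscsetofmodel{\synchmodel}$, so this inclusion is equivalent to the disjointness $\existentialmsclanguageof{P} \cap \existentialmsclanguageof{\comp{\gt}} = \emptyset$. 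The key step is to turn this MSC-language disjointness into a word-language disjointness: I claim $\existentialmsclanguageof{P} \cap \existentialmsclanguageof{\comp{\gt}} = \emptyset$ iff $\languageofnfa{P} \cap \languageofnfa{\comp{\gt}} = \emptyset$. The nontrivial direction uses commutation-closedness of $P$: if $M$ lies in both existential MSC-languages, pick a word $w \in \languageofnfa{\comp{\gt}}$ with $\mscof{w} = M$; since $M \in \existentialmsclanguageof{P} = \universalmsclanguageof{P}$, \emph{every} representative of $M$, and in particular $w$, belongs to $\languageofnfa{P}$, so $w$ witnesses $\languageofnfa{P} \cap \languageofnfa{\comp{\gt}} \neq \emptyset$; the converse is immediate by taking $\mscof{w}$ for a common word $w$. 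Finally, by definition of the synchronous product, $\languageofnfa{P} = \{w \mid \projofon{w}{p} \in \languageofnfa{\gt_p} \text{ for all } p \in \Procs\}$, so $\languageofnfa{P} \cap \languageofnfa{\comp{\gt}}$ is recognised by the product of $\comp{\gt}$ with the $\cardinalof{\Procs}$ projections $\gt_p$; a state of this product is a tuple of polynomial size, and its emptiness is a reachability query decidable in PSPACE by guessing an accepting run on the fly and appealing to $\mathrm{NPSPACE} = \mathrm{PSPACE}$ (Savitch).

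Next I would handle deadlock-freedom (Condition~2), which does not involve $\comp{\gt}$. Reading an arrow word synchronously coincides with running $P$, hence the configurations reachable under synchronous partial executions are exactly the states of $P$ reachable from its initial state. Using the execution-level characterisation of Remark~\ref{rem:equivalent-formulation-of-deadlock-free-cfsms} in the synchronous model (equivalently Proposition~\ref{prop:deadlock-free-as-a-property-on-mscs-for-p2p-and-synch}), $\projectionof{\gt}$ is deadlock-free in $\synchmodel$ iff every state of $P$ reachable from the initial state can itself reach an accepting state of $P$. I would decide the negation — there is a state $\vec\ell$ of $P$ that is reachable from the initial state yet from which no accepting state is reachable — by enumerating candidate configurations $\vec\ell$ (each of polynomial size) and, for each, running two PSPACE subroutines: forward reachability from the initial state to $\vec\ell$, and a co-reachability check that no accepting state is reachable from $\vec\ell$ (the complement of a reachability query, still in PSPACE since PSPACE is closed under complement). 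Reusing the working space across the enumeration keeps the whole test in PSPACE.

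Combining the two PSPACE tests decides realisability of $\gt$ in $\synchmodel$ in PSPACE. I expect the main obstacle to be the correctness of the Condition~1 reduction, and specifically the step that collapses the existential MSC-language intersection to a word-language intersection: this is precisely where commutation-closedness of the Cartesian abstraction $P$ is indispensable, since it forces $\existentialmsclanguageof{P}$ and $\universalmsclanguageof{P}$ to coincide so that a single representative word certifies membership; without it one is left with a genuine MSC-language inclusion, which is far harder. The remaining care is bookkeeping — ensuring that the implicit product automata, though exponentially large, are represented succinctly enough that reachability stays in PSPACE.
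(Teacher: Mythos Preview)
Your proposal is correct and follows essentially the same approach as the paper: reduce to the canonical candidate $\projectionof{\gt}$, express Condition~1 as emptiness of $\languageofnfa{\productof{\projectionof{\gt}}}\cap\languageofnfa{\comp{\gt}}$ via commutation-closedness of the Cartesian abstraction, and express Condition~2 as a reachability/co-reachability property of $\productof{\projectionof{\gt}}$, both checked in PSPACE by on-the-fly exploration of the exponential product. The only cosmetic difference is that where you argue the MSC-to-word reduction directly from commutation-closedness, the paper packages that step as an application of Lemma~\ref{lem:product-of-gt}.
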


\begin{proof}(Sketch)
    Condition~(CC) of Definition~\ref{def:realisability}
    is equivalent to $$\msclanguageofcc{\productof{\projectionof{\gt}}} \cap \existentialmsclanguageof{\comp{\gt}} = \emptyset.$$
    By Lemma~\ref{lem:product-of-gt}, this reduces to checking the emptiness of $\languageofnfa{\productof{\projectionof{\gt}}\otimes\comp{\gt}}$.
    The non-emptiness of the language of an NFA
    can be checked in non-deterministic logarithmic space in the size of the NFA,
    which yields a PSPACE algorithm in that case, as the exponential size global type  $\productof{\projectionof{\gt}}$
    can be lazily constructed while guessing a path to an accepting state.
    
    Assuming that Condition~(CC) of Definition~\ref{def:realisability} holds,
    the second condition of Definition~\ref{def:realisability}  is 
    equivalent to whether all control states 
    of $\productof{\projectionof{\gt}}$ that are reachable from the initial state
    can reach a final state, which again, for the same reason, can be checked in PSPACE.
\end{proof}

\subsection{Realisability in the $\ppmodel$ model}

We now consider the realisability problem in the $\ppmodel$ model.
Note that relaxing the communication model may both remove some deadlock situations,
but also introduce some new ones (because new configurations may be reachable).
So realisability in the $\ppmodel$ model 
and realisability in the $\synchmodel$ model could be expected to be two different properties,
none of which implies the other. Due to the the fact that global types enforce a form
of synchronous-like behaviour, however, it turns out that every global type that is 
realisable in the $\ppmodel$ model is also realisable in the $\synchmodel$ model.

\begin{theorem}\label{thm:pp-realizability-implies-synch-realizability}
    If $\gt$ is deadlock-free realisable in $\ppmodel$, then $\gt$ is deadlock-free realisable in $\synchmodel$.
\end{theorem}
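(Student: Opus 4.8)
The plan is to argue entirely at the level of MSC languages, using the MSC reformulations of both conditions of Definition~\ref{def:realisability}. Since global types are assumed pruned and realisability is taken with respect to $\cfsms=\projectionof{\gt}$ (Proposition~\ref{prop:canonicity-of-projection} and the standing convention), the hypothesis that $\gt$ is deadlock-free realisable in $\ppmodel$ unfolds, via Propositions~\ref{prop:msc-version-of-cond1-of-realizability} and~\ref{prop:deadlock-free-as-a-property-on-mscs-for-p2p-and-synch}, into the two facts
\begin{enumerate}
\item $\mscsofcfsms{\projectionof{\gt}}{\ppmodel}\subseteq\existentialmsclanguageof{\gt}$, and
\item $\mscsofcfsms{\acceptcompletion{\projectionof{\gt}}}{\ppmodel}\subseteq\prefixclosureof{\mscsofcfsms{\projectionof{\gt}}{\ppmodel}}$.
\end{enumerate}
Both propositions apply equally to $\synchmodel$, so I would reduce $\synchmodel$-realisability of $\gt$ to the same two inclusions with $\ppmodel$ replaced by $\synchmodel$, always with $\cfsms=\projectionof{\gt}$. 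The single recurring lever is monotonicity: since $\executionsofmodel{\synchmodel}\subset\executionsofmodel{\ppmodel}$, every synchronous execution of a system is also a $\ppmodel$ execution, hence $\mscsofcfsms{\cfsms'}{\synchmodel}\subseteq\mscsofcfsms{\cfsms'}{\ppmodel}$ for every system $\cfsms'$.

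Condition~1 for $\synchmodel$ is then immediate. Combining the monotonicity inclusion with fact~(1) gives $\mscsofcfsms{\projectionof{\gt}}{\synchmodel}\subseteq\mscsofcfsms{\projectionof{\gt}}{\ppmodel}\subseteq\existentialmsclanguageof{\gt}$, which by Proposition~\ref{prop:msc-version-of-cond1-of-realizability} is exactly global type conformance of $\projectionof{\gt}$ in $\synchmodel$.

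For Condition~2 in $\synchmodel$, by Proposition~\ref{prop:deadlock-free-as-a-property-on-mscs-for-p2p-and-synch} it suffices to prove $\mscsofcfsms{\acceptcompletion{\projectionof{\gt}}}{\synchmodel}\subseteq\prefixclosureof{\mscsofcfsms{\projectionof{\gt}}{\synchmodel}}$. I would take a partial synchronous MSC $M\in\mscsofcfsms{\acceptcompletion{\projectionof{\gt}}}{\synchmodel}$; by monotonicity it also lies in $\mscsofcfsms{\acceptcompletion{\projectionof{\gt}}}{\ppmodel}$, so fact~(2) yields a completion $M'\in\mscsofcfsms{\projectionof{\gt}}{\ppmodel}$ with $M\prefixorder M'$. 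The decisive step is that this $\ppmodel$-completion is in fact synchronous: by fact~(1), $M'\in\existentialmsclanguageof{\gt}$, and by Proposition~\ref{prop:universal-existential-synch-inclusion} we have $\existentialmsclanguageof{\gt}\subseteq\mscsofcfsms{\projectionof{\gt}}{\synchmodel}$, so $M'\in\mscsofcfsms{\projectionof{\gt}}{\synchmodel}$ and hence $M\in\prefixclosureof{\mscsofcfsms{\projectionof{\gt}}{\synchmodel}}$, as required.

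I expect the only non-routine point to be exactly this last step, which is where the warning in the introduction—that deadlock-freedom is not monotonic in the communication model—is neutralised. A priori, a synchronous partial MSC might be completable only by exploiting extra $\ppmodel$ buffering, and such a completion would not count as a synchronous one; the reason this cannot occur is that global conformance (fact~(1)) forces every \emph{accepting} system MSC into $\existentialmsclanguageof{\gt}$, and every MSC of the global type is synchronous by Proposition~\ref{prop:universal-existential-synch-inclusion}. Thus the completion witnessing $\ppmodel$-deadlock-freedom is automatically synchronous, and this interplay between conformance and the synchronous nature of $\existentialmsclanguageof{\gt}$ is the crux; the remainder is bookkeeping with the MSC-level reformulations already established.
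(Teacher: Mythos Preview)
Your proof is correct and follows essentially the same approach as the paper. The only cosmetic difference is that the paper, after establishing Condition~1, rewrites Condition~2 once and for all as $\mscsofcfsms{\projectionof{\acceptcompletion{\gt}}}{\acommunicationmodel}\subseteq\prefixclosureof{\existentialmsclanguageof{\gt}}$ (replacing the right-hand side via the equality $\mscsofcfsms{\projectionof{\gt}}{\acommunicationmodel}=\existentialmsclanguageof{\gt}$), so that the same target works for both models and the monotonicity inclusion finishes immediately; you instead chase an element and use fact~(1) together with Proposition~\ref{prop:universal-existential-synch-inclusion} to pull the $\ppmodel$-completion back into $\mscsofcfsms{\projectionof{\gt}}{\synchmodel}$, which amounts to the same thing.
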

\iflong
\begin{proof}
Assume that $\gt$ is $\ppmodel$-realisable. We show that $\gt$ is $\synchmodel$-realisable by verifying the 
two conditions of Definition~\ref{def:realisability}.   
\begin{itemize}
\item 
    Since $\executionsofmodel{\synchmodel}\subseteq\executionsofmodel{\ppmodel}$, we have
    $$\mscsofcfsms{\projectionof{\gt}}{\synchmodel}\subseteq\mscsofcfsms{\projectionof{\gt}}{\ppmodel}.$$
    By Proposition~\ref{prop:msc-version-of-cond1-of-realizability} and the hypothesis that 
    $\gt$ is $\ppmodel$-realisable, we have that
    $$\mscsofcfsms{\projectionof{\gt}}{\ppmodel} \subseteq \existentialmsclanguageof{\gt}.
    $$
    Therefore, we have
    $$\mscsofcfsms{\projectionof{\gt}}{\synchmodel} \subseteq \existentialmsclanguageof{\gt}$$
    and by Proposition~\ref{prop:msc-version-of-cond1-of-realizability}, $\gt$ satisfies condition~1 of Definition~\ref{def:realisability}
    for being deadlock-free realisable in $\synchmodel$.

\item By Proposition~\ref{prop:deadlock-free-as-a-property-on-mscs-for-p2p-and-synch} and condition~1, the condition~2 of
    Definition~\ref{def:realisability} is equivalent to
    $$
        \mscsofcfsms{\projectionof{\acceptcompletion{\gt}}}{\acommunicationmodel}\subseteq\prefixclosureof{\existentialmsclanguageof{\gt}}
    $$
    when $\acommunicationmodel$ is either $\ppmodel$ or $\synchmodel$.
    We can therefore assume that 
    $$
        \mscsofcfsms{\projectionof{\acceptcompletion{\gt}}}{\ppmodel}\subseteq\prefixclosureof{\existentialmsclanguageof{\gt}}
    $$
    and we show that 
    $$
        \mscsofcfsms{\projectionof{\acceptcompletion{\gt}}}{\synchmodel}\subseteq\prefixclosureof{\existentialmsclanguageof{\gt}}
    $$
    This follows from the fact that
    $$
        \mscsofcfsms{\projectionof{\acceptcompletion{\gt}}}{\synchmodel}\subseteq\mscsofcfsms{\projectionof{\acceptcompletion{\gt}}}{\ppmodel}
    $$
    because $\executionsofmodel{\synchmodel}\subseteq\executionsofmodel{\ppmodel}$.
\end{itemize}
\end{proof}

\fi

We now show a "converse" of Theorem~\ref{thm:pp-realizability-implies-synch-realizability},
which precisely states which additional conditions are needed to ensure that a global type that
is deadlock-free realisable in the $\synchmodel$ model is also deadlock-free realisable in the $\ppmodel$ model.

\begin{restatable}[Reduction to realisability in $\synchmodel$]{theorem}{maintheoremrealisability}
    \label{thm:main-theorem-realisability}
        A global type $\gt$ is  deadlock-free realisable in $\ppmodel$
        if and only if the following four conditions hold:
        \begin{enumerate}
            \item $\msclanguageof{\projectionof{\gt}}{\ppmodel}\subseteq\prefixclosureof{\mscsetofmodel{\synchmodel}}$
            \item $\projectionof{\gt}$ is orphan-free in $\ppmodel$, 
            \item $\mscsofcfsms{\projectionof{\acceptcompletion{\gt}}}{\ppmodel}\subseteq\prefixclosureof{\mscsofcfsms{\projectionof{\gt}}{\ppmodel}}$, and 
            \item $\gt$ is deadlock-free realisable in $\synchmodel$.
        \end{enumerate}  
\end{restatable}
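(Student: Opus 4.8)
The plan is to prove both directions by systematically unfolding Definition~\ref{def:realisability} using the MSC-level characterisations already established, namely Proposition~\ref{prop:msc-version-of-cond1-of-realizability} (global type conformance as an MSC property) and Proposition~\ref{prop:deadlock-free-as-a-property-on-mscs-for-p2p-and-synch} (deadlock-freedom as an MSC property). Since we work with the canonical candidate $\cfsms=\projectionof{\gt}$ (justified by Proposition~\ref{prop:canonicity-of-projection}), both realisability notions reduce entirely to inclusions between MSC languages, so the whole argument can be carried out at the level of MSCs.

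\emph{Forward direction.} Assume $\gt$ is deadlock-free realisable in $\ppmodel$. Condition~4 is exactly Theorem~\ref{thm:pp-realizability-implies-synch-realizability}, so I would simply invoke it. Condition~3 is the MSC reformulation of $\ppmodel$-deadlock-freedom given by Proposition~\ref{prop:deadlock-free-as-a-property-on-mscs-for-p2p-and-synch}, hence it holds directly. For Condition~2, orphan-freedom, I would argue that if $\projectionof{\gt}$ admitted an orphan $\ppmodel$-execution, then by deadlock-freedom this partial execution would extend to an accepting one, and since $\existentialmsclanguageof{\gt}\subseteq\mscsetofmodel{\synchmodel}$ consists of orphan-free (synchronous-like) MSCs, the conformance Condition~1 of Definition~\ref{def:realisability} would be violated; thus orphan-freedom follows from conditions~1 and~2 of Definition~\ref{def:realisability}. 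Condition~1 of the theorem, $\msclanguageof{\projectionof{\gt}}{\ppmodel}\subseteq\prefixclosureof{\mscsetofmodel{\synchmodel}}$, should follow because global type conformance forces $\mscsofcfsms{\projectionof{\gt}}{\ppmodel}\subseteq\existentialmsclanguageof{\gt}\subseteq\mscsetofmodel{\synchmodel}$, and taking prefixes preserves this.

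\emph{Backward direction.} This is the substantial direction. Assuming the four conditions, I must verify the two conditions of Definition~\ref{def:realisability} for $\acommunicationmodel=\ppmodel$ with $\cfsms=\projectionof{\gt}$. For global-type conformance (Proposition~\ref{prop:msc-version-of-cond1-of-realizability}), I need $\mscsofcfsms{\projectionof{\gt}}{\ppmodel}\subseteq\existentialmsclanguageof{\gt}$. The idea is to take any $M\in\mscsofcfsms{\projectionof{\gt}}{\ppmodel}$ and use Condition~1 to place $M$ (and its prefixes) inside $\prefixclosureof{\mscsetofmodel{\synchmodel}}$, Condition~2 to rule out orphans, and then Condition~4 ($\synchmodel$-realisability, which gives $\mscsofcfsms{\projectionof{\gt}}{\synchmodel}=\existentialmsclanguageof{\gt}$) to conclude that $M$ is accounted for by $\gt$; the crucial point is to show that a $\ppmodel$-MSC of the projected system which is orphan-free and every prefix of which is $\synchmodel$-linearisable must in fact be $\synchmodel$-linearisable, so that it lands in $\existentialmsclanguageof{\gt}$. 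Deadlock-freedom (Proposition~\ref{prop:deadlock-free-as-a-property-on-mscs-for-p2p-and-synch}) is then precisely Condition~3.

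\emph{Main obstacle.} I expect the hard part to be the step in the backward direction showing that an orphan-free $\ppmodel$-MSC of $\projectionof{\gt}$ whose prefixes are all $\synchmodel$-linearisable is itself $\synchmodel$-linearisable and hence conforms to $\gt$. This is where the ``synchronous-like'' rigidity of global types must be exploited, and is almost certainly the place where the cited results on RSC systems~\cite{germerie-phd} (mentioned in the introduction as underpinning the non-trivial $\ppmodel$ decidability result) enter: conditions~1 and~2 together are designed to force the $\ppmodel$ behaviour of the projection to coincide, at the MSC level, with realisable-synchronous behaviour, and turning this intuition into a rigorous reduction is the technical crux. The remaining bookkeeping — combining the inclusions and invoking the two characterisation propositions — should be routine.
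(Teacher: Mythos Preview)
Your plan matches the paper's proof almost exactly: both directions are handled by unfolding Definition~\ref{def:realisability} via Propositions~\ref{prop:msc-version-of-cond1-of-realizability} and~\ref{prop:deadlock-free-as-a-property-on-mscs-for-p2p-and-synch}, invoking Theorem~\ref{thm:pp-realizability-implies-synch-realizability} for Condition~4, and reading Condition~3 off directly as the MSC form of deadlock-freedom.

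Two small corrections. First, in the forward direction your argument for orphan-freedom detours through deadlock-freedom unnecessarily: an execution $e\in\executionsof{\projectionof{\gt}}{\ppmodel}$ is already accepting, so conformance alone gives $\mscof{e}\in\existentialmsclanguageof{\gt}\subseteq\mscsetofmodel{\synchmodel}$, which is orphan-free. Second, and more importantly, you overestimate the ``main obstacle''. The step ``orphan-free $M\in\prefixclosureof{\mscsetofmodel{\synchmodel}}$ implies $M\in\mscsetofmodel{\synchmodel}$'' is elementary: take $M'\in\mscsetofmodel{\synchmodel}$ with $M\preceq M'$ and a synchronous linearisation $e'$ of $M'$; restricting $e'$ to the events of $M$ yields a synchronous linearisation of $M$, because (i) $M$ being a prefix means each process's events in $M$ are downward-closed along its timeline, so process order is preserved, and (ii) orphan-freedom ensures that every send/receive pair of $M$ stays adjacent in the restriction. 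No RSC machinery is needed here; the results of~\cite{germerie-phd} enter only \emph{after} this theorem, to establish decidability of Conditions~1--3 (Theorem~\ref{thm:decidability-of-implementability-in-p2p}), not to prove the characterisation itself. Once this lemma is in hand, Conditions~1 and~2 give $\mscsofcfsms{\projectionof{\gt}}{\ppmodel}=\mscsofcfsms{\projectionof{\gt}}{\synchmodel}$, and Condition~4 closes the conformance argument exactly as you outline.
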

\iflong
\begin{proof}
We show both sides in turn:
\begin{description}
    \item[$\Rightarrow$]
    We first show that the four conditions are necessary.
    Let $\gt$ be a $\ppmodel$-realisable global type.
    \begin{enumerate}
        \item        
            By Condition~2 of Definition~\ref{def:realisability},
            $\projectionof{\gt}$ is deadlock-free in $\ppmodel$.
            By Proposition~\ref{prop:deadlock-free-as-a-property-on-mscs-for-p2p-and-synch},
            \[
                \ \ \ \ \ \ \ \mscsofcfsms{\projectionof{\acceptcompletion{\gt}}}{\ppmodel}\subseteq\prefixclosureof{\mscsofcfsms{\projectionof{\gt}}{\ppmodel}}.
            \]
            By Condition~1 of Definition~\ref{def:realisability},
            \[
                \mscsofcfsms{\projectionof{\gt}}{\ppmodel} = \existentialmsclanguageof{\gt} \subseteq \mscsetofmodel{\synchmodel}.
            \]
            Therefore $\mscsofcfsms{\projectionof{\acceptcompletion{\gt}}}{\ppmodel}\subseteq\prefixclosureof{\mscsetofmodel{\synchmodel}}$.
   
        \item Let $e\in\executionsof{\projectionof{\gt}}{\ppmodel}$;
        we show that $e$ is orphan-free.
        By Condition~1 of Definition~\ref{def:realisability},
        $\mscof{e}\in\existentialmsclanguageof{\gt}$, therefore
        $\mscof{e}$ is synchronous, and in particular orphan-free.
       
        \item $\mscsofcfsms{\projectionof{\acceptcompletion{\gt}}}{\ppmodel}\subseteq\prefixclosureof{\mscsofcfsms{\projectionof{\gt}}{\ppmodel}}$ 
            by Condition~2 of Definition~\ref{def:realisability} and Proposition~\ref{prop:deadlock-free-as-a-property-on-mscs-for-p2p-and-synch}.

        \item  Follows from Lemma~\ref{thm:pp-realizability-implies-synch-realizability} 
    \end{enumerate}
\item[$\Leftarrow$]    Next we  show that the four conditions are sufficient.
    Let $\gt$ be a global type that verifies the four conditions.
    We prove that $\gt$ is $\ppmodel$-realisable as it satisfies the two conditions of Definition~\ref{def:realisability}.
    \begin{description}
        \item[Condition 1]
          By Proposition~\ref{prop:msc-version-of-cond1-of-realizability},
          we show that 
          \[
          \mscsofcfsms{\projectionof{\gt}}{\ppmodel} \subseteq \existentialmsclanguageof{\gt}.
          \]
          By Condition~1 of Theorem~\ref{thm:main-theorem-realisability},
          \[
            \mscsofcfsms{\projectionof{\gt}}{\ppmodel} = \mscsofcfsms{\projectionof{\gt}}{\synchmodel}
          \]
          and by Condition~4 of Theorem~\ref{thm:main-theorem-realisability} and Proposition~\ref{prop:msc-version-of-cond1-of-realizability},
          we have that  
          \[
            \mscsofcfsms{\projectionof{\gt}}{\synchmodel} \subseteq \existentialmsclanguageof{\gt}
          \]
          which concludes the proof of Condition~1.
          \item[Condition 2]
        follows from Proposition~\ref{prop:deadlock-free-as-a-property-on-mscs-for-p2p-and-synch}
        and Condition~3 of Theorem~\ref{thm:main-theorem-realisability}.
    \end{description}
\end{description}
\end{proof}
    
\fi

A system $\system$ with 
$\msclanguageof{\system}{\ppmodel}\subseteq
\prefixclosureof{\mscsetofmodel{\synchmodel}}$
is called \emph{realisable with synchronous communications} (RSC) 
after Germerie~\cite{germerie-phd}. Germerie
showed that the RSC property is decidable and that
whether a RSC system may reach a regular set of configurations is
also decidable. As a consequence, conditions~1 and 2 of
Theorem~\ref{thm:main-theorem-realisability} are decidable.
Assuming that $\projectionof{\gt}$ is RSC, 
$\mscsofcfsms{\projectionof{\acceptcompletion{\gt}}}{\ppmodel}$ can be coded as an effective regular set of
executions, as well as $\prefixclosureof{\mscsofcfsms{\projectionof{\gt}}{\ppmodel}}$,
and Condition~3 of Theorem~\ref{thm:main-theorem-realisability} reduces to checking the inclusion of
two NFAs, which can be checked in PSPACE using a lazy construction of the NFAs.

From these results we can conclude that it is decidable whether a given complementable global type,
together with its complement, is deadlock-free realisable in the $\ppmodel$ model.

\begin{theorem}\label{thm:decidability-of-implementability-in-p2p}   
   Given a complementable global type $\gt$ and a complement $\comp{\gt}$, 
   it is decidable whether $\gt$ is deadlock-free realisable in $\ppmodel$.
\end{theorem}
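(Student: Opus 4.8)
The plan is to reduce the question to the four conditions of Theorem~\ref{thm:main-theorem-realisability}, which characterise $\ppmodel$-realisability of $\gt$, and then argue that each of these conditions is individually decidable from the inputs $\gt$ and $\comp{\gt}$. Since $\gt$ is deadlock-free realisable in $\ppmodel$ if and only if all four conditions hold, deciding their conjunction settles the problem. The four conditions come in three flavours: two of them (Conditions~1 and~2) are properties of the projected system $\projectionof{\gt}$ in $\ppmodel$, one (Condition~3) is an inclusion between two $\ppmodel$-reachable MSC languages, and the last (Condition~4) is synchronous realisability of $\gt$, which is the only place where the complement $\comp{\gt}$ is consumed.

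First I would dispatch Conditions~1 and~2. Condition~1 states precisely that $\projectionof{\gt}$ is RSC, which is decidable by Germerie's results~\cite{germerie-phd}. Condition~2, orphan-freedom of $\projectionof{\gt}$ in $\ppmodel$, amounts to the non-reachability of a configuration in which every machine is in an accepting state while some channel is non-empty; this is a regular set of configurations, and whether an RSC system reaches a regular set of configurations is decidable, again by Germerie. Crucially, these steps must be carried out in order: the decidability argument for Condition~3 (and the regular-reachability argument for Condition~2) presupposes that $\projectionof{\gt}$ is RSC, so the algorithm first tests Condition~1 and, if it fails, already reports that $\gt$ is not $\ppmodel$-realisable. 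Assuming RSC holds, both $\mscsofcfsms{\projectionof{\acceptcompletion{\gt}}}{\ppmodel}$ and $\prefixclosureof{\mscsofcfsms{\projectionof{\gt}}{\ppmodel}}$ can be coded as effective regular sets of executions, so that Condition~3 becomes an inclusion of two NFAs, checkable in PSPACE via a lazy construction.

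It remains to decide Condition~4, namely that $\gt$ is deadlock-free realisable in $\synchmodel$. This is exactly the problem solved by Theorem~\ref{thm:decidability-of-implementability-in-synch}, and it is the only step that uses the complement $\comp{\gt}$: the synchronous conformance check reduces to the emptiness of $\languageofnfa{\productof{\projectionof{\gt}}\otimes\comp{\gt}}$, which requires $\comp{\gt}$ in order to express the MSCs \emph{not} described by $\gt$. Combining the four decision procedures yields the result. The main obstacle is not any single step in isolation but the reliance on Germerie's RSC machinery for Conditions~1--3, together with the need to respect the dependency ordering (RSC must be established before the regular-set encodings underlying Conditions~2 and~3 become meaningful); the synchronous ingredient is comparatively routine once Theorem~\ref{thm:decidability-of-implementability-in-synch} is available, and it is precisely there that the hypothesis of being given an explicit complement is used.
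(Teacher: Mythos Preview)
Your proposal is correct and follows essentially the same route as the paper: reduce $\ppmodel$-realisability to the four conditions of Theorem~\ref{thm:main-theorem-realisability}, decide Conditions~1--3 via Germerie's RSC machinery (with the same dependency ordering you note, first establishing RSC and then exploiting the regular encodings for orphan-freedom and the inclusion in Condition~3), and decide Condition~4 via Theorem~\ref{thm:decidability-of-implementability-in-synch}, which is the unique point where the explicit complement $\comp{\gt}$ is used.
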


	\section{Concluding remarks} \label{sec:concl}	

\begin{figure}
	\centering
	\begin{tikzpicture}
    \node[draw, ellipse, minimum width=8cm, minimum height=6cm, align=center] at (0,4.5) {};
    \node[draw, ellipse, minimum width=7cm, minimum height=5cm, align=center] at (0,4) {};
    \node[draw, ellipse, minimum width=5cm, minimum height=3cm, align=center] at (0,3) {};
    \node[draw, ellipse, minimum width=4cm, minimum height=2cm, align=center] at (0,2.5) {};
    \node at (0,6.8) {complementable global types};
    \node at (0,5.5) {\begin{tabular}{c}commutation-closed\\ a.k.a. consistently regular\cite{AALBERSBERG19881} \\ a.k.a. trace recognizable~\cite{DBLP:conf/ershov/Zielonka89}\end{tabular}};
    \node at (0,4) {realisable in $\synchmodel$};
    \node at (0,2.5) {realisable in $\ppmodel$};
\end{tikzpicture}
	\caption{Hierarchy of semantic properties of global types}\label{fig:gt-hierarchy}
\end{figure}

In this paper, we investigated the realisability and complementability of MPSTs across two canonical communication models: synchronous and asynchronous ($\ppmodel$). 
We reduce the task of checking realisability in the complex asynchronous setting to the simpler synchronous case, as long as the initial system satisfies key properties such as deadlock-freedom, absence of orphan messages, and a form of "synchronisability" (RSC property~\cite{germerie-phd}). Said differently, we showed that global types that are realisable in the $\ppmodel$ model form a recursive subclass of the ones realisable in $\synchmodel$. 
We also showed that this latter class itself is a subclass of the class of global types that are equivalent (in terms of MSC semantics) to a commutation-closed global type,
and these form a subclass of the complementable ones (see Figure~\ref{fig:gt-hierarchy}).
Among others, it suggests that any framework enforcing "by construction" the realisability of 
global types "hides" a complementation procedure for the global types embraced by this framework.

We explored the complementability of global types, looking for complementation procedures.
We first observed that the complementation of a global type is not always possible, but we 
proposed several complementation procedures for various subclasses of global types, among which the one of sender-driven global types. For each of these subclasses, we derive a decision procedure for realisability in $\synchmodel$ with PSPACE complexity, and for 
realisability in $\ppmodel$ with at most EXPSPACE complexity. Figure~\ref{fig:complementation-procedures-summary} summarises these results.

\begin{figure*}[ht]
\begin{center}
{\small
\begin{tabular}{|c|c|c|c|}
\hline
\textbf{Class of global types} & \textbf{Complementation procedure} & \begin{tabular}{c}\textbf{Size of}\\ \textbf{complement} \end{tabular} & \textbf{$\synchmodel$-realisability} \\
\hline
$|\Procs|\leq 3$ & $\gt:det\mapsto \dualdfaof{\gt}:det$ & linear & PSPACE \\
\hline
commutation-closed & $\gt:det\mapsto \dualdfaof{\gt}:det$ & linear & PSPACE \\
\hline
$\acommunicationmodel$-realisable & $\gt:ndet\mapsto \dualdfaof{\detof{\productof{\projectionof{\gt}}}}:det$ & \begin{tabular}{c}doubly\\ exponential \end{tabular}& constant time \\
\hline
\begin{tabular}{c}sender-driven \\ choices \end{tabular}
& $\gt:det\mapsto \renun{\gt}:ndet$ & linear & PSPACE \\
\hline
\begin{tabular}{c}commutation \\ deterministic \end{tabular}& $\gt:det\mapsto \renun{\gt}:ndet$ & linear & PSPACE \\
\hline
\end{tabular}
}
\end{center}
\caption{Summary of complementation procedures for global types}\label{fig:complementation-procedures-summary}
\end{figure*}  

We believe that our results not only improve theoretical clarity but also are amenable to generalisation. This paper represents a first step towards a parametric framework for realisability checking, with potential applicability across diverse communication models. Throughout the paper, we have made a deliberate effort to explicitly state the hypotheses (related to the communication model) under which our theorems hold. 
A particularly critical assumption is that the communication models in question must be causally closed (see Definition \ref{def:causally-closed-communication-model}). Based on this, we conjecture that our results can be extended to models such as bag and causally ordered systems, both of which satisfy causal closure. In contrast, models like mailbox  or those based on bounded FIFO channels lack this property and therefore require specialised analysis. We aim to explore these cases further and, ultimately, to develop a more comprehensive framework in future work.

We introduced the notion of commutation-determinism generalising the notion of sender-driven global types.
It is unclear whether this generalisation may be immediately useful, like capturing "real" global types. However, we believe it is a valuable idea as it may prompt further generalisations;
for instance, a reader influenced by categories may think about "time reversing" transformations, and
how global types that are "commutation-codeterministic" (their mirror language is recognized by a commutation-deterministic automaton)
may be complemented.

\iflong
Apart from these generalisations,  we aim to extend the framework to cover more expressive classes of global types, including those that go beyond synchronisable MSCs. We also plan to revisit results on the realisability of MSCs, possibly linking criteria such as send-validity and receive-validity \cite{DBLP:conf/cav/LiSWZ23,DBLP:conf/ecoop/Stutz23} to our synchronous realisability conditions. This opens the door to formally investigating whether global types that are realisable in the synchronous model satisfy send-validity, or if this is a weaker or stronger condition—this remains an open and promising question.
\fi


\subsubsection*{\bf Related works}
We conclude with a brief excursus on related works. Naturally, the most closely related lines of research are those concerning MPST. Comprehensive surveys on the topic are available in \cite{Coppo2015, DBLP:conf/icdcit/YoshidaG20}.
It is noteworthy that in standard MPST literature, realisability—often referred to as implementability—is typically addressed through syntactic restrictions on global types. In most cases, a global type is  implementable if a projection function exists. These syntactic constraints, along with the existence of the projection, imply that the global type is sender-driven, thus placing it within one of the complementable classes discussed in this paper.
In contrast, a recent paper by Scalas and Yoshida~\cite{DBLP:journals/pacmpl/ScalasY19} offers a novel perspective. Their approach emphasizes session fidelity and deliberately abandons syntactic restrictions on session types. Instead, they adopt semantic criteria—validated through model checking—to establish implementability, marking a significant shift from the conventional syntactic paradigm or our approach.
Although, the comparison is difficult as the underlying models are substantially different, it is worth mentioning \cite{10.1145/3678232.3678245} that discusses projectability in the context of synchronous MPST. 

The connection between communicating automata and behavioural types were 
first explored by Villard \cite{villard-phd} for the binary case and Deniélou and Yoshida in \cite{DBLP:conf/esop/DenielouY12} for MPSTs; the connection between HMSCs and MPST was
initiated by Stutz~\cite{DBLP:conf/ecoop/Stutz23,stutz-phd}. In this work, we slightly push further this
connection expliciting the MSC-based semantics of global types that remained somehow implicit in Stutz's work.

In~\cite{DBLP:conf/cav/LiSWZ23,DBLP:conf/ecoop/Stutz23}, the authors revisited
the theory of MPST projections through the lenses of realisability
of MSCs, and proposed two criteria, called \emph{send validity} and \emph{receive validity},
that yield the first sound and complete
characterisation of MPST implementable in the context of
the $\ppmodel$ communication model, provided the global type is sender-driven.
In particular, the second criteria, receive validity,
is strongly justified by the choice of the communication model and is unsound for
other communication models beyond $\ppmodel$. 
An important difference, though, is that these works address the \emph{implementability} problem,
that defines deadlock-freedom as the absence of situations where all processes are blocked. Unlike
deadlock-free realisability, implementability
does not guarantee the absence of unspecified receptions or orphan messages immediately from its definition, 
but only induces it "as a side effect", at least for sender-driven global types. 
 
Realisability was introduced by Alur~\emph{et al} \cite{DBLP:journals/tcs/AlurEY05} with early undecidability results, later simplified by Lohrey \cite{DBLP:journals/tcs/Lohrey03}, showing in particular
that deadlock-free realisability is decidable for bounded communications (a result recently generalised by Bollig~\emph{et al}~\cite{DBLP:conf/apn/BolligFG25}), but undecidable with 5 participants. Our results show that  realisability is decidable with 3 participants, which leaves a gap, even if Alur~\emph{et al} showed that \emph{weak} realisability is undecidable for 4 participants with
$\ppmodel$ communications. 
Guanciale and Tuosto \cite{DBLP:journals/jlap/GuancialeT19} studied realisability on pomsets.

Mazurkiewicz traces~\cite{DBLP:books/ws/95/DR1995} are words over partially commutative alphabets. Synchronous MSCs (but not asynchronous ones) can be seen as Mazurkiewicz traces over the alphabet $\Arrows$ of arrows. Consistent regular trace languages~\cite{AALBERSBERG19881} are the ones recognized by commutation-closed finite automata. Zielonka's asynchronous automata~\cite{DBLP:conf/ershov/Zielonka89} are a model of a distributed implementation of a regular trace language. It differs from realizability of MPSTs and HMSCs in several ways: first, processes in this model synchronise by rendezvous (that may involve more than two processes), second processes may exchange additionnal information during rendezvous (not just the message label specified by the choreography), and third, the choreography is commutation-closed. As a consequence, all regular trace languages are implementable, but the "projection" operation (the "gossip") is highly non-trivial in this setting. Although quite
far from MPSTs at first sight, some old works on Mazurkiewicz traces could possibly shed some light on a few questions that arise from our work, and which, to the best of our knowledge, remain open, like the decidability of the complementability of a global type, or concise complementation procedures for non commutation-closed global types beyond commutation-determinism.

\bibliographystyle{plainurl}
\bibliography{refs}

\begin{thebibliography}{10}

\bibitem{AALBERSBERG19881}
IJsbrand~Jan Aalbersberg and Grzegorz Rozenberg.
\newblock Theory of traces.
\newblock {\em Theoretical Computer Science}, 60(1):1--82, 1988.
\newblock URL: \url{https://www.sciencedirect.com/science/article/pii/0304397588900515}, \href {https://doi.org/10.1016/0304-3975(88)90051-5} {\path{doi:10.1016/0304-3975(88)90051-5}}.

\bibitem{DBLP:journals/tcs/AlurEY05}
Rajeev Alur, Kousha Etessami, and Mihalis Yannakakis.
\newblock Realizability and verification of {MSC} graphs.
\newblock {\em Theor. Comput. Sci.}, 331(1):97--114, 2005.
\newblock URL: \url{https://doi.org/10.1016/j.tcs.2004.09.034}, \href {https://doi.org/10.1016/J.TCS.2004.09.034} {\path{doi:10.1016/J.TCS.2004.09.034}}.

\bibitem{visiblypushdown}
Rajeev Alur and P.~Madhusudan.
\newblock Visibly pushdown languages.
\newblock In {\em Proceedings of the Thirty-Sixth Annual ACM Symposium on Theory of Computing}, STOC '04, page 202–211, New York, NY, USA, 2004. Association for Computing Machinery.
\newblock \href {https://doi.org/10.1145/1007352.1007390} {\path{doi:10.1145/1007352.1007390}}.

\bibitem{cartesian-abstraction}
Thomas Ball, Andreas Podelski, and Sriram~K. Rajamani.
\newblock Boolean and cartesian abstraction for model checking c programs.
\newblock In {\em Proceedings of the 7th International Conference on Tools and Algorithms for the Construction and Analysis of Systems}, TACAS 2001, page 268–283, Berlin, Heidelberg, 2001. Springer-Verlag.

\bibitem{10.1145/3678232.3678245}
Franco Barbanera, Mariangiola Dezani-Ciancaglini, and Ugo de'Liguoro.
\newblock Un-projectable global types for multiparty sessions.
\newblock In {\em Proceedings of the 26th International Symposium on Principles and Practice of Declarative Programming}, PPDP '24, New York, NY, USA, 2024. Association for Computing Machinery.
\newblock \href {https://doi.org/10.1145/3678232.3678245} {\path{doi:10.1145/3678232.3678245}}.

\bibitem{DBLP:conf/apn/BolligFG25}
Benedikt Bollig, Marie Fortin, and Paul Gastin.
\newblock High-level message sequence charts: Satisfiability and realizability revisited.
\newblock In Elvio~Gilberto Amparore and Lukasz Mikulski, editors, {\em Application and Theory of Petri Nets and Concurrency - 46th International Conference, {PETRI} {NETS} 2025, Paris, France, June 22-27, 2025, Proceedings}, volume 15714 of {\em Lecture Notes in Computer Science}, pages 109--129. Springer, 2025.
\newblock \href {https://doi.org/10.1007/978-3-031-94634-9\_6} {\path{doi:10.1007/978-3-031-94634-9\_6}}.

\bibitem{BrandZafiropulo}
Daniel Brand and Pitro Zafiropulo.
\newblock On communicating finite-state machines.
\newblock {\em J. ACM}, 30(2):323--342, apr 1983.
\newblock \href {https://doi.org/10.1145/322374.322380} {\path{doi:10.1145/322374.322380}}.

\bibitem{Buechi62}
J.R. Buechi.
\newblock On a decision method in restricted second-order arithmetic.
\newblock {\em Proc. Internat. Congr. Logic, Methodology and Philosophy of Science}, 1:1--12, 1962.

\bibitem{Coppo2015}
Mario Coppo, Mariangiola Dezani-Ciancaglini, Luca Padovani, and Nobuko Yoshida.
\newblock {\em A Gentle Introduction to Multiparty Asynchronous Session Types}, pages 146--178.
\newblock Springer International Publishing, Cham, 2015.
\newblock \href {https://doi.org/10.1007/978-3-319-18941-3_4} {\path{doi:10.1007/978-3-319-18941-3_4}}.

\bibitem{DBLP:conf/esop/DenielouY12}
Pierre{-}Malo Deni{\'{e}}lou and Nobuko Yoshida.
\newblock Multiparty session types meet communicating automata.
\newblock In {\em Programming Languages and Systems - 21st European Symposium on Programming, {ESOP} 2012, Held as Part of the European Joint Conferences on Theory and Practice of Software, {ETAPS} 2012, Tallinn, Estonia, March 24 - April 1, 2012. Proceedings}, pages 194--213, 2012.
\newblock \href {https://doi.org/10.1007/978-3-642-28869-2\_10} {\path{doi:10.1007/978-3-642-28869-2\_10}}.

\bibitem{DBLP:books/ws/95/DR1995}
Volker Diekert and Grzegorz Rozenberg, editors.
\newblock {\em The Book of Traces}.
\newblock World Scientific, 1995.
\newblock \href {https://doi.org/10.1142/2563} {\path{doi:10.1142/2563}}.

\bibitem{germerie-phd}
Loic {Germerie Guizouarn}.
\newblock {\em Communicating automata and quasi-synchronous communications}.
\newblock PhD thesis, Universit{\'e} C{\^o}te d'Azur, 2023.

\bibitem{DBLP:journals/jlap/GuancialeT19}
Roberto Guanciale and Emilio Tuosto.
\newblock Realisability of pomsets.
\newblock {\em J. Log. Algebraic Methods Program.}, 108:69--89, 2019.
\newblock URL: \url{https://doi.org/10.1016/j.jlamp.2019.06.003}, \href {https://doi.org/10.1016/J.JLAMP.2019.06.003} {\path{doi:10.1016/J.JLAMP.2019.06.003}}.

\bibitem{HondaYC08}
Kohei Honda, Nobuko Yoshida, and Marco Carbone.
\newblock Multiparty asynchronous session types.
\newblock In George~C. Necula and Philip Wadler, editors, {\em Proceedings of the 35th {ACM} {SIGPLAN-SIGACT} Symposium on Principles of Programming Languages, {POPL} 2008, San Francisco, California, USA, January 7-12, 2008}, pages 273--284. {ACM}, 2008.
\newblock \href {https://doi.org/10.1145/1328438.1328472} {\path{doi:10.1145/1328438.1328472}}.

\bibitem{10.1145/2873052}
Hans H\"{u}ttel, Ivan Lanese, Vasco~T. Vasconcelos, Lu\'{\i}s Caires, Marco Carbone, Pierre-Malo Deni\'{e}lou, Dimitris Mostrous, Luca Padovani, Ant\'{o}nio Ravara, Emilio Tuosto, Hugo~Torres Vieira, and Gianluigi Zavattaro.
\newblock Foundations of session types and behavioural contracts.
\newblock {\em ACM Comput. Surv.}, 49(1), April 2016.
\newblock \href {https://doi.org/10.1145/2873052} {\path{doi:10.1145/2873052}}.

\bibitem{DBLP:conf/cav/LiSWZ23}
Elaine Li, Felix Stutz, Thomas Wies, and Damien Zufferey.
\newblock Complete multiparty session type projection with automata.
\newblock In Constantin Enea and Akash Lal, editors, {\em Computer Aided Verification - 35th International Conference, {CAV} 2023, Paris, France, July 17-22, 2023, Proceedings, Part {III}}, volume 13966 of {\em Lecture Notes in Computer Science}, pages 350--373. Springer, 2023.
\newblock \href {https://doi.org/10.1007/978-3-031-37709-9\_17} {\path{doi:10.1007/978-3-031-37709-9\_17}}.

\bibitem{DBLP:journals/tcs/Lohrey03}
Markus Lohrey.
\newblock Realizability of high-level message sequence charts: closing the gaps.
\newblock {\em Theor. Comput. Sci.}, 309(1-3):529--554, 2003.
\newblock URL: \url{https://doi.org/10.1016/j.tcs.2003.08.002}, \href {https://doi.org/10.1016/J.TCS.2003.08.002} {\path{doi:10.1016/J.TCS.2003.08.002}}.

\bibitem{DBLP:journals/pacmpl/ScalasY19}
Alceste Scalas and Nobuko Yoshida.
\newblock Less is more: multiparty session types revisited.
\newblock {\em Proc. {ACM} Program. Lang.}, 3({POPL}):30:1--30:29, 2019.
\newblock \href {https://doi.org/10.1145/3290343} {\path{doi:10.1145/3290343}}.

\bibitem{DBLP:conf/ecoop/Stutz23}
Felix Stutz.
\newblock Asynchronous multiparty session type implementability is decidable - lessons learned from message sequence charts.
\newblock In Karim Ali and Guido Salvaneschi, editors, {\em 37th European Conference on Object-Oriented Programming, {ECOOP} 2023, July 17-21, 2023, Seattle, Washington, United States}, volume 263 of {\em LIPIcs}, pages 32:1--32:31. Schloss Dagstuhl - Leibniz-Zentrum f{\"{u}}r Informatik, 2023.
\newblock URL: \url{https://doi.org/10.4230/LIPIcs.ECOOP.2023.32}, \href {https://doi.org/10.4230/LIPICS.ECOOP.2023.32} {\path{doi:10.4230/LIPICS.ECOOP.2023.32}}.

\bibitem{stutz-phd}
Felix Stutz.
\newblock {\em Implementability of Asynchronous Communication Protocols - The Power of Choice}.
\newblock PhD thesis, Kaiserslautern University of Technology, Germany, 2024.
\newblock URL: \url{https://kluedo.ub.rptu.de/frontdoor/index/index/docId/8077}.

\bibitem{DBLP:conf/concur/VasconcelosH93}
Vasco~Thudichum Vasconcelos and Kohei Honda.
\newblock Principal typing schemes in a polyadic pi-calculus.
\newblock In {\em {CONCUR} '93, 4th International Conference on Concurrency Theory, Hildesheim, Germany, August 23-26, 1993, Proceedings}, pages 524--538, 1993.
\newblock \href {https://doi.org/10.1007/3-540-57208-2\_36} {\path{doi:10.1007/3-540-57208-2\_36}}.

\bibitem{villard-phd}
Jules Villard.
\newblock {\em Heaps and Hops}.
\newblock PhD thesis, École Normale Supérieure de Cachan, 2011.

\bibitem{DBLP:conf/icdcit/YoshidaG20}
Nobuko Yoshida and Lorenzo Gheri.
\newblock A very gentle introduction to multiparty session types.
\newblock In Dang~Van Hung and Meenakshi D'Souza, editors, {\em Distributed Computing and Internet Technology - 16th International Conference, {ICDCIT} 2020, Bhubaneswar, India, January 9-12, 2020, Proceedings}, volume 11969 of {\em Lecture Notes in Computer Science}, pages 73--93. Springer, 2020.
\newblock \href {https://doi.org/10.1007/978-3-030-36987-3\_5} {\path{doi:10.1007/978-3-030-36987-3\_5}}.

\bibitem{DBLP:conf/ershov/Zielonka89}
Wieslaw Zielonka.
\newblock Safe executions of recognizable trace languages by asynchronous automata.
\newblock In Albert~R. Meyer and Michael~A. Taitslin, editors, {\em Logic at Botik '89, Symposium on Logical Foundations of Computer Science, Pereslav-Zalessky, USSR, July 3-8, 1989, Proceedings}, volume 363 of {\em Lecture Notes in Computer Science}, pages 278--289. Springer, 1989.
\newblock \href {https://doi.org/10.1007/3-540-51237-3\_22} {\path{doi:10.1007/3-540-51237-3\_22}}.

\end{thebibliography}

\appendix

\section{Standard notions on automata}
\label{app:automata-standards}
A non-deterministic finite automaton (NFA) is a tuple $\A=(Q,\Sigma,\delta,q_0,F)$, where $Q$ is a finite set of states, $\Sigma$ is a finite alphabet, $\delta: Q\times(\Sigma\cup \varepsilon)\times Q$ is the transition relation, $q_0\in Q$ is the initial state, and $F\subseteq Q$ is the set of accepting states. We write $\delta^*(s,w)$ to denote the set of states $s'$ that are reachable from $s$ following a path labeled with $w$. The language accepted by $\A$, denoted $\languageofnfa{\A}$, is the set of words $w\in\Sigma^*$ such that $\delta^*(q_0,w)\cap F\neq\emptyset$..

A deterministic finite automaton (DFA) is a special case of an NFA where the transition relation $\delta$ is a partial function
$\delta: Q\times\Sigma\to Q$; it is complete if the function is
total. Every DFA $\A=(Q,\Sigma,\delta,q_0,F)$ 
accepts the same language as the complete DFA
$\A'=(Q\cup\{\bot\},\Sigma,\delta',q_0,F)$ where 
$\delta'(q,a)=\bot$ whenever $\delta(q,a)$ is undefined, and $\delta'(\bot,a)=\bot$.

To every NFA $\A=(Q,\Sigma,\delta,q_0,F)$, one can associate a DFA $\detof{\A}\eqdef(Q',\Sigma,\delta',q_0',F')$ where $Q'=2^Q$, $q_0'=\{q_0\}$, $F'$ is the set of subsets of $Q$ that contain at least one accepting state, and $\delta'$ is defined such that for all $S\in Q'$ and $a\in\Sigma$, $\delta'(S,a)=\bigcup\{\delta^*(s,a)\mid s\in S\}$.

Given two NFAs $\A_1=(Q_1,\Sigma,\delta_1,q_{0,1},F_1)$ and $\A_2=(Q_2,\Sigma,\delta_2,q_{0,2},F_2)$, we define the product NFA 
$$\A_1\otimes\A_2~\eqdef~(Q_1\times Q_2,\Sigma,\delta,(q_{0,1}, q_{0,2}),F_1\times F_2)$$ 

where the transition relation $\delta$ is defined as follows: 

\begin{itemize}
    \item for all $(s_1,s_2)\in Q_1\times Q_2$, $a\in\Sigma$, and $(s'_1,s'_2)\in Q_1\times Q_2$, we have $((s_1,s_2),a,(s'_1,s'_2))\in\delta$ if and only if $(s_1,a,s'_1)\in\delta_1$ and $(s_2,a,s'_2)\in\delta_2$, and
    \item for all $(s_1,s_2)\in Q_1\times Q_2$ and $(s'_1,s'_2)\in Q_1\times Q_2$, we have $((s_1,s_2),\varepsilon,(s'_1,s'_2))\in\delta$ if and only if $(s_1,\varepsilon,s'_1)\in\delta_1$ or $(s_2,\varepsilon,s'_2)\in\delta_2$.
\end{itemize}
It holds that $\languageofnfa{\A_1\otimes\A_2}=\languageofnfa{\A_1}\cap \languageofnfa{\A_2}$.

Finally, if $\A=(Q,\Sigma,\delta,q_0,F)$ is a DFA, we can define its dual DFA $\dualdfaof{\A}\eqdef(Q,\Sigma,\delta,q_0,Q\setminus F)$. In this case, if $\A$ is complete, we have $\languageofnfa{\dualdfaof{\A}}=\Sigma^*\setminus\languageofnfa{\A}$.

\section{Proof of Proposition~\ref{prop:deadlock-free-as-a-property-on-mscs-for-p2p-and-synch}}
\label{app:deadlock-free-as-a-property-on-mscs-for-p2p-and-synch}
\propdeadlockfreeasapropertyonmscsforppandsynch*

\begin{proof}
    We show each implication separately.
    \begin{description}
        \item[$\Rightarrow$:]
        if $\cfsms$ is deadlock-free: $\executionsof{\acceptcompletion{\cfsms}}{\acommunicationmodel}\subseteq\prefixclosureof{\executionsof{\cfsms}{\acommunicationmodel}}$, 
        then 
        $\msclanguageof{\acceptcompletion{\cfsms}}{\acommunicationmodel}\subseteq
        \prefixclosureof{\msclanguageof{\cfsms}{\acommunicationmodel}}$.
        For this implication, $\acommunicationmodel$ can be any communication model.
        Let $\msc\in\msclanguageof{\acceptcompletion{\cfsms}}{\acommunicationmodel}$, 
       we show that $\msc\in\prefixclosureof{\msclanguageof{\cfsms}{\acommunicationmodel}}$.
        By definition, there is an execution $e\in\executionsof{\acceptcompletion{\cfsms}}{\acommunicationmodel}$ such that $\msc=\mscof{e}$.
        By hypothesis, there is a completion $e' \in\executionsof{\cfsms}{\acommunicationmodel}$ of $e$.
        By definition, $\mscof{e'}\in\msclanguageof{\acceptcompletion{\cfsms}}{\acommunicationmodel}$,
        so $\msc\in\prefixclosureof{\msclanguageof{\cfsms}{\acommunicationmodel}}$.

        \item[$\Leftarrow$:] for  $\acommunicationmodel=\ppmodel$:
        if  
        $\msclanguageof{\acceptcompletion{\cfsms}}{\acommunicationmodel}\subseteq
        \prefixclosureof{\msclanguageof{\cfsms}{\acommunicationmodel}}$,
        then $\cfsms$ is deadlock-free.
        Let $e\in\executionsof{\acceptcompletion{\cfsms}}{\acommunicationmodel}$ be an execution,
        wez show that $e$ has a completion in $\executionsof{\cfsms}{\acommunicationmodel}$.
        By definition, $\mscof{e}\in\msclanguageof{\acceptcompletion{\cfsms}}{\acommunicationmodel}$.
        By hypothesis, there is a MSC $\msc'\in\msclanguageof{\cfsms}{\acommunicationmodel}$
        such that $\mscof{e}\prefixordermsc\msc'$. By definition of $\prefixordermsc$ on MSCs,
        there are two executions $e_1,e'$ such that 
        $e_1\prefixorder e'$, $\mscof{e'}=\msc'$, 
        and $\mscof{e_1}=\mscof{e}$.
        Let $<$ be the binary relation on $\eventsof{M'}$ 
        defined as $< \eqdef \torderstrictof{e}\cup <_1\cup <_2$, where:
        \begin{itemize}
            \item $\event <_1<\event'$ if $\event\not\in\eventsof{M}$,
            $\event'\not\in\eventsof{M}$ and $\event <_{e'} \event'$;
            \item $\event <_2\event'$ if $\event\in\eventsof{M}$ and $\event'\not\in\eventsof{M}$.
        \end{itemize}
        Note that $<_1$ and $<_2$ are transitive. We claim that $<$
        is also transitive. Indeed, 
        $$
        \begin{array}{rcl}
           \torderof{e}\cdot <_1 & = & <_1\\
            <_1\cdot <_2 & = & <_1 \\
            \torderof{e}\cdot<_2 & = & \emptyset \\
            (<_1\cup <_2)\cdot \torderof{e} & = & \emptyset \\
            (\torderof{e}\cup <_2)\cdot <_1 & = & \emptyset \\
        \end{array}
        $$
        Moreover, $<$ is irreflexive, because $\torderof{e}$ is irreflexive and $<_1$ and $<_2$ are irreflexive.
        Let $\leq$ denote the reflexive closure of $<$.
        Then this is a partial order on $\eventsof{M'}$.
        By construction, $\leq$ is actually a total order on $\eventsof{M'}$.       
        Finally, we claim that $\leq$ is a linearisation of $M'$, i.e., $\porderof{M'}\subseteq \leq$.
        Indeed, assume that $\event\porderstrictof{M'}\event'$, we show that $\event< \event'$:
        \begin{itemize}
            \item if $\event\in\eventsof{M}$ and $\event'\in\eventsof{M}$,
            then $\event\porderstrictof{M} \event'$ (as $M$ is a prefix of $M'$), hence $\event\torderof{e} \event'$ 
            (because $e$ is a linearisation of $M$),
            and finally $\event < \event'$ by definition of $<$.
            
            \item if $\event\not\in\eventsof{M}$ and $\event'\not\in\eventsof{M}$,
            then $\event\torderof{e'} \event'$ (because $e'$ is a linearisation of $M'$), 
            therefore $\event <_1 \event'$ by definition of $<_1$, and finally $\event < \event'$ by definition of $<$.

            \item if $\event\in\eventsof{M}$ and $\event'\not\in\eventsof{M'}$,
            then $\event<_2 \event'$ by definition of $<_2$, therefore $\event < \event'$ by definition of $<$.
        \end{itemize}
        Therefore, $\leq$ is a linearisation of $M'$. Let $e''$ be the execution associated to $\leq$.
        Then
        \begin{itemize}
            \item $e\prefixorder e''$ by definition of $\leq$;
            \item $\mscof{e''}=M'$ as $\leq$ is a linearisation of $M'$;
            \item $e''\in\executionsof{\cfsms}{\acommunicationmodel}$ because $e''$ is a linearisation of $M'$, $M'\in\msclanguageof{\cfsms}{\acommunicationmodel}$,
            and $\ppmodel$ is causally closed (Lemma~\ref{lem:pp-is-causally-closed}).
        \end{itemize}
        Altogether, we have shown that $e$ has a completion in $\executionsof{\cfsms}{\acommunicationmodel}$, which ends the proof of 
        the reverse implication.

        \item[$\Leftarrow$:]   for $\acommunicationmodel=\synchmodel$:
            the proof is similar to previous case. We define the linearisation $\leq$ of $M'$ 
            in the exactly  same way, but we cannot use the fact that $\acommunicationmodel$ is causally closed.
            Instead, we use the fact that both $<_e$ and  $<_1$ are synchronous linearisations,
            and therefore $<$ is a synchronous linearisation as the concatenation of two synchronous linearisations.

    \end{description}
\end{proof}

\section{Proof of Lemma~\ref{lem:product-of-gt}}
\label{app:product-of-gt}
\lemproductofgt*
\begin{proof}
    We reason by double inclusion.
    \begin{itemize}
        \item[$\Rightarrow$] Assume that $M\in\existentialmsclanguageof{\gt_1\otimes\gt_2}$.
        By definition, there is a word $w$ such that $M=\mscof{w}$, and $w\in\languageofnfa{\gt_1\otimes\gt_2}$.
        By the definition of the product of global types, $\languageofnfa{\gt_1\otimes\gt_2}=\languageofnfa{\gt_1}\cap\languageofnfa{\gt_2}$.
        In particular, $w\in\languageofnfa{\gt_1}$, thus $\mscof{w}\in\existentialmsclanguageof{\gt_1}$.
        Similarly, $w\in\languageofnfa{\gt_2}$, thus $\mscof{w}\in\existentialmsclanguageof{\gt_2}$.
        Since $\gt_2$ is commutation-closed, $\mscof{w}\in\msclanguageofcc{\gt_2}$.
        Therefore, $\mscof{w}\in\existentialmsclanguageof{\gt_1}\cap\msclanguageofcc{\gt_2}$.
        \item[$\Leftarrow$] Conversely, assume that $M\in\existentialmsclanguageof{\gt_1}\cap\msclanguageofcc{\gt_2}$.
        In particular, $M\in\existentialmsclanguageof{\gt_1}$, which means that there is a word $w_1$ such that 
        $M=\mscof{w_1}$ and $w_1\in\languageofnfa{\gt_1}$.
        Since $M\in\msclanguageofcc{\gt_2}$, there is a word $w_2$ such that $M=\mscof{w_2}$ and $w_2\in\languageofnfa{\gt_2}$.
        Since $\gt_2$ is commutation-closed and $\mscof{w_1}=\mscof{w_2}$, $w_1\in \languageofnfa{\gt_2}$.
        Therefore, $w_1\in\languageofnfa{\gt_1}\cap\languageofnfa{\gt_2}=\languageofnfa{\gt_1\otimes\gt_2}$, 
        which means that $M\in\existentialmsclanguageof{\gt_1\otimes\gt_2}$.
    \end{itemize}
\end{proof}

\section{Proof of Theorem~\ref{thm:main-theorem-realisability}}
\label{app:main-theorem-realisability}
\maintheoremrealisability*


\end{document}